\documentclass[10pt,english]{IEEEtran}
\usepackage[T1]{fontenc}
\usepackage[latin9]{inputenc}
\usepackage{array}
\usepackage{verbatim}
\usepackage{amsmath}
\usepackage{amssymb}
\usepackage{graphicx}
\usepackage{esint}

\makeatletter

\providecommand{\tabularnewline}{\\}


\makeatother

\usepackage{amsthm}\usepackage{dsfont}\usepackage{array}\usepackage{mathrsfs}\usepackage{cite}\usepackage{comment}\usepackage{mathrsfs}\usepackage{hyperref}

\makeatother

\usepackage{babel}

\makeatother

\usepackage{babel}
\begin{document}
\bibliographystyle{IEEEtran}

\title{Channel Capacity under \\Sub-Nyquist Nonuniform Sampling}

\author{Yuxin Chen, Andrea J. Goldsmith, and Yonina C. Eldar%
\thanks{Y. Chen is with the Department of Electrical Engineering, Stanford
University, Stanford, CA 94305, USA (e-mail: yxchen@stanford.edu). 

A. J. Goldsmith is with the Department of Electrical Engineering,
Stanford University, Stanford, CA 94305, USA (e-mail: andrea@ee.stanford.edu). 

Y. C. Eldar is with the Department of Electrical Engineering, Technion,
Israel Institute of Technology Haifa, Israel 32000 (e-mail: yonina@ee.technion.ac.il). 

This work was supported in part by the NSF Center for Science of Information,
the Interconnect Focus Center of the Semiconductor Research Corporation,
and BSF Transformative Science Grant 2010505. This paper was presented
in part at the 2012 IEEE International Symposium on Information Theory
\cite{ChenNonuniform2012}. 

Copyright (c) 2012
IEEE. Personal use of this material is permitted.  However, permission
to use this material for any other purposes must be obtained from the
IEEE by sending a request to pubs-permissions@ieee.org.%
}}
\maketitle
\begin{abstract}
This paper investigates the effect of sub-Nyquist sampling upon the
capacity of an analog channel. The channel is assumed to be a linear
time-invariant Gaussian channel, where perfect channel knowledge is
available at both the transmitter and the receiver. We consider a
general class of right-invertible time-preserving sampling methods
which include irregular nonuniform sampling, and characterize in closed
form the channel capacity achievable by this class of sampling methods,
under a sampling rate and power constraint. Our results indicate that
the optimal sampling structures extract out the set of frequencies
that exhibits the highest signal-to-noise ratio among all spectral
sets of measure equal to the sampling rate. This can be attained through
filterbank sampling with uniform sampling at each branch with possibly
different rates, or through a single branch of modulation and filtering
followed by uniform sampling. These results reveal that for a large
class of channels, employing irregular nonuniform sampling sets, while
typically complicated to realize, does not provide capacity gain over
uniform sampling sets with appropriate preprocessing. Our findings
demonstrate that aliasing or scrambling of spectral components does
not provide capacity gain, which is in contrast to the benefits obtained
from random mixing in spectrum-blind compressive sampling schemes.\end{abstract}
\begin{IEEEkeywords}
nonuniform sampling, irregular sampling, sampled analog channels,
sub-Nyquist sampling, channel capacity, Beurling density, time-preserving
sampling systems
\end{IEEEkeywords}
\theoremstyle{plain}\newtheorem{lem}{\textbf{Lemma}}\newtheorem{theorem}{\textbf{Theorem}}\newtheorem{corollary}{\textbf{Corollary}}\newtheorem{prop}{\textbf{Proposition}}\newtheorem{fct}{\textbf{Fact}}\newtheorem{remark}{\textbf{Remark}}

\theoremstyle{definition}\newtheorem{definition}{\textbf{Definition}}\newtheorem{example}{\textbf{Example}}

\section{Introduction}

The capacity of analog Gaussian channels and their capacity-achieving
transmission strategies were pioneered by Shannon \cite{Sha48}, which
has provided fundamental insights for modern communication system
design. Shannon's work focused on capacity of analog channels sampled
at or above twice the channel bandwidth. However, these results do
not explicitly account for sub-Nyquist sampling rate constraints that
may be imposed by hardware limitations. This motivates exploration
of the effects of sub-Nyquist sampling upon the capacity of an analog
Gaussian channel, and the fundamental capacity limits that result
when considering general sampling methods that include irregular nonuniform
sampling.

\subsection{Related Work and Motivation}

Shannon introduced and derived the information theoretic metric of
channel capacity for time-invariant analog waveform channels\cite{Sha48},
which established the optimality of water-filling power allocation
based on signal-to-noise ratio (SNR) over the spectral domain \cite{Gallager68,HirtMassey1988}.
A key idea in determining the analog channel capacity is to convert
the continuous-time channel into a set of parallel discrete-time channels
based on the Shannon-Nyquist sampling theorem \cite{Bello1963}. This
paradigm was employed, for example, by Medard et. al. to bound the
maximum mutual information in time-varying channels \cite{Med2000,MedGal2002},
and was used by Forney\emph{ }et. al. to investigate coding and modulation
for Gaussian channels \cite{ForUng1998}. Most of these results focus
on the analog channel capacity commensurate with uniform sampling
at or above the Nyquist rate associated with the channel bandwidth.
There is another line of work that characterizes the effects upon
information rates of oversampling with quantization \cite{Gil1993,Sha1994}.
In practice, however, hardware and power limitations may preclude
sampling at the Nyquist rate for a wideband communication system. 

More general irregular sampling methods beyond pointwise uniform sampling
have been extensively studied in the sampling literature, e.g. \cite{AldGro2001,Yeung2001,EldOpp2000}.
One example is sampling on non-periodic quasi-crystal sets, which
has been shown to be stable for bandlimited signals \cite{Matei2008quasicrystals,lev2012riesz}.
These sampling approaches are of interest in some realistic situations
where signals are only sampled at a nonuniformly spaced sampling set
due to constraints imposed by data acquisition devices. Many sophisticated
reconstruction algorithms have been developed for the class of bandlimited
signals or, more generally, the class of shift-invariant signals \cite{Grochenig1992,FeichtingerGrochengig1994,AldGro2001}.
For all these nonuniform sampling methods, the Nyquist sampling rate
is necessary for perfect recovery of bandlimited signals \cite{Beurling_1,Jerri1977,AldGro2001}.

However, for signals with certain structure, the Nyquist sampling
rate may exceed that required for perfect signal reconstruction from
the samples \cite{ButSte1992,MisEld2011}. For example, consider multiband
signals, whose spectral contents reside within several subbands over
a wide spectrum. If the spectral support is known, then the necessary
sampling rate for the multiband signals is their spectral occupancy,
termed the \textit{Landau rate} \cite{Landau1967}. Such signals admit
perfect recovery when sampled at rates approaching the Landau rate,
provided that the sampling sets are appropriately chosen (e.g. \cite{HerleyWong1999,VenkataramaniBresler2001}).
One type of sampling mechanism that can reconstruct multiband signals
sampled at the Landau rate is a filter bank followed by sampling,
studied in \cite{LinVai1998,UnsZer1998,MisEld2009}. Inspired by recent
``compressive sensing'' \cite{CandRomTao06,Don2006,eldar2012compressed}
ideas, spectrum-blind sub-Nyquist sampling for multiband signals with
random modulation has been developed \cite{MisEld2010Theory2Practice}
as well.

Although sub-Nyquist nonuniform sampling methods have been extensively
explored in the sampling literature, they are typically investigated
either under a noiseless setting, or based on statistical reconstruction
measures (e.g. mean squared error (MSE)) instead of information theoretic
measures. Gastpar\emph{ }et al \cite{GastparBresler2000} studied
the necessary sampling density for nonuniform sampling. Recent work
by Wu and Verdu \cite{wu2012optimal} investigated the tradeoff between
the number of samples and the reconstruction fidelity through information
theoretic measures. However, these works did not explicitly consider
the capacity metric for an analog channel. The most relevant capacity
result to our work was by Berger et al\emph{ }\cite{BergerThesis},
who related MSE-based optimal sampling with capacity for several special
types of channels. But they did not derive the sub-Nyquist sampled
channel capacity for more general channels, nor did they consider
nonuniformly spaced sampling. Our recent work \cite{ChenGolEld2010}
established a new framework that characterizes sampled capacity for
a broad class of sampling methods, including filter and modulation
bank sampling \cite{Papoulis1977,MisEld2010Theory2Practice,EldMic2009}.
For these sampling methods, we determined optimal sampling structures
based on capacity as a metric, illuminated intriguing connections
between MIMO channel capacity and capacity of undersampled channels,
as well as a new connection between capacity and MSE. However, this
prior work did not investigate analog channel capacity using more
general nonuniform sampling under a sub-Nyquist sampling rate constraint.

One interesting fact discovered in \cite{ChenGolEld2010} is the non-monotonicity
of capacity with sampling rate under filter- and modulation-bank sampling,
assuming an equal sampling rate per branch for a given number of branches.
This indicates that more sophisticated sampling techniques, adaptive
to the channel response and the sampling rate, are needed to maximize
capacity under sub-Nyquist rate constraints, including both uniform
and nonuniform sampling. However, none of the aforementioned work
has investigated the question as to which sampling method can best
exploit channel structure, thereby maximizing sampled capacity under
a given sampling rate constraint. Although several classes of sampling
methods were shown in \cite{ChenGolEld2010} to have closed-form capacity
solutions, the capacity limits might not even exist for general sampling
methods. This raises the question as to whether there exists a capacity
upper bound over a general class of sub-Nyquist sampling systems beyond
the classes we discussed in \cite{ChenGolEld2010} and, if so, when
the bound is achievable. That is the question we investigate herein.

\subsection{Contributions and Organization}

Our main contribution is to derive the capacity of sub-Nyquist sampled
analog channels for a general class of right-invertible time-preserving
nonuniform sampling methods, under a sub-Nyquist sampling rate constraint.
The channel is assumed to be a linear time-invariant (LTI) Gaussian
channel, where perfect channel knowledge is available at both the
transmitter and the receiver. The class of sampling systems we consider
subsumes sampling structures employing irregular nonuniform sampling
grids. 

We first develop in Theorem \ref{thm:GeneralSampledCapacity} an upper
bound on the sampled channel capacity, which corresponds to the capacity
of a channel whose spectral occupancy is no larger than the sampling
rate $f_{s}$. As a key step in the analysis framework for Theorem
\ref{thm:GeneralSampledCapacity}, we characterize in closed form
the sampled channel capacity for any specific periodic sampling system,
formally defined in Definition \ref{definition-PeriodicSampling}
(Lemma \ref{thm:PeriodicSampledCapacity}). We demonstrate that this
fundamental capacity limit can be achieved by filterbank sampling
with \emph{varied} sampling rates at different branches, or by a single
branch of modulation and filtering followed by a uniform sampling
set (Theorems \ref{thm:OptimalSamplingGeneralSampledCapacity}-\ref{theorem:AchievabilityModulation}).
In particular, the optimal sampler extracts out a spectral set of
size $f_{s}$ with the highest SNR, and suppresses all signal and
noise components outside this spectral set. 

Our results indicate that irregular nonuniform sampling sets, while
typically complicated to realize in hardware, do not increase channel
capacity relative to analog preprocessing with regular uniform sampling
sets. We also show that when optimal filterbank or modulation sampling
is employed, a mild perturbation of the optimal sampling grid does
not change the capacity. Our findings demonstrate that aliasing or
scrambling of spectral contents does not provide capacity gain. This
is in contrast to the benefits obtained from random mixing of frequency
components in many sub-Nyquist sampling schemes with unknown signal
support (e.g. \cite{MisEld2010Theory2Practice}). 

The main innovation of this paper compared to our previous sub-sampled
channel capacity results in \cite{ChenGolEld2010} is as follows.
\begin{itemize}
\item While \cite{ChenGolEld2010} characterizes the capacity under two
types of sampling mechanisms that are widely used in practice (filter-bank
sampling and modulation-bank sampling), the focus of this paper is
instead to develop capacity results over a much more general class
of sampling methods. 
\item While all results of \cite{ChenGolEld2010} hold only under uniform
sampling, our analysis herein accommodates irregular nonuniform sampling.
Our results in turn corroborate the optimality of uniform sampling
in achieving sampled capacity, assuming that the analog channel output
is appropriately pre-processed.
\end{itemize}
The remainder of the paper is organized as follows. In Section \ref{sec:Sampled-Channel-Capacity},
we introduce our system model of sampled analog channels, and provide
formal definitions of time-preserving systems, sampling rates, and
sampled channel capacity. We then develop, in Section \ref{sub:CapacityUpperBound},
an upper bound on the sampled channel capacity ranging over all right-invertible
time-preserving sampling methods, along with an approximate analysis
highlighting insights into the result. The achievability of this upper
bound is derived in Section \ref{sub:Achievability}. The proof of
Theorem \ref{thm:GeneralSampledCapacity} is provided in Appendix
\ref{sec:Proof-Architecture-of-Theorem-General-Capacity}. The implications
of our main results are summarized in Section \ref{sec:Discussion}. 

Before continuing, we introduce some notation that will be used throughout.
We use $\mu\left(\cdot\right)$ to represent the Lebesgue measure,
and denote by $\mathcal{F}$ and $\mathcal{F}^{-1}$ the Fourier and
inverse Fourier transform, respectively. We let $[x]^{+}\overset{\Delta}{=}\max\left(x,0\right)$,
and use $\mathrm{card}\left(A\right)$ to denote the cardinality of
a set $A$. These and other notation in the paper are summarized in
Table \ref{tab:Summary-of-Notation-Nonuniform}.

\begin{table}
\caption{\label{tab:Summary-of-Notation-Nonuniform}Summary of Notations and
Parameters}

\centering{}%
\begin{tabular}{>{\raggedright}p{2cm}>{\raggedright}p{2.3in}}
$\mu(\cdot)$ & Lebesgue measure\tabularnewline
$\Lambda$ & sampling set $\left\{ t_{n}:n\in Z\right\} $\tabularnewline
$D^{+}(\Lambda),$ $D^{-}(\Lambda)$, $D(\Lambda)$ & upper, lower and uniform Beurling densities of $\Lambda$\tabularnewline
$\mathcal{L}_{2}\left(\Omega\right)$ & set of measurable functions $f$ supported on the set $\Omega$ such
that $\int\left|f\right|^{2}\mathrm{d}\mu<\infty$ \tabularnewline
$\mathbb{S}_{+}$ & set of positive semidefinite matrices\tabularnewline
$h(t)$,$H(f)$  & impulse response and frequency response of the LTI analog channel\tabularnewline
$s_{i}(t)$, $S_{i}(f)$  & impulse response and frequency response of the $i$th (post-modulation)
filter\tabularnewline
$p(t)$, $P(f)$  & impulse response and frequency response of the pre-modulation filter\tabularnewline
$\mathcal{S}_{\eta}(f),s_{\eta}(t)$ & power spectral density of the noise $\eta(t)$ and $s_{\eta}\left(t\right):=\mathcal{F}^{-1}\left(\sqrt{\mathcal{S}_{\eta}\left(f\right)}\right)$\tabularnewline
$f_{s}$, $T_{s}$ & aggregate sampling rate and the corresponding sampling interval ($T_{s}=1/f_{s}$)\tabularnewline
$q(t,\tau)$ & impulse response of the sampling system, i.e. the output seen at time
$t$ due to an impulse in the input at time $\tau$.\tabularnewline
$T_{q},f_{q}$ & period of the modulating sequence $q(t)$ such that $T_{q}=1/f_{q}$ \tabularnewline
$\mathcal{F}$, $\mathcal{F}^{-1}$ & Fourier transform and inverse Fourier transform\tabularnewline
\end{tabular}
\end{table}

\section{Sampled Channel Capacity \label{sec:Sampled-Channel-Capacity}}

\subsection{System Model}

We consider an analog waveform channel, which is modeled as an LTI
filter with impulse response $h(t)$ and frequency response $H(f)=\int_{-\infty}^{\infty}h(t)\exp(-j2\pi ft)\text{d}t$.
With $x(t)$ denoting the transmitted signal, the analog channel output
is given by
\begin{equation}
r(t)=h(t)*x(t)+\eta(t),\label{eq:ChannelModel}
\end{equation}
where the noise process $\eta(t)$ is assumed to be an additive stationary
zero-mean Gaussian process with power spectral density $\mathcal{S}_{\eta}\left(f\right)$.
We also define $s_{\eta}\left(t\right):=\mathcal{F}^{-1}\left(\sqrt{\mathcal{S}_{\eta}\left(f\right)}\right)$.
Unless otherwise specified, we assume throughout that \textit{perfect
channel state information} (i.e. the knowledge of both $H(f)$ and
$\mathcal{S}_{\eta}(f)$) is available at both the transmitter and
the receiver. 

The analog channel output $r(t)$ is passed through $M$ ($1\leq M\leq\infty$)
branches of linear preprocessing systems, each followed by a pointwise
sampler, as illustrated in Fig. \ref{fig:ProblemFormulation}. The
preprocessed output $y_{i}(t)$ at the $i$th branch is obtained by
applying a linear bounded operator $\mathcal{T}_{i}$ to the channel
output $r(t)$:
\begin{align}
y_{i}(t) & =\mathcal{T}_{i}\left(r(t)\right)=\int q_{i}\left(t,\tau\right)r\left(\tau\right)\mathrm{d}\tau,
\end{align}
where $q_{i}(t,\tau)$ denotes the impulse response of the time-varying
system represented by $\mathcal{T}_{i}$, i.e. the output seen at
time $t$ due to an impulse in the input at time $\tau$. Note that
the linear operator $\mathcal{T}_{i}$ can be time-varying, which
subsumes filtering and modulation as special cases. For example, a
modulation system $\mathcal{T}_{i}\left(x(t)\right)=p(t)x(t)$ for
some given modulation sequence $p(t)$ has an impulse response $q(t,\tau)=p(\tau)\delta\left(t-\tau\right)$.
A cascade combination of two systems $\mathcal{T}_{1}$ and $\mathcal{T}_{2}$
has an impulse response $q(t,\tau)=\int_{-\infty}^{\infty}q_{2}(t,\tau_{1})q_{1}(\tau_{1},\tau)\mathrm{d}\tau_{1}$,
with $q_{1}\left(\cdot,\cdot\right)$ and $q_{2}(\cdot,\cdot)$ denoting
respectively the impulse responses of $\mathcal{T}_{1}$ and $\mathcal{T}_{2}$
\cite{Molisch2011}. When an operator $\mathcal{T}$ is LTI, we use
$q(\tau):=q(t,t-\tau)$ as shorthand to represent its impulse response.

The pointwise sampler following the preprocessor can be uniform or
irregular \cite{AldGro2001}. Specifically, the preprocessed output
$y_{i}(t)$ (at the $i$th branch) is sampled at times $t_{i,n}\left(n\in\mathbb{Z}\right)$,
yielding a sample sequence $ $$y_{i}[n]=y_{i}\left(t_{i,n}\right)$.
Here, we define the \emph{sampling set} $\Lambda_{i}$ at the $i$th
branch as 
\begin{equation}
\Lambda_{i}:=\left\{ t_{i,n}\mid n\in\mathbb{Z}\right\} .
\end{equation}
In particular, if $t_{i,n}=nT_{i,s}$, then the sampling set at the
$i$th branch is said to be uniform with period $T_{i,s}$. 

\begin{figure}[htbp]
\begin{centering}
\textsf{\includegraphics[scale=0.35]{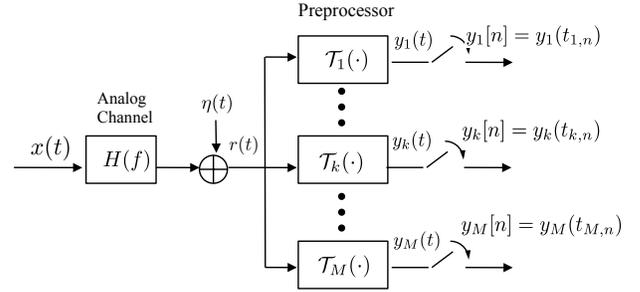}} 
\par\end{centering}

\caption{\label{fig:ProblemFormulation} The input $x(t)$ is passed through
the analog channel and contaminated by noise $\eta(t)$. The analog
channel output $r(t)$ is then passed through $M\text{ }(1\leq M\leq\infty)$
linear preprocessing system $\left\{ \mathcal{T}_{i}\mid1\leq i\leq M\right\} $.
At the $i$th branch, the preprocessed output $y_{i}(t)$ is sampled
on the sampling set $\Lambda_{i}=\left\{ t_{i,n}\mid n\in\mathbb{Z}\right\} $.}
\end{figure}

\subsection{Sampling Rate Definition}

Our metric of interest is the sampled channel capacity under a sampling
rate constraint. We first formally define sampling rate for general
nonuniform sampling mechanisms.

In general, the sampling set $\Lambda=\left\{ t_{n}\mid n\in\mathbb{Z}\right\} $
may be irregular and hence aperiodic, which calls for a generalized
definition of sampling rate. One notion commonly used in sampling
theory is the Beurling density introduced by Beurling \cite{Beurling_1}
and Landau \cite{Landau1967}, as defined below \cite{AldGro2001}.

\begin{definition}[{\bf Beurling Density}]For a sampling set $\Lambda=\left\{ t_{k}\mid k\in\mathbb{Z}\right\} $,
the upper and lower Beurling density are given respectively as
\begin{align*}
D^{+}\left(\Lambda\right) & =\lim_{r\rightarrow\infty}\sup_{z\in\mathbb{R}}\frac{\text{card}\left(\Lambda\cap\left[z,z+r\right]\right)}{r},\\
D^{-}\left(\Lambda\right) & =\lim_{r\rightarrow\infty}\inf_{z\in\mathbb{R}}\frac{\text{card}\left(\Lambda\cap\left[z,z+r\right]\right)}{r}.
\end{align*}
When $D^{+}\left(\Lambda\right)=D^{-}\left(\Lambda\right)$, the sampling
set $\Lambda$ is said to be of uniform Beurling density $D\left(\Lambda\right):=D^{-}\left(\Lambda\right)$.\end{definition}
When the sampling set is uniform with period $T_{s}$, the Beurling
density is $D(\Lambda)=1/T_{s}$, which coincides with the conventional
definition of sampling rate. The notion of Beurling density allows
the Shannon-Nyquist sampling theorem to be extended to nonuniform
sampling. Moreover, we will use Beurling density to define sampling
rate for a large class of sampling mechanisms with preprocessing. 

Under a nonuniform sampling set $\Lambda$, the set of exponential
functions $\left\{ \exp\left(j2\pi t_{n}f\right)\mid n\in\mathbb{Z}\right\} $
forms a \emph{non-harmonic} Fourier series \cite{Yeung2001}. Whether
the class of original signals are recoverable from the nonuniform
sampled sequence is determined by the completeness of the associated
non-harmonic set. In particular, when $\Lambda$ is uniform, the set
$\left\{ \exp\left(j2\pi t_{n}f\right)\mid n\in\mathbb{Z},t_{n}=n/f_{s}\right\} $
with $D(\Lambda)=f_{s}$ forms a Riesz basis \cite{Christensen2003}
of $\mathcal{L}_{2}(-f_{s}/2,f_{s}/2)$ by the Shannon-Nyquist sampling
theorem. For the class of sampling systems without preprocessing,
a fundamental rate limit necessary for perfect reconstruction of bandlimited
signals has been characterized by Landau using the definition of Beurling
density, as stated in the following theorem.

\begin{theorem}[\bf Landau Rate \cite{Landau1967}]\label{theorem:LandauRate}Consider
the set $\mathcal{B}_{\Omega}$ of all signals whose spectral contents
are supported on the frequency set $\Omega$. Suppose that pointwise
sampling without preprocessing is employed with a sampling set $\Lambda$.
If all signals $f(t)\in\mathcal{B}_{\Omega}$ can be uniquely determined
by the samples $\left\{ f(t_{n})\mid t_{n}\in\Lambda\right\} $, then
one must have $D^{-}(\Lambda)\geq\mu\left(\Omega\right)$. The value
$\mu\left(\Omega\right)$ is termed the Landau rate. \end{theorem}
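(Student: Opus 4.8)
The plan is to argue by contradiction. The uniqueness hypothesis says exactly that the sampling map $f\mapsto\left(f(t_{n})\right)_{n}$ is injective on $\mathcal{B}_{\Omega}$; equivalently, the exponential system $\left\{\exp\left(j2\pi t_{n}f\right)\mid t_{n}\in\Lambda\right\}$ is complete in $\mathcal{L}_{2}(\Omega)$. So it suffices to show that if $D^{-}(\Lambda)<\mu(\Omega)$, then there exists a \emph{nonzero} $g\in\mathcal{B}_{\Omega}$ vanishing at every point of $\Lambda$: such a $g$ yields two distinct signals $f$ and $f+g$ with identical samples, contradicting unique determination. To set this up I would fix $\epsilon>0$ with $D^{-}(\Lambda)<\mu(\Omega)-2\epsilon$; by the definition of the lower Beurling density there then exist arbitrarily long windows $I_{r}=\left[z_{r},z_{r}+r\right]$ with $\mathrm{card}\left(\Lambda\cap I_{r}\right)\leq\left(\mu(\Omega)-2\epsilon\right)r$.

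Next I would quantify the local degrees of freedom of $\mathcal{B}_{\Omega}$ on such a window. Let $P_{\Omega}=\mathcal{F}^{-1}\mathbf{1}_{\Omega}\mathcal{F}$ be the band-limiting projection onto $\mathcal{B}_{\Omega}$ and let $Q_{I_{r}}$ be time-limiting (multiplication by $\mathbf{1}_{I_{r}}$). The self-adjoint concentration operator $C_{r}=P_{\Omega}Q_{I_{r}}P_{\Omega}$ has spectrum in $[0,1]$, and by the Landau--Widom eigenvalue asymptotics the number of its eigenvalues exceeding $1/2$ is asymptotic to $\mu(\Omega)\,\mu(I_{r})=\mu(\Omega)r$; hence the span $V_{r}\subset\mathcal{B}_{\Omega}$ of the eigenfunctions with eigenvalue above $1/2$ satisfies $\dim V_{r}\geq\left(\mu(\Omega)-\epsilon\right)r$ for all large $r$. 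The constraints ``$f(t_{n})=0$ for $t_{n}\in\Lambda\cap I_{r}$'' are at most $\left(\mu(\Omega)-2\epsilon\right)r$ linear functionals on $V_{r}$, strictly fewer than $\dim V_{r}$, so a routine dimension count produces a unit-norm $g_{r}\in V_{r}\subset\mathcal{B}_{\Omega}$ that is concentrated on $I_{r}$ and vanishes at every sample point \emph{inside} the window.

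The main obstacle, and the heart of the matter, is to promote this family of locally null functions to a single global null function: each $g_{r}$ is only concentrated on $I_{r}$, so a priori it may be nonzero at the (few) sample points of $\Lambda$ outside $I_{r}$, and one must extract a nonzero limit. This is precisely where mere injectivity is delicate, since completeness of the exponential system is in general governed by a Beurling--Malliavin-type effective density that can exceed $D^{-}(\Lambda)$; the real content is that for the Beurling density the per-window sample deficit $\dim V_{r}-\mathrm{card}(\Lambda\cap I_{r})\geq\epsilon r\to\infty$ cannot be compensated by out-of-window samples. I would resolve this by comparing, window by window, the sampling quadratic form $f\mapsto\sum_{t_{n}\in I_{r}}|f(t_{n})|^{2}$ against the eigenvalue distribution of $C_{r}$, so that the growing deficit forces an escape of energy from the sampling constraints; then, since the $g_{r}$ are uniformly locally bounded entire functions of a common exponential type, a normal-families and diagonal-extraction argument yields a nonzero limit $g\in\mathcal{B}_{\Omega}$ with $g|_{\Lambda}=0$. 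I expect essentially all of the technical weight to lie in making this escape-of-energy quantitative and in verifying that the limit is nonzero, rather than in the elementary dimension inequality.
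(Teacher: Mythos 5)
The paper never proves this statement: Theorem \ref{theorem:LandauRate} is quoted from Landau's 1967 paper as a known result, so the only meaningful benchmark is Landau's own argument. The first half of your plan is sound and is in fact the skeleton of that argument: the concentration operator $C_{r}=P_{\Omega}Q_{I_{r}}P_{\Omega}$, the Landau--Widom count showing roughly $\mu(\Omega)r$ eigenvalues above $1/2$, and the dimension count producing a unit-norm $g_{r}$ in the span of the top eigenfunctions vanishing at all samples inside a low-density window.

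The gap is exactly the step you flag as the heart of the matter, and it is unfixable: no escape-of-energy or normal-families argument can promote the $g_{r}$ to a single nonzero $g\in\mathcal{B}_{\Omega}$ vanishing on all of $\Lambda$, because the theorem is \emph{false} under the hypothesis of mere unique determination. Two counterexamples: (i) if $\Lambda$ may cluster, take $\Lambda=\left\{ 1/n\mid n\geq1\right\} $; every $f\in\mathcal{B}_{\Omega}$ with $\Omega$ bounded is entire, so vanishing on a set with a finite accumulation point forces $f\equiv0$, hence $\Lambda$ is a uniqueness set although $D^{-}(\Lambda)=0$; (ii) even among separated sets, Beurling--Malliavin theory gives uniqueness sets of lower Beurling density zero: place points with spacing $1/d$ on the intervals $I_{k}=[2^{k},2^{k}+2^{k}/\sqrt{k}]$; since $\sum_{k}|I_{k}|^{2}/\mathrm{dist}(I_{k},0)^{2}=\sum_{k}1/k=\infty$ this system is substantial, so the Beurling--Malliavin density is at least $d$ and the exponentials $\left\{ \exp\left(j2\pi t_{n}f\right)\right\} $ are complete in $\mathcal{L}_{2}(\Omega)$ for any interval $\Omega$ with $\mu(\Omega)<d$, yet the huge gaps force $D^{-}(\Lambda)=0$. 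What Landau actually assumes is \emph{stable} sampling, $\left\Vert f\right\Vert ^{2}\leq C\sum_{n}\left|f(t_{n})\right|^{2}$ on $\mathcal{B}_{\Omega}$, together with a Plancherel--P\'olya/Bessel upper bound coming from separation; these two inequalities are precisely what control the samples \emph{outside} the window and close the eigenvalue comparison. They are also precisely what your limiting step lacks: the windows $I_{r}$ sit wherever the density is low and in general escape to infinity, so the locally uniform limit of the $g_{r}$ is zero; if instead you recenter to $g_{r}(\cdot+z_{r})$, the limit vanishes on a weak limit of the translates $\Lambda-z_{r}$, not on $\Lambda$ itself. Stable sampling survives such translation limits (Beurling's compactness method); injectivity does not --- which is exactly why the density theorem holds for stable sampling but fails for uniqueness. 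The proposal can only be repaired by strengthening the hypothesis to the frame inequality, i.e., by proving Landau's actual theorem rather than the literal statement as transcribed in the paper.
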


Theorem \ref{theorem:LandauRate} characterizes the fundamental sampling
rate requirement for perfect signal reconstruction under pointwise
sampling without preprocessing. In particular, when $\Omega=[-B/2,B/2]$,
Theorem \ref{theorem:LandauRate} reduces to the Shannon-Nyquist theorem. 

We will thereby use Beurling density to characterize the sampling
rate for a general sampling set. However, since the preprocessor might
distort the time scale of the input, the resulting ``sampling rate''
might not characterize the true sampling rate applied to the original
signal, as illustrated in the following example.

\begin{example}[Compressor]Consider a preprocessing system defined
by the relation
\[
y(t)=\mathcal{T}\left(r(t)\right)=r\left(Lt\right)
\]
with $L\geq2$ being a positive integer. If we apply a uniform sampling
set $ $$\Lambda=\{t_{n}:t_{n}=n/f_{s}\}$ on the preprocessed output
$y(t)$, the sampled sequence at a ``sampling rate'' $f_{s}$ is
given by
\[
y[n]=y\left(n/f_{s}\right)=r\left(nL/f_{s}\right),
\]
which corresponds to sampling the system input $r(t)$ at rate $f_{s}/L$.
The compressor effectively time-warps the signal, thus resulting in
a mismatch of the time scales between the input and output.

\end{example}

The compressor example illustrates that the notion of sampling rate
may be misleading for systems that experience time warping. Hence,
this paper will focus only on sampling that preserves time scales.
One class of linear systems that preserves time scales are modulation
operators $\left(y(t)=p(t)x(t)\right)$, which perform pointwise scaling
of the input, and hence do not change the time scale. Another class
is the periodic system which includes LTI filtering, defined as follows. 

\begin{definition}[\bf Periodic System]A linear preprocessing system
is said to be periodic with period $T_{q}$ if its impulse response
$q(t,\tau)$ satisfies 
\begin{equation}
q(t,\tau)=q(t+T_{q},\tau+T_{q}),\quad\forall t,\tau\in\mathbb{R}.\label{eq:ShiftInvariantImpulseResponse}
\end{equation}
 \end{definition}

A more general class of systems that preserve the time scale can be
generated through modulation and periodic subsystems. Specifically,
we can define a general time-preserving system by connecting a set
of modulation or periodic operators in parallel or in serial. This
leads to the following definition.

\begin{definition}[\bf Time-preserving System]Given an index set
$\mathcal{I}$, a preprocessing system $\mathcal{T}:x(t)\mapsto\left\{ y_{k}(t),k\in\mathcal{I}\right\} $
is said to be time-preserving if

(1) The system input is passed through $\left|\mathcal{I}\right|$
(possibly countably many) branches of linear preprocessors, yielding
a set of analog outputs $\left\{ y_{k}(t)\mid k\in\mathcal{I}\right\} $.

(2) In each branch, the preprocessor comprises a set of periodic or
modulation operators connected in serial.

\end{definition}

With a preprocessing system that preserves the time scale, we can
now define the aggregate sampling rate through the Beurling density.

\begin{definition}[\bf Sampling Rate for Time-preserving Systems]A
sampling system is said to be time-preserving with sampling rate $f_{s}$
if

(1) Its preprocessing system $\mathcal{T}$ is time-preserving.

(2) The preprocessed output $y_{k}(t)$ is sampled by a sampling set
$\Lambda_{k}=\left\{ t_{l,k}\mid l\in\mathbb{Z}\right\} $ with a
uniform Beurling density $f_{k,\text{s}}$, which satisfies%
\footnote{We note that the sampling system may comprise countably many branches,
each with non-zero sampling rate. For instance, if the $k$th branch
is sampled at a rate $f_{k,\text{s}}=k^{-2}f_{0}$, we have an aggregate
rate $f_{s}=\sum_{k=1}^{\infty}k^{-2}f_{0}=\pi^{2}f_{0}/6$.%
} $\sum_{k\in\mathcal{I}}f_{k,\text{s}}=f_{s}$.\end{definition}

We note that the class of time-preserving sampling structures does
not preclude random sampling schemes. For example, the preprocessing
system can be a random modulator and the sampling set can be randomly
spaced. Our definition also includes multibranch sampling methods.
In fact, each multibranch sampling can be converted to an equivalent
single branch sampling as follows. 

\begin{prop}\label{factMultibranchSinglebranch}Suppose that a multibranch
sampling system has sampling rate $f_{s}$. Then there exists a single
branch sampling system with sampling rate $f_{s}$ that yields the
same set of sampled output values as the original system. This holds
simultaneously for all input signals. \end{prop}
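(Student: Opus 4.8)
The plan is to show that $M$ branches of time-preserving sampling, each producing a sample sequence indexed by $\mathbb{Z}$ and each carrying a well-defined Beurling density, can be repackaged into a \emph{single} branch with the same aggregate density $f_s$, while reproducing the identical sample values for every input. The central idea is that a single sampling branch is not required to be uniform: the definition of a time-preserving sampling system already permits an arbitrary (possibly irregular) sampling set with a uniform Beurling density. Hence the task reduces to two sub-problems: (i) construct a single \emph{preprocessor} $\mathcal{T}$ whose output, when pointwise-sampled on a single carefully interleaved set $\Lambda$, returns all the values $\{y_k[n]\}$; and (ii) verify that $\Lambda$ has Beurling density exactly $f_s=\sum_{k\in\mathcal{I}} f_{k,\mathrm{s}}$, and that $\mathcal{T}$ remains time-preserving in the sense of the definition.

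First I would treat the finite-branch case $M<\infty$ to fix ideas, then extend. The construction interleaves the $M$ branch-outputs in time. Concretely, pick a large common period block of length $T$ and, within each block, partition the time axis into $M$ disjoint sub-windows, assigning the $k$th branch's samples to the $k$th sub-window after a time-translation; the single-branch preprocessor applies, on the $k$th sub-window, the operator $\mathcal{T}_k$ composed with the appropriate translation so that evaluating the single output at the relocated sample instant $\tilde{t}_{k,n}$ yields exactly $y_k(t_{k,n})$. Because translation is a periodic (indeed LTI) operator and each $\mathcal{T}_k$ is already a serial composition of modulation and periodic operators, the composite branch is again a serial/parallel arrangement of such operators, so it meets the time-preserving definition. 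The single sampling set is $\Lambda=\bigcup_{k} \tilde{\Lambda}_k$, the union of the relocated per-branch sets.

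The Beurling-density bookkeeping is the step I would be most careful with. Since the density is defined as a limit over windows $[z,z+r]$ of the normalized counting function, and since each $\tilde{\Lambda}_k$ is a translated/relocated copy of $\Lambda_k$ occupying a fixed fraction of the time axis, I would argue that $\mathrm{card}(\Lambda\cap[z,z+r])=\sum_k \mathrm{card}(\tilde{\Lambda}_k\cap[z,z+r])$ and that each summand, divided by $r$, converges to $f_{k,\mathrm{s}}$ as $r\to\infty$; additivity of the limits then gives $D(\Lambda)=\sum_k f_{k,\mathrm{s}}=f_s$. One must ensure the relocation does not create coincident points (otherwise counting collapses), which is why disjoint sub-windows are used; uniformity of each $D(\Lambda_k)$ guarantees the sup/inf in $D^{\pm}$ coincide in the limit.

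The hard part will be the countably-infinite case $M=\infty$, where one cannot simply partition each finite block into infinitely many sub-windows of positive width. Here I would exploit the footnoted fact that the rates $f_{k,\mathrm{s}}$ are summable (e.g. $f_{k,\mathrm{s}}=k^{-2}f_0$): allocate to branch $k$ a fraction of the time axis proportional to $f_{k,\mathrm{s}}/f_s$, so that the relocated copies still tile $\mathbb{R}$ without overlap and the density contributions form a convergent series summing to $f_s$. The delicate point is interchanging the infinite sum with the Beurling limit; I would justify this by a uniform-tail estimate, bounding $\sum_{k>N} \mathrm{card}(\tilde{\Lambda}_k\cap[z,z+r])/r$ uniformly in $z$ by the tail $\sum_{k>N} f_{k,\mathrm{s}}$, which is small by summability, thereby reducing to the finite case already handled. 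This tail control, rather than the algebraic construction, is where the real work lies.
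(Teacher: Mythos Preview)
Your approach is substantially more elaborate than needed, and the sub-window relocation step is both unnecessary and unclear as written. The paper's argument is essentially one line: assume (after tiny shifts if needed) that the sampling sets $\Lambda_k$ are pairwise disjoint, take $\Lambda=\bigcup_k\Lambda_k$ with the inherited ordering $\{\tilde t_l\}$, and define the single-branch impulse response pointwise at the sample instants by $\tilde q(\tilde t_l,\tau):=q_k(t_{k,n},\tau)$ whenever $\tilde t_l=t_{k,n}$. Nothing else is required, because a sampling system interacts with the preprocessor only through the values $q(t_i,\tau)$ at the sample times; there is no need to manufacture a continuous-time operator via sub-windows and translations, and certainly no need to \emph{relocate} samples. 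Your relocation scheme is where the proposal wobbles: if you translate the samples of branch $k$ in a block of length $T$ into a sub-window of length $T/M$, they will not fit under a rigid shift, and if you compress them to fit you alter the Beurling density you are trying to preserve. The paper sidesteps all of this by leaving sample times where they are, using a shift $\delta$ only to break coincidences $\Lambda_k\cap\Lambda_{k'}\neq\emptyset$, with the compensating redefinition $q_k^\delta(t+\delta,\tau):=q_k(t,\tau)$.

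What your proposal does better is the bookkeeping for $D(\Lambda)=\sum_k f_{k,\mathrm{s}}$ in the countable-branch case: the paper simply asserts additivity, whereas your tail-control argument (bounding $\sum_{k>N}\mathrm{card}(\Lambda_k\cap[z,z+r])/r$ uniformly in $z$ by the summable tail $\sum_{k>N}f_{k,\mathrm{s}}$) is the right way to justify interchanging the limit with the infinite sum. That care is genuinely useful, but it should be grafted onto the simple union-of-sets construction rather than onto the sub-window scheme.
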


\begin{proof}Suppose that the impulse response for the $k$th branch
is given by $q_{k}\left(t,\tau\right)$ with sampling set $\Lambda_{k}=\left\{ t_{k,n}\mid n\in\mathbb{Z}\right\} $.
Without loss of generality%
\footnote{In fact, if $\Lambda_{k}\cap\Lambda_{k'}\neq\emptyset$, then we can
introduce a new shifted pair $\left(q_{k}^{\delta}(t,\tau),\Lambda_{k}^{\delta}\right)$
such that $q_{k}^{\delta}\left(t+\delta,\tau\right):=q_{k}\left(t,\tau\right)$
and $\Lambda_{k}^{\delta}=\left\{ t+\delta\mid t\in\Lambda_{k}\right\} $
for some $\delta$ such that $\Lambda_{k}^{\delta}\cap\Lambda_{k}=\emptyset$,
i.e. we can introduce certain delay to the preprocessed output and
shift the sampling set correspondingly. Apparently, this new sampling
structure leads to the same collection of sample outputs. %
}, suppose that $\Lambda_{k}\cap\Lambda_{k'}=\emptyset$ for any $k\neq k'$.
By ordering all sample times in $\cup_{k\in\mathcal{I}}\Lambda_{k}$
and renaming them to be $\left\{ \tilde{t}_{l}\mid l\in\mathbb{Z}\right\} $
such that $\tilde{t}_{l}<\tilde{t}_{l+1}$ $ $for all $l$, we can
construct an equivalent single branch sampling system such that
\[
\tilde{q}\left(\tilde{t}_{l},\tau\right)=q_{k}\left(t_{k,n},\tau\right)
\]
if $\tilde{t}_{l}$ corresponds to $\tilde{t}_{k,n}$ in the original
sampling set. The sampling rate $\tilde{f}_{s}$ of the new system
is given by $\tilde{f}_{s}=\sum_{k\in\mathcal{I}}D(\Lambda_{k})=f_{s}$.\end{proof}

The samples obtained through this new single branch system preserve
all information we can obtain from the samples of the original multibranch
system. As will be seen, this proposition allows us to simplify the
analysis.

\subsection{Capacity Definition}

There are two capacity definitions that are of interest in sub-Nyquist
sampled channels: (1) the sampled capacity for a given sampling system;
(2) the capacity for a large class of sampling systems under a sampling
rate constraint. We now detail these definitions.

Suppose that the transmit signal $x(t)$ is constrained to the time
interval $\left[-T,T\right]$, and the received signal $y(t)$ is
sampled with sampling rate $f_{s}$ and observed over the time interval
$\left[-T,T\right]$. For a \emph{given} sampling system $\mathcal{P}$
that consists of a preprocessor $\mathcal{T}$ and a sampling set
$\Lambda$, and for a given time duration $T$, we define the information
metric $C_{T}^{\mathcal{P}}\left(P\right)$ to be 
\begin{equation}
C_{T}^{\mathcal{P}}\left(P\right)=\sup\frac{1}{2T}I\left(x\left(\left[-T,T\right]\right),\left\{ y[t_{n}]\right\} _{\left[-T,T\right]}\right),\label{eq:CapacityDefinitionFiniteT}
\end{equation}
 where the supremum is over all input distributions subject to a power
constraint $\mathbb{E}(\frac{1}{2T}\int_{-T}^{T}|x(t)|^{2}\mathrm{d}t)\leq P$.
Here, $\left\{ y[t_{n}]\right\} _{\left[-T,T\right]}$ denotes the
set of samples obtained at times within $\left[-T,T\right]\cap\Lambda$
by the sampling system $\mathcal{P}$, i.e. $\left\{ y[t_{n}]\mid n\in\mathbb{Z},t_{n}\in\left[-T,T\right]\right\} $. 

The capacity of the undersampled channel under a given sampling system
can then be studied by taking the limit as $T\rightarrow\infty$.
It was shown in \cite{ChenGolEld2010} that $\lim_{T\rightarrow\infty}C_{T}^{\mathcal{P}}\left(P\right)$
exists for a broad class of sampling methods, including sampling via
filter banks and sampling via periodic modulation. We caution, however,
that the existence of the limit is not guaranteed for all sampling
methods, e.g. the limit might not exist for irregular sampling. We
therefore define the capacity for a \emph{given} sampling system as
follows.

\begin{definition}$C^{\mathcal{P}}$ is said to be the\emph{ }information
capacity\emph{ }of a given sampled analog channel (or \emph{sampled
channel capacity}) if $\lim_{T\rightarrow\infty}C_{T}^{\mathcal{P}}(P)$
exists and 
\[
C^{\mathcal{P}}(P)=\lim_{T\rightarrow\infty}C_{T}^{\mathcal{P}}(P).
\]

\end{definition}

Note that any sampled analog channel can be converted to a set of
independent discrete channels via a Karhunen Loeve decomposition.
The metric $C^{\mathcal{P}}(P)$ then quantifi{}es asymptotically
the maximum mutual information between the input and output of these
discrete channels, or equivalently, the maximum data rate that can
be conveyed reliably through these channels.

The above capacity is defined for a given sampling mechanism. Another
metric of interest is the maximum data rate achievable by all sampling
schemes within a general class. This motivates us to define the sub-Nyquist
sampled channel capacity for a class of sampling systems as follows.

\begin{definition}[\bf Sampled Capacity under A Class of Sampling Systems]\label{defn:sampled_capacity_time_preserving}$C_{\mathcal{A}}(f_{s},P)$
is said to be the\emph{ }capacity\emph{ }of an analog channel over
all a class $\mathcal{A}$ of sampling systems under a given sampling
rate $f_{s}$ if 
\[
C_{\mathcal{A}}(f_{s},P)=\sup_{\mathcal{P}\in\mathcal{A}}C^{\mathcal{P}}(P).
\]

\end{definition}

The above definition of sub-sampled channel capacity characterizes
the capacity limit of an analog channel over a large set of sampling
mechanisms subject to a sampling rate constraint. This gives rise
to the natural problem of jointly optimizing the input and the sampling
architecture to maximize capacity, a goal we address in the next section.

\section{Main Results}

This section characterizes in closed form the sampled channel capacity
for a very general class of sampling systems, under a sampling rate
constraint. Specifically, we are concerned with the sampled channel
capacity $C_{\mathcal{A}}(f_{s},P)$, where 
\[
\mathcal{A}:=\left\{ \text{all right-invertible time-preserving sampling systems}\right\} .
\]
Here, the right-invertibility represents some mild regularity constraints
that ensure each sample contains innovation information, as will be
formally defined later. Unless otherwise specified, all sampling systems
mentioned in this section are assumed to be right-invertible time-preserving
linear systems.

Before proceeding, we shall assume, throughout, that for any frequency
$f$, the following holds
\begin{align}
\mathcal{S}_{\eta}\left(f\right) & \neq0,\nonumber \\
\int_{f}\frac{H^{2}\left(f\right)}{\mathcal{S}_{\eta}(f)}\mathrm{d}f & <\infty,\label{eq:AssumptionNoise}\\
\int_{f}\mathcal{S}_{\eta}(f)\mathrm{d}f & <\infty\text{ or }\mathcal{S}_{\eta}\left(f\right)\equiv1.\nonumber 
\end{align}

\subsection{An Upper Bound on Sampled Channel Capacity\label{sub:CapacityUpperBound}}

\subsubsection{The Converse\label{sub:The-Converse}}

A time-preserving sampling system preserves the time scale of the
signal, and hence does not compress or expand the frequency scale.
We now determine an upper limit on the sampled channel capacity for
this class of general nonuniform sampling systems. Proposition \ref{factMultibranchSinglebranch}
implies that any multibranch sampling system can be converted into
a single branch sampling system without loss of information. Therefore,
we restrict our analysis in this section to the class of single branch
sampling systems, which provides exactly the same upper bound as the
one accounting for multibranch systems. In addition, we constrain
our attention to sampling methods that are right-invertible, as defined
below%
\footnote{We impose the right-invertibility constraint primarily for the sake
of mathematical convenience. We conjecture, however, that removing
this constraint does not change our main result (Theorem \ref{thm:GeneralSampledCapacity}). %
}.

\begin{definition}[\bf Right-Invertible Sampling System]A sampling
system with sampling set $\Lambda$ and impulse response $q(t_{i},\tau)$
($t_{i}\in\Lambda$) is said to be right-invertible with respect to
$\mathcal{S}_{\eta}(f)$ if 
\begin{enumerate}
\item for any $k\in\mathbb{Z}$, the frequency response $\mathcal{F}\left(q_{k}\left(\tau\right)\right)$
is bounded;
\item for any frequency $f$ and any $T$ with its associated sampling subset
\[
\Lambda_{\left[-T,T\right]}=\left[-T,T\right]\cap\Lambda:=\left\{ t_{1},\cdots,t_{N_{T}}\right\} ,
\]
the least singular value of the $N_{T}\times\infty$-dimensional matrix
$\boldsymbol{F}_{T}\left(f\right)$ is uniformly bounded away from
0. 
\end{enumerate}
Here, $q_{k}\left(\tau\right):=q\left(t_{k},t_{k}-\tau\right)$, and
$\boldsymbol{F}_{T}$ is a $N_{T}\times\infty$ matrix defined by
\[
\left[\boldsymbol{F}_{T}\left(f\right)\right]_{k,l}=\mathcal{F}\left(s_{\eta}\cdot q_{k}\right)\left(f+\frac{l}{T}\right),\text{ }\text{ }1\leq k\leq N_{T},l\in\mathbb{Z}.
\]

\end{definition}

The right invertibility of the sampling system essentially implies
that for each subset of the impulse response $\left\{ q(t_{i},\tau)\mid i\in\mathcal{I}\right\} $,
the Fourier matrix associated with any sampled response is right-invertible.
This typically implies that the set of samples is a linearly independent
family -- each sample provides a sufficient amount of innovative information.
Our main theorem is now stated as follows.

\begin{theorem}[\bf Converse]\label{thm:GeneralSampledCapacity}Assume
that there exists a small constant $\epsilon>0$ such that $\mathcal{F}^{-1}\left(\frac{H(f)}{\sqrt{\mathcal{S}_{\eta}(f)}}\right)(t)=O\left(\frac{1}{t^{1.5+\epsilon}}\right)$.
Suppose that there exists a frequency set $B_{\mathrm{m}}$ with $\mu\left(B_{\mathrm{m}}\right)=f_{s}$
that satisfies
\[
{\displaystyle \int}_{f\in B_{\mathrm{m}}}\frac{\left|H(f)\right|^{2}}{\mathcal{S}_{\eta}(f)}\mathrm{d}f=\sup_{B:\mu\left(B\right)=f_{s}}{\displaystyle \int}_{f\in B}\frac{\left|H(f)\right|^{2}}{\mathcal{S}_{\eta}(f)}\mathrm{d}f.
\]

Under any time-preserving right-invertible sampling system $\mathcal{P}$
w.r.t. $\mathcal{S}_{\eta}\left(f\right)$ with sampling rate $f_{s}$,
the sampled channel capacity is upper bounded by
\begin{align}
C^{\mathcal{P}}\left(P\right) & \leq C_{\mathrm{u}}\left(f_{s},P\right)\nonumber \\
 & :={\displaystyle \int}_{f\in B_{\mathrm{m}}}\frac{1}{2}\left[\log\left(\nu\frac{\left|H(f)\right|^{2}}{\mathcal{S}_{\eta}(f)}\right)\right]^{+}\mathrm{d}f,\label{eq:GeneralCapacityUpperBound}
\end{align}
where $\nu$ is given parametrically by
\begin{align}
{\displaystyle \int}_{f\in B_{\mathrm{m}}}\left[\nu-\frac{\mathcal{S}_{\eta}(f)}{\left|H(f)\right|^{2}}\right]^{+}\mathrm{d}f & =P.\label{eq:GeneralCapacityWaterLevel}
\end{align}

\end{theorem}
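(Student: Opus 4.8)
The plan is to establish the converse by reducing, for each finite horizon $T$, the sampled channel to a finite-dimensional Gaussian channel, and then showing that no time-preserving sampler operating at rate $f_s$ can observe more information than an ideal extraction of the highest-SNR spectral band of measure $f_s$. First I would apply Proposition \ref{factMultibranchSinglebranch} to collapse any multibranch system into an equivalent single-branch system with the same rate $f_s$ and the same sample values, so it suffices to bound $C^{\mathcal{P}}$ for a single sampling set $\Lambda$ with impulse responses $q_k(\tau)=q(t_k,t_k-\tau)$. Fix $T$ and write $\Lambda_{[-T,T]}=\{t_1,\dots,t_{N_T}\}$. Each sample is a linear functional $\langle r,\phi_k\rangle$ of the channel output $r=h*x+\eta$, so the finite-horizon mutual information in \eqref{eq:CapacityDefinitionFiniteT} is exactly the mutual information of the $N_T$-dimensional Gaussian observation $\{\langle r,\phi_k\rangle\}_{k=1}^{N_T}$, and hence is bounded above by the mutual information of the \emph{best} $N_T$-dimensional linear observation of $r$.

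Next I would whiten and diagonalize. Expanding $x$ and $\eta$ over $[-T,T]$ in the Fourier system of spacing $1/T$, both the sample-noise covariance and the input-to-sample map are encoded by the matrix $\boldsymbol{F}_T(f)$ with entries $\mathcal{F}(s_\eta\cdot q_k)(f+l/T)$. The right-invertibility hypothesis forces the least singular value of $\boldsymbol{F}_T$ to be uniformly bounded away from zero, so the sample-noise covariance is well-conditioned and whitening is legitimate; after whitening the finite-$T$ channel reads $\boldsymbol{y}=\boldsymbol{A}_T\boldsymbol{x}+\boldsymbol{w}$ with white $\boldsymbol{w}$, and its per-unit-time capacity is $\frac{1}{2T}\cdot\frac{1}{2}\log\det(\boldsymbol{I}+\boldsymbol{A}_T\boldsymbol{\Sigma}\boldsymbol{A}_T^{*})$ maximized over input covariances $\boldsymbol{\Sigma}$ of bounded trace.

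The core is then a dimension-counting and majorization step. Since $\boldsymbol{A}_T$ has rank at most $N_T$, the observation lives in an $N_T$-dimensional subspace of the channel output, so the mutual information is dominated by that of the optimal $N_T$-dimensional projection of the full unsampled Gaussian channel. For an LTI channel time-limited to $[-T,T]$, the eigenfunctions of the time-and-band-limiting operator with symbol $|H(f)|^2/\mathcal{S}_\eta(f)$ are asymptotically frequency-localized (a Szeg\H{o}-type concentration), and the top $N_T$ eigenvalues by magnitude track the frequency set of measure $N_T/(2T)$ on which $|H(f)|^2/\mathcal{S}_\eta(f)$ is largest. Thus the optimal $N_T$-dimensional subspace becomes, in the limit, precisely the extraction of the highest-SNR band, and water-filling over it produces the integral in \eqref{eq:GeneralCapacityUpperBound}.

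Finally I would pass to the limit. Because $\Lambda$ has uniform Beurling density $f_s$, $\limsup_{T\to\infty}N_T/(2T)\le D^+(\Lambda)=f_s$, so for every $\epsilon>0$ the effective observation band eventually has measure at most $f_s+\epsilon$; optimizing over such bands yields $B_{\mathrm{m}}$ enlarged by $\epsilon$, and continuity of the water-filling functional in the bandwidth lets me send $\epsilon\to0$ (this handles irregular $\Lambda$, for which only $\limsup_T C_T^{\mathcal{P}}$ need be controlled). The decay assumption $\mathcal{F}^{-1}(H/\sqrt{\mathcal{S}_\eta})(t)=O(t^{-1.5-\epsilon})$ enters exactly here: it makes the whitened channel operator smooth enough that its symbol approximates its eigenvalues with an error that is summable and vanishes after dividing by $2T$, so the finite-$T$ $\log\det$ converges to the claimed integral. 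I expect the main obstacle to be making this majorization rigorous in the infinite-dimensional frequency domain, namely proving that the $N_T$ significant singular values of $\boldsymbol{A}_T$ are dominated by the top-$N_T$ ordered values of $|H|^2/\mathcal{S}_\eta$ and that the aliased off-band contributions (the columns of $\boldsymbol{F}_T$ indexed by large $l$) cannot inflate capacity — which is precisely what the right-invertibility condition and the impulse-response decay are designed to control.
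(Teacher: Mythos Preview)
Your approach is genuinely different from the paper's. The paper does not attempt a direct Szeg\H{o} argument on the time-limited channel operator. Instead it first derives a closed-form capacity for \emph{periodic} sampling systems (Lemma~\ref{thm:PeriodicSampledCapacity}), where block-Toeplitz asymptotics give an exact spectral representation: the whitened sampled operator has eigenvalues dominated by those of the \emph{diagonal} matrix $\boldsymbol{F}_h\boldsymbol{F}_h^*$ with entries $|H(f+lf_q)|^2/\mathcal{S}_\eta(f+lf_q)$, so the majorization in Corollary~\ref{cor:PeriodicCapacityUpperBound} is an exact frequency-domain inequality rather than an asymptotic one. The general (aperiodic) case is then reduced to the periodic case by a guard-zone / periodization construction for finite-duration $h$, and by channel truncation plus an eigenvalue-perturbation bound (Lemma~\ref{lem:PropertiesLambda}) for infinite-duration $h$; the decay hypothesis on $\mathcal{F}^{-1}(H/\sqrt{\mathcal{S}_\eta})$ enters only in this last perturbation step, to force $|\lambda_i-\tilde\lambda_i|\to 0$ uniformly in $i$.

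Your route bypasses periodization entirely: after whitening, the Poincar\'e--Cauchy interlacing bound $\lambda_i(\hat{\mathcal{P}}_{\mathrm w}\mathcal{H}\mathcal{H}^*\hat{\mathcal{P}}_{\mathrm w}^*)\le\lambda_i(\mathcal{H}\mathcal{H}^*)$ is correct and is the same inequality underlying Corollary~\ref{cor:PeriodicCapacityUpperBound}(a), but applied to the \emph{time-limited} convolution operator on $[-T,T]$ rather than to a diagonal frequency matrix. The price is that you then need a genuine Szeg\H{o}-type theorem for this operator to match the top $N_T\sim 2Tf_s$ eigenvalues with the level set $B_{\mathrm m}$, and to pass water-filling through the limit. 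That is workable but is not the standard first-Szeg\H{o} statement (which controls $\frac{1}{2T}\sum_i g(\lambda_i)$ over \emph{all} $i$, not the top $N_T$); you need an additional thresholding argument and continuity of the water-filling functional under weak eigenvalue convergence. The paper's periodization buys precisely the avoidance of this step: by comparing to a diagonal frequency-domain matrix it never needs eigenvalue asymptotics for the truncated convolution operator. Conversely, your approach, if the Szeg\H{o} step is carried out, is more direct and avoids the guard-zone construction and the two-stage (finite/infinite-duration) case split. The obstacle you flag at the end is real and is exactly the piece your outline leaves open.
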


\begin{remark}Note that $C_{\mathrm{u}}\left(f_{s},P\right)$ is
monotonically nondecreasing in $f_{s}$ and $P$. In fact, when the
sampling rate is increased from $f_{s}$ to $f_{s}+\delta$, $C_{\mathrm{u}}\left(f_{s},P\right)$
corresponds to the optimal value when considering all spectral sets
of support size $f_{s}+\delta$. Since we are still allowed to employ
(suboptimal) strategies to allocate power over smaller spectral sets
with size $f_{s}$, we are optimizing over a larger set of transmission
/ power allocation strategies than the situation with sampling rate
$f_{s}$. Therefore, $C_{\text{u}}\left(f_{s},P\right)$ is nondecreasing
in $f_{s}$.\end{remark}

It can be easily shown that the upper limit $C_{\mathrm{u}}\left(f_{s},P\right)$
is equivalent to the maximum capacity of a channel whose spectral
occupancy is no larger than $f_{s}$. The above result basically implies
that even if we allow for more complex irregular sampling sets, the
sampled capacity cannot exceed the one commensurate with the analog
capacity when constraining all transmit signals to the interval of
bandwidth $f_{s}$ that experience the highest SNR. Accordingly, the
optimal input distribution will lie in this maximizing frequency set.
This theorem also demonstrates that the capacity is attained when
aliasing is suppressed by the sampling structure, as will be seen
later in our capacity-achieving scheme. When the optimal frequency
interval $B_{\text{m}}$ is selected, a water filling power allocation
strategy is performed over the spectral domain with some water level
$\nu$ determined by (\ref{eq:GeneralCapacityWaterLevel}).

\subsubsection{Approximate Analysis\label{sub:Approximate-Analysis}}

To give some intuition into the results, we provide an approximate
(but non-rigorous) argument relying on ``noise whitening'' and ``orthonormal
projections''.

Suppose that the Fourier transform of the analog channel output $r(t)$
is given by $H(f)X(f)+N(f)$, where $X(f)$ and $N(f)$ denote, respectively,
the frequency responses of $x(t)$ and $\eta(t)$. When the sampled
sequence does not collapse information, we can characterize the sampling
process through a linear injective mapping $\mathcal{R}$ from the
space of linear functions $H(f)X(f)+N(f)\in\mathcal{L}_{2}(-\infty,\infty)$
onto the space $\mathcal{L}_{2}(-f_{s}/2,f_{s}/2)$ of bandlimited
functions:
\[
\phi\left(\cdot\right)=\mathcal{R}\left(HX\right)+\mathcal{R}\left(N\right).
\]
This way the noise component $\mathcal{R}\left(N\right)$ can be treated
as additive sampled noise in the frequency domain. We note, however,
that this Gaussian noise $\mathcal{R}\left(N\right)$ is not necessarily
independent over the spectral support $\left[-f_{s}/2,f_{s}/2\right]$.
This motivates us to whiten it first without loss of information. 

Denote by $\mathcal{W}$ the whitening operator and suppose that $\mathcal{R}\left(N\right)$
is bounded away from 0. Then the prewhitening process is performed
as 
\[
\mathcal{W}\phi\left(\cdot\right)=\mathcal{W}\left(\mathcal{R}\left(HX\right)\right)+\mathcal{W}\left(\mathcal{R}\left(N\right)\right)
\]
such that the noise component $\mathcal{W}\left(\mathcal{R}\left(N\right)\right)$
is independent across the frequency domain. If we set $\tilde{\mathcal{R}}(\cdot)\overset{\Delta}{=}\mathcal{W}\left(\mathcal{R}\left(\cdot\right)\right)$,
then we can rewrite the input-output relation as
\[
\tilde{\phi}\left(\cdot\right)=\tilde{\mathcal{R}}\left(HX\right)+\tilde{N},
\]
with $\tilde{N}$ being white over $\left[-f_{s}/2,f_{s}/2\right]$
and $\tilde{\mathcal{R}}$ being an \emph{orthonormal} operator onto
$\mathcal{L}_{2}\left(-f_{s}/2,f_{s}/2\right)$. That said, the operator
$\tilde{\mathcal{R}}$ effectively projects all spectral components
$H(f)X(f)+N(f)$ onto a subspace $\mathcal{L}_{2}(-f_{s}/2,f_{s}/2)$.
Instead of scrambling the spectral contents, the optimal projection
that maximizes the SNR extracts out a spectral set $B_{\text{m}}$
of support size $f_{s}$ that contains the frequency components with
the highest SNR. This leads to the capacity upper bound (\ref{eq:GeneralCapacityUpperBound}).
As illustrated in Fig. \ref{fig:ProjectionSpectralContents}, scrambling
the spectral contents does not in general improve capacity. This will
be formally proved in Appendix \ref{sec:Proof-Architecture-of-Theorem-General-Capacity}.

\begin{figure}
\begin{centering}
\emph{}%
\begin{tabular}{cc}
\includegraphics[scale=0.3]{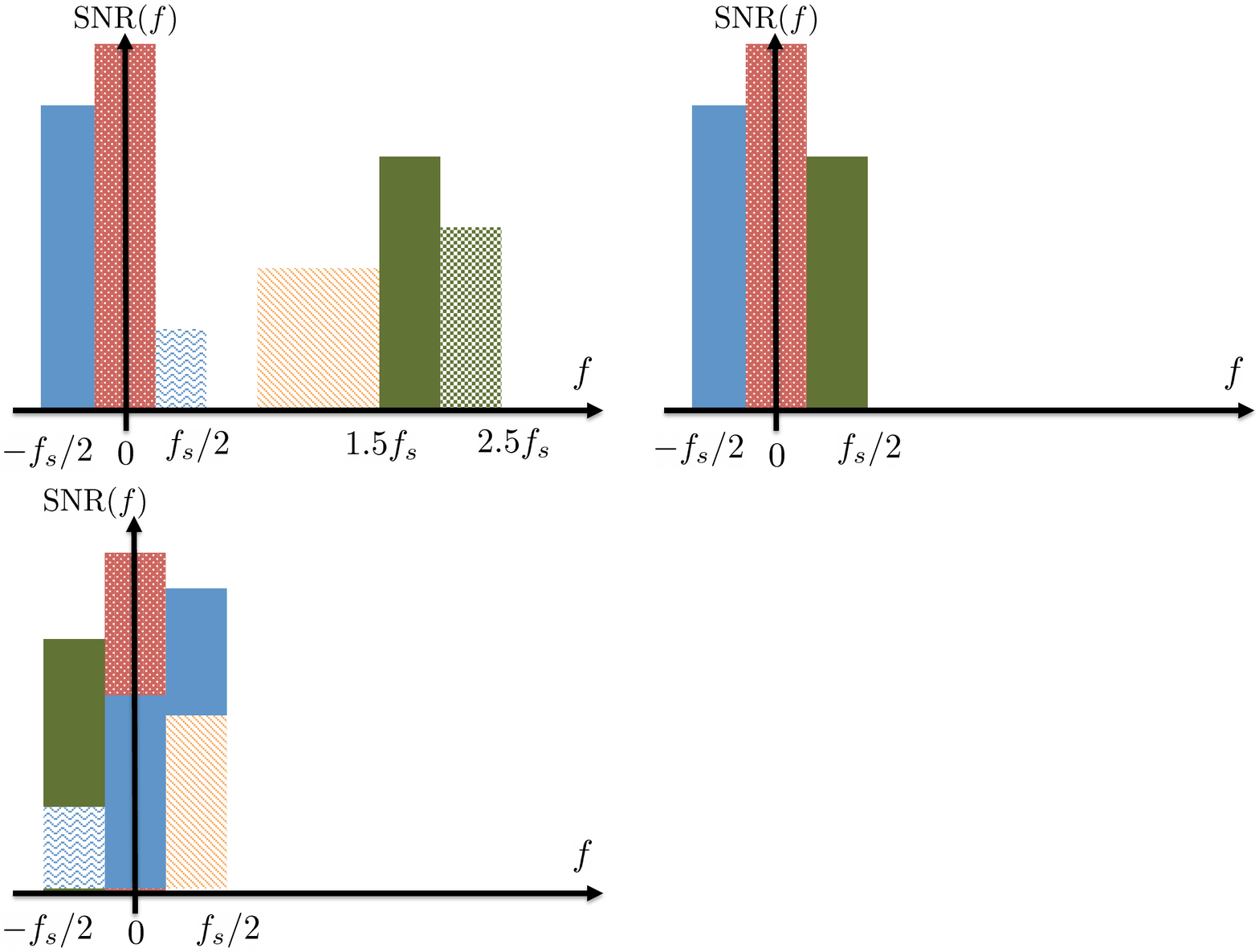} & \includegraphics[scale=0.3]{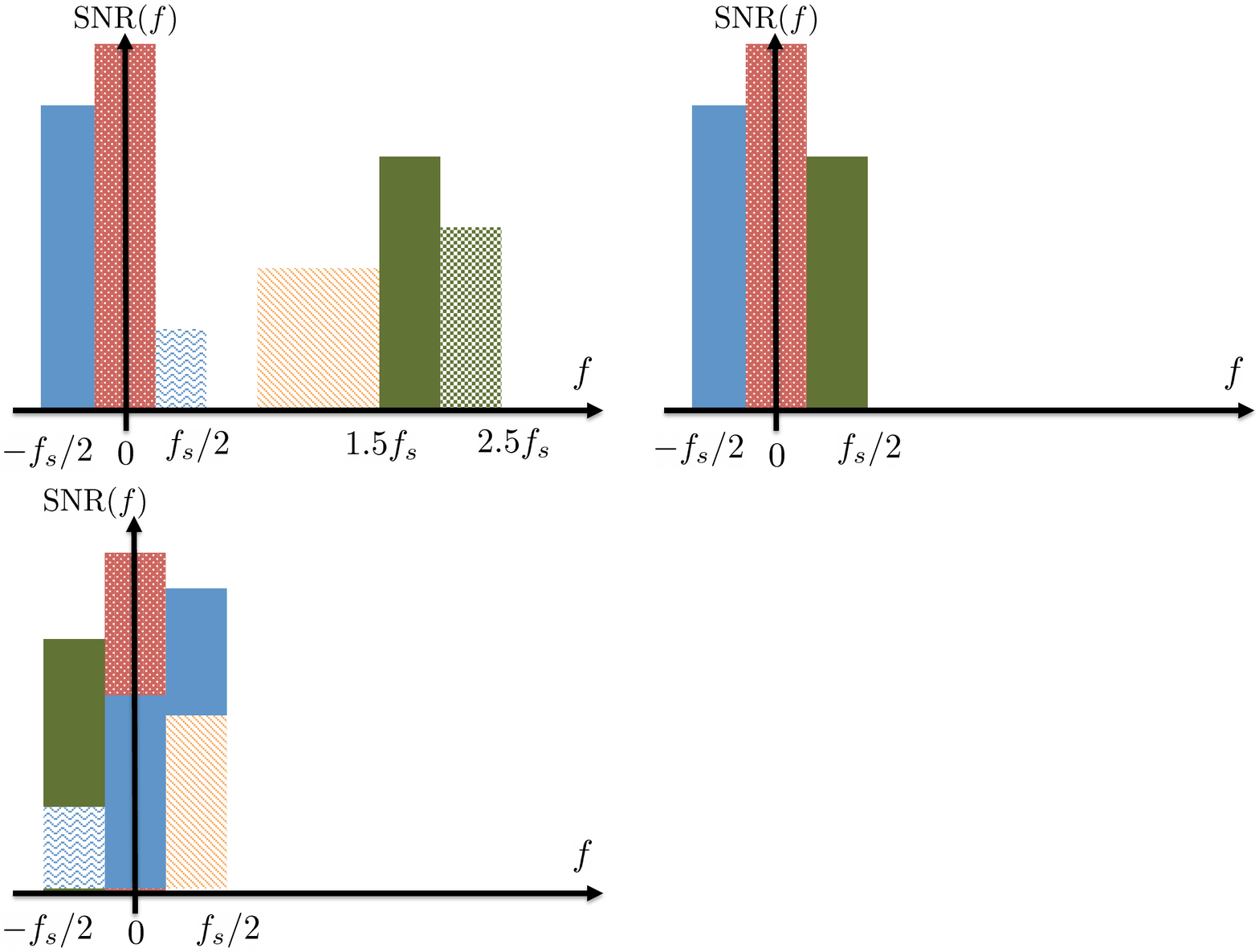}\tabularnewline
(a) & (b)\tabularnewline
\end{tabular}
\par\end{centering}

\begin{centering}
\emph{}%
\begin{tabular}{c}
\emph{\includegraphics[scale=0.3]{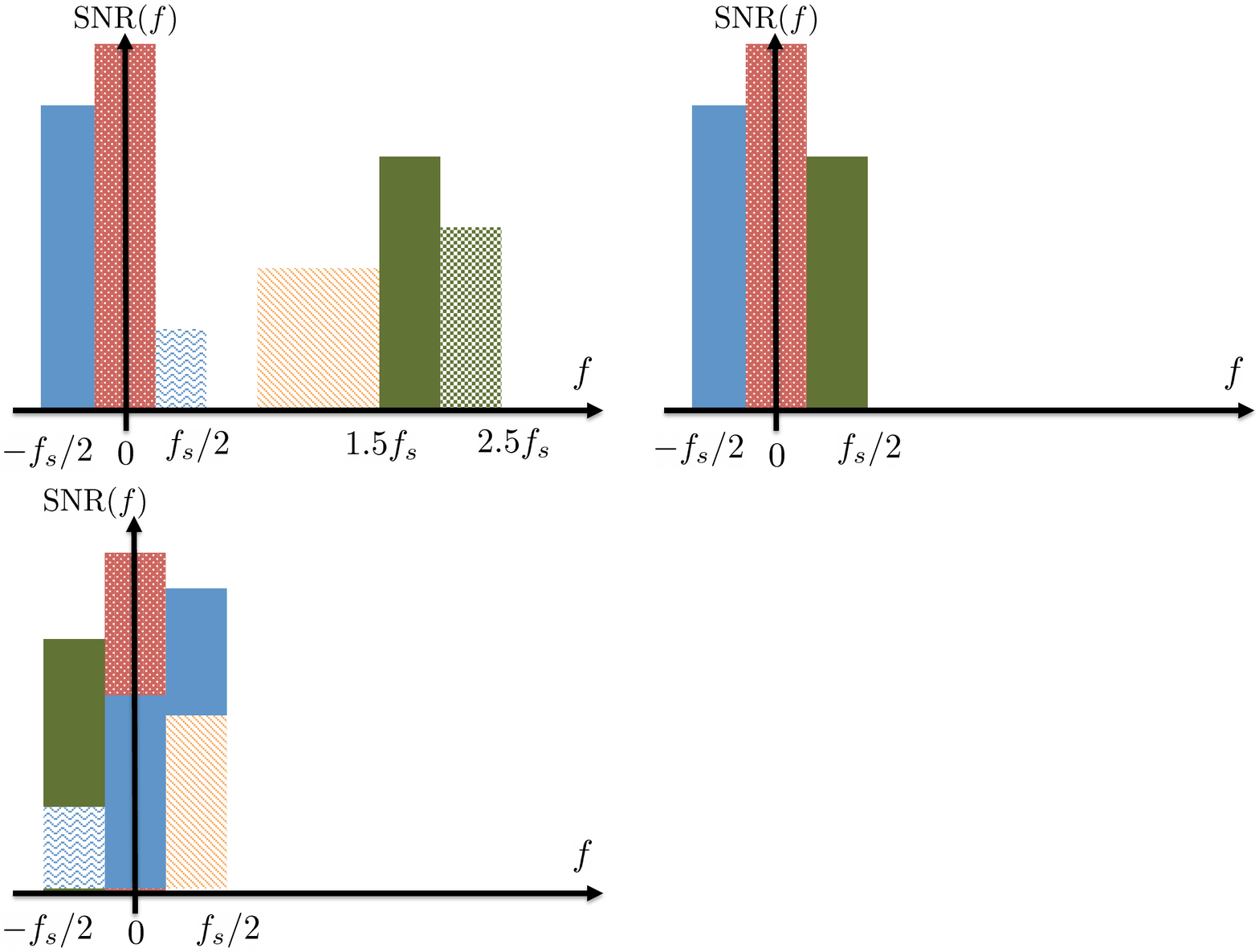}}\tabularnewline
(c)\tabularnewline
\end{tabular}
\par\end{centering}

\caption{\label{fig:ProjectionSpectralContents}Projection of spectral contents
from $\mathcal{L}\left(-\infty,-\infty\right)$ onto $\mathcal{L}\left(-f_{s}/2,f_{s}/2\right)$.
(a) SNR of the analog channel. (b) optimal projection: it extracts
out a frequency set of size $f_{s}$ and zeros out all other contents.
(c) a projection that scrambles the spectral contents, which does
not in general improve capacity. }
\end{figure}

\subsubsection{Proof Sketch\label{sub:ProofOutline}}

We now outline the key steps underlying the proof of Theorem \ref{thm:GeneralSampledCapacity}
for the white-noise scenario. 

i) We start by analyzing the class of periodic sampling systems: a
special type of sampling methods that allow closed-form capacity expressions.
We then demonstrate that the capacity under any periodic sampling
system with sampling rate $f_{s}$ and transmit power $P$ is bounded
above by $C_{\text{u}}\left(f_{s},P\right)$.

ii) The upper bound is then derived by relating general (possibly
aperiodic) sampled channels with periodic sampled channels through
a \emph{finite-duration} approximation argument. In fact, instead
of studying the true sampled channel response directly, we truncate
the channel response so that its impulse response is nonzero only
for a finite duration. The capacity bound for the resulting truncated
channel is then bounded by the capacity of a new periodized channel
we construct.  As we show, the capacity of the truncated channel can
be made arbitrarily close to the capacity of the true sampled channel. 

The most technically involved step is Step ii), which proceeds by
considering two cases as follows. 
\begin{enumerate}
\item \textbf{Finite-duration} $h(t)$. Consider first channels for which
$h(t)$ is of finite duration, $h(t)=0$ for any $t\notin[-L_{0},L_{0}]$
for some $L_{0}>0$.

$\quad$(a) Consider any given right-invertible time-preserving sampling
system $\mathcal{P}$ with impulse response $q(t,\tau)$, and suppose
that the input $x(t)$ is time constrained to the interval $[-T,T]$.
Construct a periodic channel with period $2(T+L_{0})$ based on $q(t,\tau)$.
Let $C_{\text{p}}^{\mathcal{P}}\left(P\right)$ denote the capacity
of the periodized channel, whose sampling rate is bounded above by
$f_{s}+\epsilon$ for some arbitrarily small $\epsilon>0$.

$\quad$(b) Show that $C_{T}^{\mathcal{P}}\left(P\right)\leq\frac{T+L_{0}}{T}C_{\text{p}}^{\mathcal{P}}\left(\frac{T}{T+L_{0}}P\right)$
holds uniformly for all $\mathcal{P}$. Since we know that $C_{\text{p}}^{\mathcal{P}}\left(P\right)\leq C_{\text{u}}\left(f_{s}+\epsilon,P\right)$
for any periodized channel (or, equivalently, any channel followed
by a periodic sampling system), this establishes the capacity upper
bound for this class of finite-duration channels, provided that $T$
is sufficiently large.

\item \textbf{Infinite-duration} $h(t)$. We next extend the results to
channels for which $h(t)$ is non-zero over infinite duration.

$\quad$(a) Construct a truncated channel such that 
\[
\tilde{h}(t)=\begin{cases}
h(t),\quad & \text{if }\left|t\right|\leq L_{1},\\
0, & \text{else},
\end{cases}
\]
for some sufficiently large $L_{1}$. The capacity upper bound holds
for the truncated channel, as shown in Step 1).

$\quad$(b) For any given sampling system $\mathcal{P}$ and any time
interval $[-T,T]$, compare the capacity of the original channel (denoted
by $C_{T}^{\mathcal{P}}\left(P\right)$) with the capacity of the
truncated channel (denoted by $\tilde{C}_{T}^{\mathcal{P}}\left(P\right)$),
which can be completed by investigating the spectrum of the operators
associated with both sampled channels. It can be shown that $C_{T}^{\mathcal{P}}\left(P\right)$
can be upper bounded by $\tilde{C}_{T}^{\mathcal{P}}\left(P+\xi\right)+\xi$
for some arbitrarily small constant $\xi>0$, which holds uniformly
over all sampling systems $\mathcal{P}$. Combining this with results
shown in Step 1), we demonstrate that $\tilde{C}_{T}^{\mathcal{P}}\left(P\right)$
(and hence $\tilde{C}_{T}^{\mathcal{P}}\left(P\right)$) is bounded
arbitrarily close by $C_{\mathrm{u}}\left(f_{s},P\right)$ , which
establishes the claim for the whole class of infinite-duration channels.

\end{enumerate}

\subsection{Achievability\label{sub:Achievability}}

It turns out that for most scenarios of interest, the capacity upper
bound given in Theorem \ref{thm:GeneralSampledCapacity} can be attained
through filterbank sampling, as stated in the following theorem.

\begin{theorem}[\bf Achievability -- Sampling with a Filter Bank]\label{thm:OptimalSamplingGeneralSampledCapacity}Suppose
that the maximizing frequency set $B_{\mathrm{m}}$ introduced in
Theorem \ref{thm:GeneralSampledCapacity} exists and is piecewise
continuous or, more precisely, 
\[
B_{\mathrm{m}}=\cup_{i}B_{i\in\mathcal{X}}\text{ },
\]
where $\mathcal{X}$ is an index set, and $B_{i}$'s are some non-overlapping
continuous intervals. Consider the following filterbank sampling mechanism
$\mathcal{P}_{\mathrm{FB}}$: in the $k$th branch, the frequency
response of the filter is given by
\begin{equation}
S_{k}(f)=\begin{cases}
1,\quad & \text{if }f\in B_{k},\\
0, & \text{otherwise},
\end{cases}\label{eq:OptimalFilterBank}
\end{equation}
and each filter is followed by an ideal uniform sampler with sampling
rate $\mu\left(B_{k}\right)$. Then
\[
C^{\mathcal{P}_{\mathrm{FB}}}\left(P\right)=C_{\mathrm{u}}\left(f_{s},P\right),
\]
where $C_{\mathrm{u}}\left(f_{s},P\right)$ is the upper bound given
by (\ref{eq:GeneralCapacityUpperBound}). \end{theorem}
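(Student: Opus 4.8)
The plan is to show that the proposed filter bank $\mathcal{P}_{\mathrm{FB}}$ reduces the analog channel to a bank of parallel Gaussian sub-channels that reproduce, without loss, the continuous-time channel restricted to $B_{\mathrm{m}}$, and then to identify the capacity of this reduced channel with $C_{\mathrm{u}}(f_s,P)$ by water-filling. Since Theorem \ref{thm:GeneralSampledCapacity} already supplies the matching converse $C^{\mathcal{P}_{\mathrm{FB}}}(P)\le C_{\mathrm{u}}(f_s,P)$, it would suffice to prove the reverse inequality; but the information-lossless equivalence I establish yields both directions at once.

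First I would analyze a single branch $k$ in the frequency domain. Its filtered output has spectrum $Y_k(f)=S_k(f)\big(H(f)X(f)+N(f)\big)$, which by (\ref{eq:OptimalFilterBank}) is supported on the interval $B_k$ of width $\mu(B_k)$. Uniform sampling at rate $\mu(B_k)$ periodizes this spectrum with period $\mu(B_k)$, producing the copies $\{Y_k(f-n\mu(B_k))\}_{n\in\mathbb{Z}}$. The key observation is that, because $Y_k$ is supported on a single interval whose width equals exactly the sampling rate, these copies tile the frequency axis with pairwise disjoint interiors, so no aliasing occurs. Hence the map from $y_k(t)$ to its samples $\{y_k[n]\}$ is invertible, and the $k$th discrete-time branch is a faithful, alias-free replica of the analog channel restricted to $B_k$, with effective gain $H(f)$ and effective noise power spectral density $\mathcal{S}_\eta(f)$ on $B_k$.

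I would then assemble the branches. Because the intervals $B_k$ are non-overlapping, the spectral increments of the stationary Gaussian noise over distinct bands are independent, so the per-branch sampled noises are mutually independent. The aggregate sampled channel is therefore equivalent to a continuum of independent parallel Gaussian channels indexed by $f\in B_{\mathrm{m}}=\cup_k B_k$, with gain $H(f)$ and noise level $\mathcal{S}_\eta(f)$. Restricting the transmit spectrum to $B_{\mathrm{m}}$ — which is optimal by the remark following Theorem \ref{thm:GeneralSampledCapacity} — and performing the classical water-filling allocation with level $\nu$ satisfying (\ref{eq:GeneralCapacityWaterLevel}), the achievable rate equals $\int_{B_{\mathrm{m}}}\tfrac{1}{2}\big[\log(\nu|H(f)|^2/\mathcal{S}_\eta(f))\big]^+\mathrm{d}f=C_{\mathrm{u}}(f_s,P)$. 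The limit $T\to\infty$ defining $C^{\mathcal{P}_{\mathrm{FB}}}$ exists here because each branch uses a uniform sampling grid, a case already handled in \cite{ChenGolEld2010}.

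The main obstacle will be making the single-branch alias-free reduction and its capacity consequence fully rigorous: one must (i) justify the perfect tiling and invertibility for abutting bands while discarding the measure-zero boundary overlaps, (ii) verify that the induced discrete-time noise is white with the correct level on each fundamental domain, so that the per-tone signal-to-noise ratio is exactly $|H(f)|^2/\mathcal{S}_\eta(f)$, and (iii) pass from the finite-horizon quantity $C_T^{\mathcal{P}_{\mathrm{FB}}}$ to its limit and identify it with the water-filling integral, which invokes the standard Karhunen--Lo\`eve / Toeplitz-eigenvalue conversion of a bandlimited Gaussian channel into parallel sub-channels. Once the alias-free equivalence is established, the remaining steps are routine.
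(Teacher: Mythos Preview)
Your proposal is correct and follows essentially the same approach as the paper: each branch's ideal bandpass filter confines the signal to an interval of width $\mu(B_k)$, uniform sampling at that rate is alias-free so the branch is information-lossless, and the resulting parallel-channel capacity over $B_{\mathrm m}$ with water-filling is exactly $C_{\mathrm u}(f_s,P)$. The paper's proof states this in two sentences, whereas you spell out the tiling, noise-independence, and limiting arguments more explicitly, but the underlying idea is identical.
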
\begin{IEEEproof}The
spectral components in $B_{i}$ can be perfectly reconstructed from
the sequence that is obtained by first extracting out a subinterval
$B_{i}$ and then uniformly sampling the filtered output with sampling
rate $f_{s,i}$. The capacity under $\mathcal{P}_{\mathrm{FB}}$ is
commensurate to the analog capacity when constraining the transmit
signal to $\cup_{i}B_{i}$, which is equivalent to $C_{\mathrm{u}}\left(f_{s},P\right)$.
\end{IEEEproof}

Note that the bandwidth of $B_{i}$ may be irrational and the system
may require an infinite number of filters. Theorem \ref{thm:OptimalSamplingGeneralSampledCapacity}
indicates that filterbank sampling with \emph{varied sampling rates}
in different branches maximizes capacity. $ $

\begin{figure}[htbp]
\begin{centering}
\textsf{\includegraphics[scale=0.38]{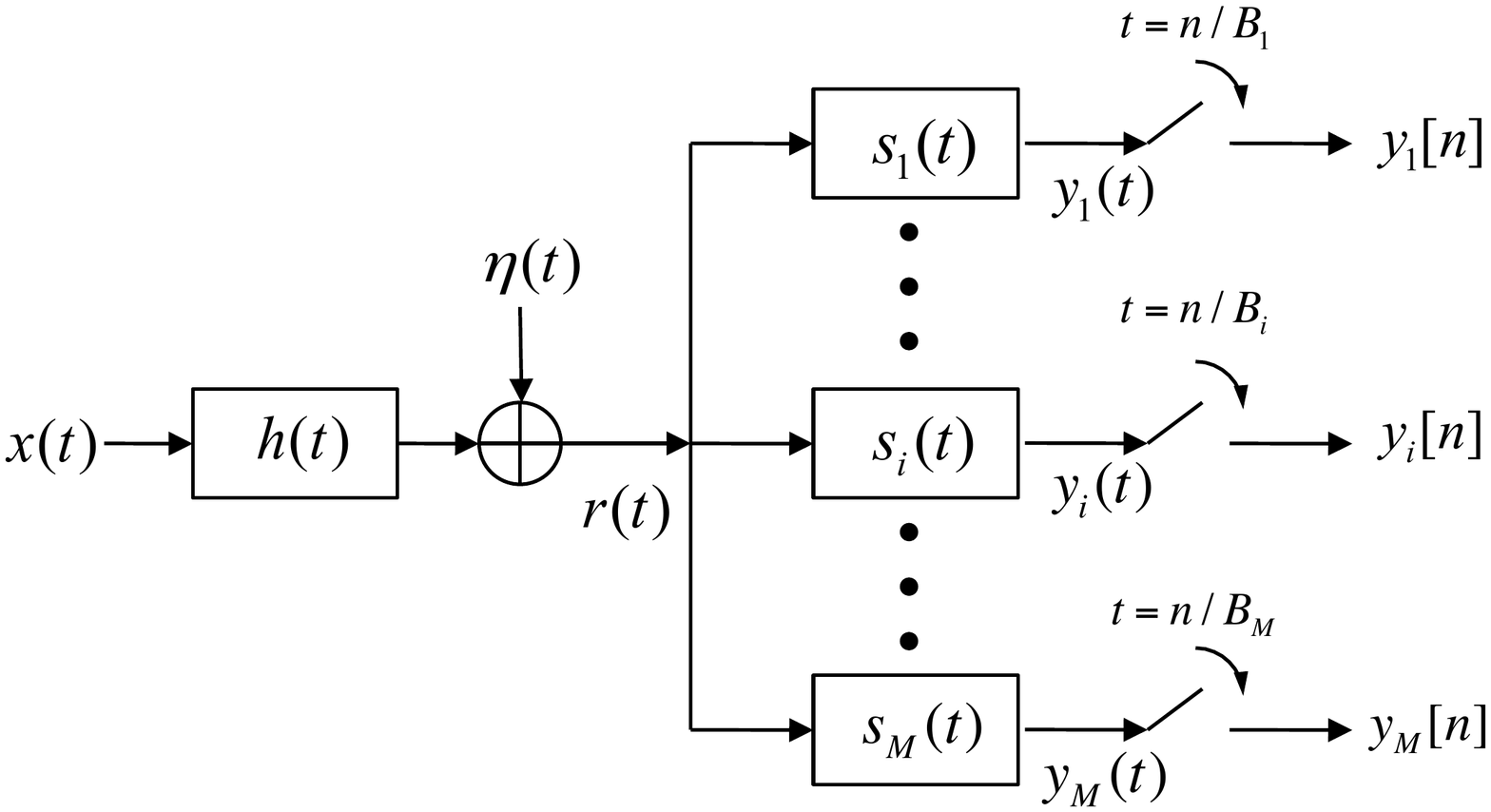}} 
\par\end{centering}

\begin{centering}
(a) 
\par\end{centering}

\begin{centering}
\textsf{\includegraphics[scale=0.45]{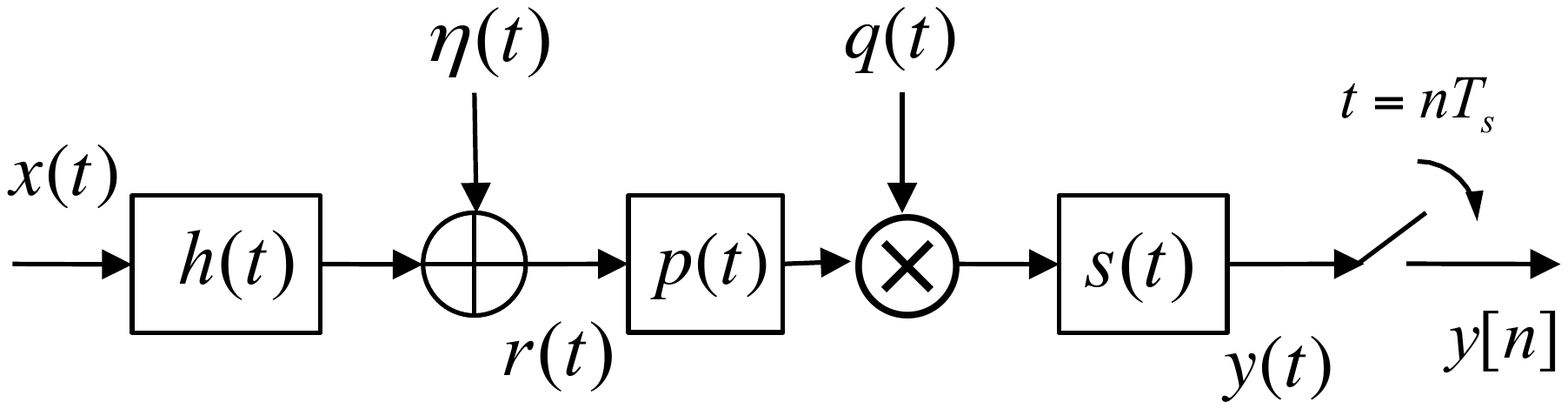}} 
\par\end{centering}

\begin{centering}
(b) 
\par\end{centering}

\caption{\label{fig:FilterBankModulation}(a) Filterbank sampling: each branch
filters out a frequency interval of bandwidth $B_{k}$, and samples
it with rate $f_{k,\text{s}}=B_{k}$; (b) A single branch of modulation
and filtering: the channel output is prefiltered by a filter with
impulse response $p(t)$, modulated by a sequence $q(t)$, post-filtered
by another filter of impulse response $s(t)$, and finally sampled
by a uniform sampler at a rate $f_{s}$. If the SNR $|H(f)|^{2}/\mathcal{S}_{\eta}(f)$
is piecewise flat, then $p(t)$, $q(t)$ and $s(t)$ can be chosen
such that the two systems are equivalent in terms of sampled capacity.}
\end{figure}

The optimality of filterbank sampling immediately leads to another
optimal sampling structure under mild conditions. As we have shown
in \cite{ChenGolEld2010}, filterbank sampling with equal rates on
different branches is equivalent to a single branch of modulation,
as illustrated in Fig. \ref{fig:FilterBankModulation}. This approach
attains the sampled capacity achievable by filterbank sampling if
the SNRs of the analog channel are piecewise constant in frequency.
Although the filterbank sampling we derive in (\ref{eq:OptimalFilterBank})
does not employ equal rates on different branches, for most channels
of physical interest we can simply divide each branch further into
a number of sub-branches to allow the rates at different branches
to be reasonably close to each other. Therefore, for most channels
of physical interest (say, the channels whose SNRs in frequency are
Riemann-integrable), the capacity achievable through filterbank sampling
can be approached arbitrarily closely by a single branch of sampling
with modulation. This achievability result is formally stated in the
following theorem.

\begin{theorem}[\bf Achievability -- A Single Branch of Sampling with Modulation and Filtering]\label{theorem:AchievabilityModulation}Under
the assumptions of Theorem \ref{thm:OptimalSamplingGeneralSampledCapacity},
suppose further that $\left|H(f)\right|^{2}/\mathcal{S}_{\eta}(f)$
is constant within each set $B_{i}$. Then for any $\epsilon>0$,
there exists a time-preserving sampling system $\mathcal{P}_{\text{MF}}$
with sampling rate $f_{s}$ using a single branch of sampling with
modulation and filtering such that $C^{\mathcal{P}_{\text{MF}}}\left(P\right)\geq C_{\mathrm{u}}\left(f_{s},P\right)-\epsilon$,
where $C_{\mathrm{u}}\left(f_{s},P\right)$ is defined in (\ref{eq:GeneralCapacityUpperBound}).

\end{theorem}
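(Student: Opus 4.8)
The plan is to obtain the single-branch scheme by degrading the optimal filterbank of Theorem~\ref{thm:OptimalSamplingGeneralSampledCapacity} into an \emph{equal-rate} filterbank and then invoking the equivalence between equal-rate filterbank sampling and single-branch modulation established in \cite{ChenGolEld2010}. First I would reduce to finitely many bands. Since Assumption (\ref{eq:AssumptionNoise}) guarantees $\int_{B_{\mathrm{m}}}|H(f)|^{2}/\mathcal{S}_\eta(f)\,\mathrm{d}f<\infty$, the per-band contributions $\int_{B_i}|H|^{2}/\mathcal{S}_\eta$ form a convergent sum, so keeping only finitely many intervals $B_i$, $i\in\mathcal{X}_0\subset\mathcal{X}$, discards a tail of arbitrarily small total mass. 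Using $\log x\le x$, the capacity carried by that tail is bounded by $\tfrac{\nu}{2}\int_{\mathrm{tail}}|H|^{2}/\mathcal{S}_\eta$, which can be made smaller than $\epsilon/2$. On the finite collection, filtering out exactly $\cup_{i\in\mathcal{X}_0}B_i$ and sampling as in Theorem~\ref{thm:OptimalSamplingGeneralSampledCapacity} still achieves at least $C_{\mathrm{u}}(f_s,P)-\epsilon/2$.

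Next I would subdivide each retained $B_i$ into sub-intervals of a common width $\delta=f_s/N$. Because $|H(f)|^{2}/\mathcal{S}_\eta(f)$ is constant on $B_i$, extracting a width-$\delta$ sub-band and sampling it at rate $\delta$ is Nyquist-rate sampling for that sub-band, so reconstructing $B_i$ from the concatenated sub-band sequences is lossless; the subdivision does not change the sampled capacity. This produces a filterbank in which every branch samples at the identical rate $\delta$ — precisely the structure for which \cite{ChenGolEld2010} establishes equivalence to a single branch of modulation and filtering. Invoking that result, I would choose a prefilter $p(t)$, a periodic modulating sequence $q(t)$ with period $T_q=1/\delta$ whose Fourier coefficients shift the targeted sub-bands into disjoint baseband slots, a postfilter $s(t)$, and a uniform sampler at the aggregate rate, so that the single-branch system $\mathcal{P}_{\mathrm{MF}}$ produces, up to an invertible linear map, the same sample set as the equal-rate filterbank. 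The flatness of the SNR on each $B_i$ is exactly what makes the folded noise statistics in each baseband slot match those of the corresponding filterbank branch, so this equivalence is \emph{exact} and preserves capacity.

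The hard part will be the rate-matching (commensurability) issue hidden in the subdivision step: the widths $\mu(B_i)$ are in general incommensurable, so no single $\delta$ tiles every retained $B_i$ exactly, and the width-$\delta$ slabs therefore extract a perturbed set $B_{\mathrm{m}}'$ with $\mu(B_{\mathrm{m}}')=f_s$ rather than $B_{\mathrm{m}}$ itself, with band edges displaced by $O(\delta)$ and leftover pieces of total measure $O(\delta\,|\mathcal{X}_0|)$. I would control this by shrinking $\delta$ (letting $N\to\infty$): $B_{\mathrm{m}}'$ then converges to $B_{\mathrm{m}}$ in measure on the retained region, and since the SNR is piecewise constant there and $C_{\mathrm{u}}(f_s,P)$ is continuous and monotone in $f_s$ (as noted in the Remark following Theorem~\ref{thm:GeneralSampledCapacity}), the water-filling capacity over $B_{\mathrm{m}}'$ converges to that over $B_{\mathrm{m}}$; thus the perturbation costs at most $\epsilon/2$ once $\delta$ is small enough. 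Here the flatness hypothesis is essential rather than cosmetic: displacing a boundary within a flat region merely reassigns frequencies of identical SNR and incurs no capacity loss. Summing the two error contributions gives $C^{\mathcal{P}_{\mathrm{MF}}}(P)\ge C_{\mathrm{u}}(f_s,P)-\epsilon$, and since $\mathcal{P}_{\mathrm{MF}}$ is a single-branch time-preserving system of sampling rate $f_s$, the claim follows.
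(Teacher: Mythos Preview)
Your proposal is correct and follows essentially the same approach as the paper: approximate $B_{\mathrm{m}}$ by finitely many non-overlapping intervals of a common width, then invoke the equal-rate filterbank/single-branch modulation equivalence from \cite[Section V.D]{ChenGolEld2010}. The paper's proof is terser (it simply asserts that such an equal-measure approximation exists and defers the construction to \cite{ChenGolEld2010}), whereas you spell out the finite truncation, the $\delta$-subdivision, and the commensurability issue explicitly; but the underlying argument is the same.
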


\begin{IEEEproof}It is straightforward to see that there exists a
set of non-overlapping intervals $\left\{ \tilde{B}_{k}\right\} $
each with \emph{equal measure} that can approximate the original sets
$\left\{ B_{l}\mid1\leq l\leq n\right\} $ arbitrarily well. Employing
the sampling method described in \cite[Section V.D]{ChenGolEld2010}
achieves a sampled capacity arbitrarily close to $C_{\mathrm{u}}\left(f_{s},P\right)$.\end{IEEEproof}

A channel of physical interest can often be approximated as piecewise
constant over frequency in this way. Given the maximizing frequency
set $B_{\text{m}}$, the sampling structure suggested in \cite[Section V.D]{ChenGolEld2010}
first suppresses the frequency components outside $B_{\text{m}}$
using an optimal LTI prefilter. A modulation module is then applied
to scramble all frequency components within $B_{\text{m}}$. The aliasing
effect can be significantly mitigated by appropriate choices of modulation
weights for different spectral subbands. We then employ another band-pass
filter to suppress out-of-band signals, and sample the output using
a pointwise uniform sampler. Compared with filterbank sampling, a
single branch of modulation and filtering only requires the design
of a lowpass filter, a band-pass filter, and a multiplication module,
which might be of lower complexity to implement than a filter bank.

\section{Discussion\label{sec:Discussion}}

Some properties of the capacity and capacity-achieving strategies
are now discussed.
\begin{itemize}
\item \textbf{Monotonicity}. It can be seen from Theorem \ref{thm:GeneralSampledCapacity}
that increasing the sampling rate from $f_{s}$ to $\tilde{f_{s}}$
results in another frequency set $\tilde{B}_{\text{m}}$ of support
size $\tilde{f}_{s}$ that has the highest SNRs. By definition, the
original frequency set $B_{\text{m}}$ must be a subset of $\tilde{B}_{\text{m}}$.
Therefore, the sampled capacity with rate $\tilde{f}_{s}$ is no lower
than the sampled capacity with rate $f_{s}$.
\item \textbf{Irregular sampling} \textbf{set}. Sampling with irregular
nonuniform sampling sets, while requiring complicated reconstruction
and interpolation techniques \cite{AldGro2001}, does not outperform
filterbank or modulation bank sampling with regular uniform sampling
sets in maximizing capacity for the channels considered herein. 
\item \textbf{Alias suppression}. We have seen that aliasing does not allow
a higher capacity to be achieved when perfect channel state information
is known at both the transmitter and the receiver. The optimal sampling
method corresponds to the optimal alias-suppression strategy. This
is in contrast to the benefits obtained through random mixing of spectral
components in many sub-Nyquist sampling schemes with unknown signal
supports. When we are allowed to jointly optimize over both input
and sampling schemes with perfect channel state information, scrambling
of spectral contents does not in general maximize capacity.
\item \textbf{Perturbation of the sampling} \textbf{set}. If optimal filterbank
or modulation sampling is employed, then mild perturbation of post-filtering
uniform sampling sets does not degrade the sampled capacity. One surprisingly
general example was proved by Kadec \cite{Kadec1964}. Suppose that
a sampling rate $\hat{f}_{s}$ is used in any branch and the sampling
set satisfies $\left|\hat{t}_{n}-n/\hat{f}_{s}\right|\leq\hat{f}_{s}/4$.
Then $\left\{ \exp\left(j2\pi\hat{t}_{n}f\right)\mid n\in\mathbb{Z}\right\} $
also forms a Riesz basis of $\mathcal{L}_{2}(-\hat{f}_{s}/2,\hat{f}_{s}/2)$,
thereby preserving information integrity. These nonuniform sampling
and reconstruction schemes, while generally complicated to implement
in practice, significantly broaden the class of sampling mechanisms
that allow perfect reconstruction of bandlimited signals, and indicate
stability and robustness of the sampling sets. Kadec's result immediately
implies that the sampled capacity is invariant under mild perturbation
of the sampling sets.
\item \textbf{Hardware implementation}. When the sampling rate is increased
from $f_{s1}$ to $f_{s2}$, we need only to insert an additional
filter bank of overall sampling rate $f_{s2}-f_{s1}$ to extract out
another set of spectral components with bandwidth $f_{s2}-f_{s1}$.
Thus, the adjustment of the sampling hardware system for filterbank
sampling is incremental with no need to rebuild the whole system from
scratch.
\item \textbf{Spectrum Blind Sampling}. This paper focuses on the scenario
with perfect channel state information known at the transmitter, the
receiver, and the sampler. This is different from the setting of compressed
sensing, where the signal spectrum is unknown to the sampler and the
decoder. In fact, the alias-suppressing sampler requires knowledge
of the channel. If this knowledge is not available, then alias-suppressing
samplers might result in low capacity. When the sampler is spectrum
blind and the channel realization is uncertain, random sampling that
scrambles the spectral contents \cite{MisEld2010Theory2Practice,GedTurEld2010}
outperforms alias-suppressing sampling in minimizing the rate loss
due to channel-independent sampling design. We investigate the capacity
of sub-Nyquist sampled channels with unknown CSI in our companion
paper \cite{ChenEldarGoldsmith2013minimax}. 
\end{itemize}

\section{Concluding Remarks\label{sec:Concluding-Remarks}}

We developed the maximum achievable information rate for a general
class of right-invertible time-preserving nonuniform sampling methods
under a sampling rate constraint. It is shown that nonuniformly spaced
sampling sets, while requiring fairly complicated reconstruction /
approximation algorithms, do not provide any capacity gain. Encouragingly,
filterbank sampling with varied sampling rates on different branches,
or a single branch of sampling with modulation and filtering, are
sufficient to achieve the sampled channel capacity. In addition, both
strategies suppress aliasing effects. In terms of maximizing capacity,
there is no need to employ irregular sampling sets that are more complicated
to implement in practical hardware systems. The resulting sampled
capacity is shown to be monotonically increasing in sampling rate.

Our results in this paper are based on the assumption that perfect
channel state information is known at both the transmitter and the
receiver. It remains to be seen what sampling strategies can optimize
information rates when only partial channel state information is known.
It is unclear whether anti-aliasing methods are still optimal in maximizing
capacity. Moreover, when it comes to the multi-user information theory
setting, anti-aliasing methods might not outperform other spectral-mixing
approaches in the entire capacity region. It would be interesting
to see how to optimize the sampling schemes in multi-user channels,
for example, joint sampling schemes in sampled multiple access analog
channels.

\section*{Acknowledgments}

The authors would like to thank Prof. Young-han Kim and the anonymous
reviewers for extensive and constructive comments that greatly improved
the paper. 

\appendices

\section{Proof of Theorem \ref{thm:GeneralSampledCapacity}\label{sec:Proof-Architecture-of-Theorem-General-Capacity}}

For simplicity of presentation, we assume throughout that the noise
is white, i.e. $S_{\eta}\left(f\right)\equiv1$. In fact, under the
assumption (\ref{eq:AssumptionNoise}), we can always split the channel
filter $H\left(f\right)$ into two parts with respective frequency
response $H\left(f\right)/\sqrt{\mathcal{S}_{\eta}(f)}$ and $\sqrt{\mathcal{S}_{\eta}(f)}$.
Since the colored noise is equivalent to a white Gaussian noise passed
through a filter with transfer function $\sqrt{\mathcal{S}_{\eta}(f)}$,
the original system can be redrawn as in Fig. \ref{fig:FiniteDurationColorNoise}.
The filter with frequency response $\sqrt{\mathcal{S}_{\eta}(f)}$
can then be incorporated into the preprocessing system to generate
a new time-preserving preprocessor.

\begin{figure}[htbp]
\begin{centering}
\textsf{\includegraphics[scale=0.4]{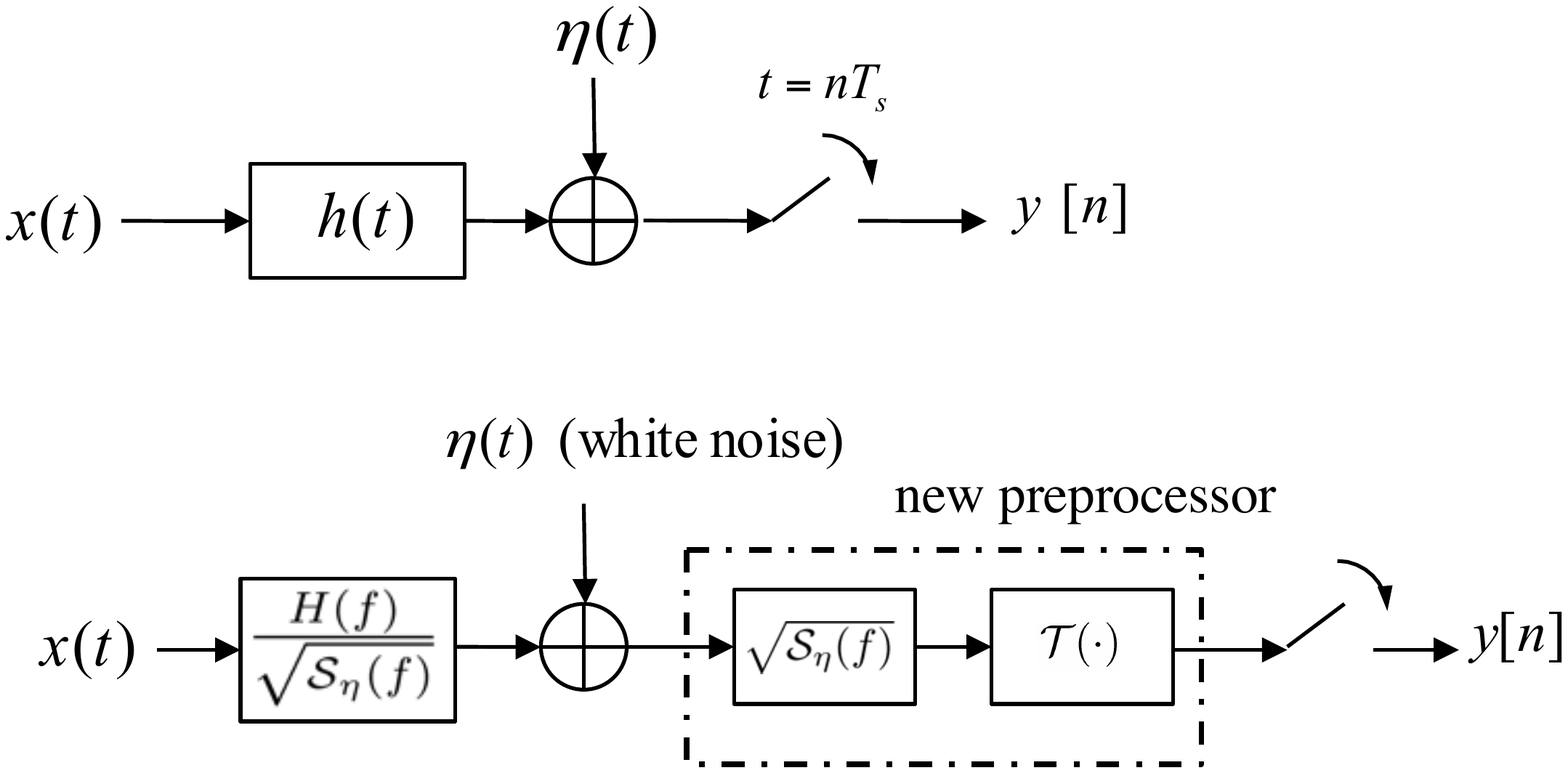}} 
\par\end{centering}

\caption{\label{fig:FiniteDurationColorNoise}Equivalent representation of
sampling systems in the presence of colored noise.}
\end{figure}

Recall that Proposition \ref{factMultibranchSinglebranch} indicates
that any multibranch sampling system can be converted into a single
branch sampling system without loss of information. As a result, we
restrict our proof to the class of single branch sampling systems.

\subsection{Capacity under Periodic Sampling Systems}

Recall that for a time varying system, the impulse response $q(t,\tau)$
is defined as the output seen at time $t$ due to an impulse in the
input at time $\tau$. The sampling system may not be time-invariant,
but a broad class of sampling mechanisms applied in practice exhibit
block-wise time invariance properties. Specifically, we introduce
the notion of periodic sampling systems as follows.

\begin{definition}[{\bf Periodic Sampling}]\label{definition-PeriodicSampling}Consider
a sampling system with a preprocessing system of impulse response
$q(t,\tau)$ followed by a sampling set $\Lambda=\left\{ t_{k}\mid k\in\mathbb{Z}\right\} $.
A linear sampling system is said to be periodic with period $T_{q}$
and sampling rate $f_{s}$ ($f_{s}T_{q}\in\mathbb{Z}$) if the preprocessing
system is periodic with period $T_{q}$ and the sampling set satisfies
\begin{align}
t_{k+f_{s}T_{q}} & =t_{k}+T_{q},\quad\forall k\in\mathbb{Z}.\label{eq:PeriodicSamplingSequence}
\end{align}
\end{definition}

\begin{figure*}[tbph]
\begin{centering}
\textsf{\includegraphics[scale=0.48]{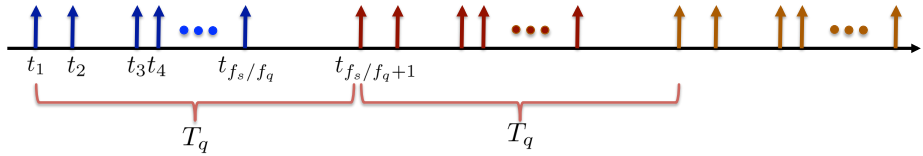}} 
\par\end{centering}

\caption{\label{fig:PeriodicSamplingGrid} The sampling set of a periodic sampling
system with period $1/f_{q}$ and sampling rate $f_{s}$.}
\end{figure*}

In short, a periodic sampling system consists of a periodic preprocessor
followed by a pointwise sampler with a periodic sampling set, as illustrated
in Fig. \ref{fig:PeriodicSamplingGrid}. Since the impulse response
can be arbitrary within a period, this allows us to model multibranch
sampling methods with each branch using the same sampling rate. Periodic
sampling schemes subsume as special cases a broad class of sampling
techniques, e.g. sampling via filter banks, sampling via periodic
modulation, and recurrent nonuniform sampling \cite{Yen1956,EldOpp2000}. 

The periodicity of the sampling system renders the linear operator
associated with the whole system to be block Toeplitz. The asymptotic
spectral properties of block Toeplitz operators (e.g. \cite{Tilli98})
guarantee the existence of $\lim_{T\rightarrow\infty}C_{T}^{\mathcal{P}}\left(f_{s},P\right)$
for a given periodic sampling system $\mathcal{P}$, and allows a
capacity expression to be obtained in terms of the Fourier representation.
Denote by $Q_{k}(f)$ the Fourier transform of the impulse response
$q(t_{k},t_{k}-t)$ of the sampling system, i.e. $Q_{k}(f):=\int_{-\infty}^{\infty}q(t_{k},t_{k}-t)\exp(-j2\pi ft)\mathrm{d}t$.
We further introduce an $f_{s}T_{q}\times\infty$ dimensional Fourier
series matrix $\boldsymbol{F}_{q}\left(f\right)$ associated with
the sampling system, and another infinite diagonal square matrix $\boldsymbol{F}_{h}\left(f\right)$
associated with the channel response. For all $m,l\in\mathbb{Z}$
and $1\leq k\leq f_{s}T_{q}$, we set
\begin{equation}
\begin{cases}
\left(\boldsymbol{F}_{q}\right)_{k,l}\left(f\right) & :=Q_{k}\left(f+lf_{q}\right),\\
\left(\boldsymbol{F}_{h}\right)_{l,l}\left(f\right) & :=\frac{H\left(f+lf_{q}\right)}{\sqrt{\mathcal{S}\left(f+lf_{q}\right)}}.
\end{cases}
\end{equation}
We can then express the sampled analog capacity for a given periodic
system $\mathcal{P}$ in closed form as follows%
\footnote{We note that a periodic sampling system can be equivalently converted
to a filterbank sampling system, and hence \cite[Theorem 4]{ChenGolEld2010}
immediately leads to (\ref{eq:CapacityPeriodicSampling}). While the
analysis framework in \cite{ChenGolEld2010} relies on a discretization
argument, we provide here a more general approach based on operator
analysis that no longer requires discretization. More importantly,
this approach allows to accommodate any sampled channel with integrable
$\left|H(f)Q_{k}(f)\right|^{2}/\mathcal{S}_{\eta}(f)$, which subsumes
channels under\emph{ }band-limited filterbank sampling. In comparison,
\cite{ChenGolEld2010} requires $\mathcal{F}^{-1}\left(\left|H(f)Q_{k}(f)\right|/\sqrt{\mathcal{S}_{\eta}(f)}\right)$
to lie in $\mathcal{L}_{1}$, which is far more restrictive (e.g.
it does not subsume band-limited filterbank sampling as the sinc function
is not absolutely integrable).%
}.

\begin{lem}\label{thm:PeriodicSampledCapacity}Suppose the sampling
system $\mathcal{P}$ is periodic with period $T_{q}$ and sampling
rate $f_{s}$, where $f_{s}T_{q}\in\mathbb{Z}$. Let $f_{q}:=1/T_{q}$.
Assume that $\mathcal{S}_{\eta}(f)\neq0$ for every $f$, $\left|H(f)Q_{k}(f)\right|^{2}/\mathcal{S}_{\eta}(f)$
is bounded and integrable for all $1\leq k\leq f_{s}T_{q}$. We also
assume that the smallest singular value of $\boldsymbol{F}_{q}\boldsymbol{F}_{q}^{*}\left(f\right)$
is uniformly bounded away from 0 over all $f\in\left[-f_{q}/2,f_{q}/2\right]$. 

(1) The sampled channel capacity under optimal power allocation is
given by
\begin{align}
C^{\mathcal{P}}\left(P\right) & =\frac{1}{2}{\displaystyle \int}_{-f_{q}/2}^{f_{q}/2}\sum_{i=1}^{f_{s}T_{q}}\log\left[\nu\cdot\lambda_{i}\right]^{+}\mathrm{d}f,\label{eq:CapacityPeriodicSampling}
\end{align}
where $\nu$ satisfies
\[
{\displaystyle \int}_{-f_{q}/2}^{f_{q}/2}\sum_{i=1}^{f_{s}T_{q}}\left[\nu-\frac{1}{\lambda_{i}}\right]^{+}\mathrm{d}f=P.
\]
Here, $\lambda_{i}$ denotes the $i$th largest eigenvalue of the
matrix $\left(\boldsymbol{F}_{q}\boldsymbol{F}_{q}^{*}\right)^{-\frac{1}{2}}\boldsymbol{F}_{q}\boldsymbol{F}_{h}\boldsymbol{F}_{h}^{*}\boldsymbol{F}_{q}^{*}\left(\boldsymbol{F}_{q}\boldsymbol{F}_{q}^{*}\right)^{-\frac{1}{2}}$.

(2) Suppose further that $H(f)=0$ for any $f\notin[0,W]$, and that
the transmitter employs equal power allocation over $[0,W]$. Then
the sampled channel capacity is given by
\begin{align}
C_{\mathrm{eq}}^{\mathcal{P}}\left(P\right) & ={\displaystyle \int}_{-f_{q}/2}^{f_{q}/2}\frac{1}{2}\log\left(\boldsymbol{I}+\frac{P}{W}\left(\boldsymbol{F}_{q}\boldsymbol{F}_{q}^{*}\right)^{-\frac{1}{2}}\boldsymbol{F}_{q}\right.\nonumber \\
 & \quad\quad\quad\left.\cdot\boldsymbol{F}_{h}\boldsymbol{F}_{h}^{*}\boldsymbol{F}_{q}^{*}\left(\boldsymbol{F}_{q}\boldsymbol{F}_{q}^{*}\right)^{-\frac{1}{2}}\right)\mathrm{d}f.\label{eq:CapacityPeriodicNonWaterFilling}
\end{align}
\end{lem}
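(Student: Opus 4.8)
The plan is to reduce the capacity computation to a per-frequency MIMO water-filling problem by exploiting the block-Toeplitz structure that periodicity imposes, and then to pass to the continuum limit using the asymptotic spectral theory of block-Toeplitz operators. First I would fix a horizon $T$ with $2T/T_{q}\in\mathbb{Z}$, incorporate the sampling system into the channel, and regard the resulting map from $x([-T,T])$ to the finite sample family $\{y[n]\}_{[-T,T]}$ as a linear Gaussian channel. A Karhunen--Loeve decomposition diagonalizes it into parallel scalar Gaussian sub-channels, so that $C_{T}^{\mathcal{P}}(P)$ equals a water-filling expression over the singular values of the associated operator. Because both the preprocessor (period $T_{q}$) and the sampling set (\ref{eq:PeriodicSamplingSequence}) are periodic, this operator is block-Toeplitz in the block index running over the $2Tf_{q}$ periods, each block carrying the $f_{s}T_{q}$ samples of one period.

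Next I would identify the symbol of this block-Toeplitz operator. Passing to the frequency domain and collecting the aliased spectral components $\left\{X(f+lf_{q})\right\}_{l\in\mathbb{Z}}$ at each $f\in[-f_{q}/2,f_{q}/2]$, the input--output relation becomes, for each such $f$, a finite MIMO Gaussian channel $\mathbf{y}(f)=\boldsymbol{F}_{q}(f)\left(\boldsymbol{F}_{h}(f)\mathbf{x}(f)+\mathbf{n}(f)\right)$ with white $\mathbf{n}(f)$ (the noise is already whitened by the reduction at the start of this appendix). Finiteness of the entries of $\boldsymbol{F}_{q}\boldsymbol{F}_{q}^{*}$ is guaranteed by integrability of $\left|H(f)Q_{k}(f)\right|^{2}/\mathcal{S}_{\eta}(f)$, and the uniform lower bound on its smallest singular value makes the noise-whitening factor $(\boldsymbol{F}_{q}\boldsymbol{F}_{q}^{*})^{-1/2}$ bounded; applying it yields the effective Gram matrix $(\boldsymbol{F}_{q}\boldsymbol{F}_{q}^{*})^{-1/2}\boldsymbol{F}_{q}\boldsymbol{F}_{h}\boldsymbol{F}_{h}^{*}\boldsymbol{F}_{q}^{*}(\boldsymbol{F}_{q}\boldsymbol{F}_{q}^{*})^{-1/2}$, whose $f_{s}T_{q}$ eigenvalues are the $\lambda_{i}(f)$ appearing in the statement.

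Then I would invoke the generalized Szeg\H{o} spectral theory for block-Toeplitz operators (e.g. \cite{Tilli98}) to show that, as $T\rightarrow\infty$, the empirical distribution of the singular values of the truncated operator converges to that of the symbol's eigenvalues $\left\{\lambda_{i}(f)\right\}$ over $f\in[-f_{q}/2,f_{q}/2]$. With the $1/(2T)$ normalization yielding the $\frac{1}{2}\int_{-f_{q}/2}^{f_{q}/2}$ prefactor, this identifies $\lim_{T\rightarrow\infty}C_{T}^{\mathcal{P}}(P)$ with the continuous water-filling over the continuum of modes $\left\{\lambda_{i}(f)\right\}$, producing (\ref{eq:CapacityPeriodicSampling}) with the water level $\nu$ fixed by the power integral. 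Part (2) is the same computation with the optimization removed: substituting the fixed equal-power input covariance $(P/W)\boldsymbol{I}$ on $[0,W]$, each per-frequency mutual information equals $\frac{1}{2}\log\det\left(\boldsymbol{I}+\frac{P}{W}\cdot\text{Gram}\right)$, and integrating over $f$ gives (\ref{eq:CapacityPeriodicNonWaterFilling}); the support assumption $H(f)=0$ off $[0,W]$ ensures only those aliased components carry signal.

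The main obstacle is the continuum-limit step: here the symbol is operator-valued with an infinite aliasing (``inner'') dimension, so the classical finite-dimensional Szeg\H{o} theory must be extended to this setting and paired with a continuity argument showing that the water-filling (respectively $\log\det$) functional passes through the limiting spectral distribution. I expect this is exactly where the integrability and uniform-singular-value hypotheses do the work, since they render the Gram operator bounded and the relevant traces absolutely convergent, and it is also where one must justify interchanging $\lim_{T\rightarrow\infty}$ with the supremum over power allocations, i.e. showing the finite-horizon water-filling solutions converge to the continuous one.
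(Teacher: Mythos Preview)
Your proposal is correct and takes essentially the same approach as the paper: finite-horizon Karhunen--Lo\`eve decomposition, block-Toeplitz structure from periodicity, noise whitening, and the block-Toeplitz asymptotic spectral theory \cite{Tilli98} to pass to the limit; the paper merely organizes the symbol identification through time-domain correlation functions $\mathcal{R}_{hq},\mathcal{R}_{q}$ and a separate lemma (Lemma~\ref{lemma-Fhq-Fqq}) computing their Fourier series as $\boldsymbol{F}_{q}\boldsymbol{F}_{h}\boldsymbol{F}_{h}^{*}\boldsymbol{F}_{q}^{*}$ and $\boldsymbol{F}_{q}\boldsymbol{F}_{q}^{*}$. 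Your stated obstacle is not actually one: the block-Toeplitz operator whose spectrum you need is the one on the \emph{sample} sequence, with $f_{s}T_{q}\times f_{s}T_{q}$ (finite) blocks---the infinite aliasing dimension is already contracted out once you form $\boldsymbol{F}_{q}\boldsymbol{F}_{q}^{*}$ and $\boldsymbol{F}_{q}\boldsymbol{F}_{h}\boldsymbol{F}_{h}^{*}\boldsymbol{F}_{q}^{*}$---so the standard finite-block Szeg\H{o} theory applies without extension.
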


\begin{IEEEproof}See Appendix \ref{sec:Proof-of-Theorem-Periodic-Sampled-Capacity}.
\end{IEEEproof}

In Lemma \ref{thm:PeriodicSampledCapacity}, $\nu$ is the water-level
with respect to the optimal water-filling power allocation strategy
over the eigenvalues of the matrix $\left(\boldsymbol{F}_{q}\boldsymbol{F}_{q}^{*}\right)^{-\frac{1}{2}}\boldsymbol{F}_{q}\boldsymbol{F}_{h}\boldsymbol{F}_{h}^{*}\boldsymbol{F}_{q}^{*}\left(\boldsymbol{F}_{q}\boldsymbol{F}_{q}^{*}\right)^{-\frac{1}{2}}$.
The capacity expression (\ref{eq:CapacityPeriodicSampling}) admits
a simple upper bound, as stated below.

\begin{corollary}\label{cor:PeriodicCapacityUpperBound}(a) Consider
the setup and assumptions in Lemma \ref{thm:PeriodicSampledCapacity}.
Under all periodic sampling systems with period $T_{q}$ and sampling
rate $f_{s}$, the sampled channel capacity can be bounded above by
\begin{equation}
C_{f_{q}}\left(f_{s},P\right)=\frac{1}{2}{\displaystyle \int}_{-f_{q}/2}^{f_{q}/2}\sum_{i=1}^{f_{s}T_{q}}\left[\log\left(\nu_{\mathrm{p}}\lambda_{i}\left\{ \boldsymbol{F}_{h}\boldsymbol{F}_{h}^{*}\right\} \right)\right]^{+}\mathrm{d}f,\label{eq:UpperBoundPeriodicSamplingSameRate}
\end{equation}
where $\nu_{\mathrm{p}}$ satisfies
\[
{\displaystyle \int}_{-f_{q}/2}^{f_{q}/2}\sum_{i=1}^{f_{s}T_{q}}\left[\nu_{\mathrm{p}}-\frac{1}{\lambda_{i}\left\{ \boldsymbol{F}_{h}\boldsymbol{F}_{h}^{*}\right\} }\right]^{+}\mathrm{d}f=P.
\]

(b) Suppose that there exists a frequency set $B_{\mathrm{m}}$ that
satisfies $\mu\left(B_{\mathrm{m}}\right)=f_{s}$ and
\[
{\displaystyle \int}_{f\in B_{\mathrm{m}}}\frac{\left|H(f)\right|^{2}}{\mathcal{S}_{\eta}(f)}\mathrm{d}f=\sup_{B:\mu\left(B\right)=f_{s}}{\displaystyle \int}_{f\in B}\frac{\left|H(f)\right|^{2}}{\mathcal{S}_{\eta}(f)}\mathrm{d}f.
\]
Then
\begin{equation}
C_{f_{q}}\left(f_{s},P\right)\leq C_{\mathrm{u}}\left(f_{s},P\right),\label{eq:CorGeneralUpperBound}
\end{equation}
where $C_{\mathrm{u}}\left(f_{s},P\right)$ is given by (\ref{eq:GeneralCapacityUpperBound}).

{} \end{corollary}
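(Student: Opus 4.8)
The plan is to reduce both parts to a single monotonicity principle for the water-filling functional. Write $\mathsf{WF}\left(P;\{\mu_j\}\right)$ for the optimal value of $\max_{p_j\ge 0,\ \sum_j p_j\le P}\sum_j\tfrac12\log(1+p_j\mu_j)$; this is exactly the water-filling capacity appearing in Lemma \ref{thm:PeriodicSampledCapacity}, and it is nondecreasing in each gain $\mu_j$ at fixed $P$ (for a fixed feasible $p$ each summand is nondecreasing in $\mu_j$, and a supremum of nondecreasing functions is nondecreasing). In integral form the same statement reads: if two nonnegative gain profiles on measure spaces of equal total measure have decreasing rearrangements $g_1^*\le g_2^*$ pointwise, then $\mathsf{WF}$ over the first is at most $\mathsf{WF}$ over the second, since the functional depends only on the decreasing rearrangement and the integrand is monotone in the gain for each fixed power density.

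For part (a) I would start from the closed-form expression of Lemma \ref{thm:PeriodicSampledCapacity}: for a fixed periodic system $\mathcal P$, $C^{\mathcal P}(P)$ is $\mathsf{WF}$ applied to the eigenvalues $\lambda_i$ of $M(f):=(\boldsymbol F_q\boldsymbol F_q^*)^{-1/2}\boldsymbol F_q\boldsymbol F_h\boldsymbol F_h^*\boldsymbol F_q^*(\boldsymbol F_q\boldsymbol F_q^*)^{-1/2}$. Setting $A:=\boldsymbol F_q^*(\boldsymbol F_q\boldsymbol F_q^*)^{-1/2}$, one checks $A^*A=\boldsymbol I_{f_sT_q}$, so $A$ is an isometry of $\mathbb C^{f_sT_q}$ into $\ell^2(\mathbb Z)$ and $M=A^*(\boldsymbol F_h\boldsymbol F_h^*)A$ is a compression of the diagonal operator $\boldsymbol F_h\boldsymbol F_h^*$, whose diagonal entries are the aliased SNRs $|H(f+lf_q)|^2/\mathcal S_\eta(f+lf_q)$. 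The key inequality $\lambda_i(M(f))\le\lambda_i(\boldsymbol F_h\boldsymbol F_h^*(f))$ then follows from the Courant--Fischer min-max characterization: for any $i$-dimensional $S\subseteq\mathbb C^{f_sT_q}$ the image $A(S)$ is an $i$-dimensional subspace of $\ell^2(\mathbb Z)$ on which the Rayleigh quotient of $\boldsymbol F_h\boldsymbol F_h^*$ coincides with that of $M$ (as $A$ preserves norms), so maximizing over the smaller family $\{A(S)\}$ cannot exceed maximizing over all $i$-dimensional subspaces of $\ell^2(\mathbb Z)$; this is precisely the Poincaré separation inequality. Feeding $\lambda_i(M)\le\lambda_i(\boldsymbol F_h\boldsymbol F_h^*)$ into the monotonicity of $\mathsf{WF}$ gives $C^{\mathcal P}(P)\le C_{f_q}(f_s,P)$ uniformly over $\mathcal P$, which is (a).

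For part (b) the idea is to recognize $C_{f_q}(f_s,P)$ as water-filling over a particular measure-$f_s$ spectral set and invoke the rearrangement form of the principle. Writing $\rho:=|H|^2/\mathcal S_\eta$, at each base frequency $f\in[-f_q/2,f_q/2)$ the bound $C_{f_q}$ retains the $f_sT_q$ largest among the aliased values $\{\rho(f+lf_q)\}_{l\in\mathbb Z}$. Expressing each $f'\in\mathbb R$ uniquely as $f'=f+lf_q$, the retained frequencies form a set $\tilde B\subseteq\mathbb R$ with $\mu(\tilde B)=\int_{-f_q/2}^{f_q/2}(f_sT_q)\,\mathrm{d}f=f_s$ (since $f_qT_q=1$), and $C_{f_q}(f_s,P)=\mathsf{WF}\left(P;\rho|_{\tilde B}\right)$. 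Now $B_{\mathrm m}$, by its defining maximization of $\int_B\rho$, is a superlevel set of $\rho$ of measure $f_s$, so its decreasing SNR rearrangement equals $\rho^*(s)$, $s\in[0,f_s]$, the decreasing rearrangement of $\rho$ on all of $\mathbb R$. Since $\tilde B\subseteq\mathbb R$ has the same measure $f_s$, for every threshold $t$ we have $\mu\{f'\in\tilde B:\rho(f')>t\}\le\mu\{f'\in\mathbb R:\rho(f')>t\}$, whence the decreasing rearrangement of $\rho|_{\tilde B}$ is pointwise dominated by $\rho^*$ on $[0,f_s]$. The monotonicity principle then yields $C_{f_q}(f_s,P)=\mathsf{WF}\left(P;\rho|_{\tilde B}\right)\le\mathsf{WF}\left(P;\rho|_{B_{\mathrm m}}\right)=C_{\mathrm u}(f_s,P)$.

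The main obstacle I anticipate is making the eigenvalue comparison in (a) fully rigorous in the infinite-dimensional setting: one must ensure $\boldsymbol F_h\boldsymbol F_h^*$ has a well-ordered discrete spectrum with a genuine $i$th-largest eigenvalue (guaranteed by the decay of the aliased SNRs under the standing assumptions, so they accumulate only at $0$), that $(\boldsymbol F_q\boldsymbol F_q^*)^{-1/2}$ is bounded so that $A$ is a genuine bounded isometry (this is where the uniform lower bound on the singular values of $\boldsymbol F_q\boldsymbol F_q^*$ assumed in Lemma \ref{thm:PeriodicSampledCapacity} enters), and that the min-max argument and the single global water-level are handled uniformly in $f$ before integrating. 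Part (b) is comparatively routine once $\tilde B$ and its measure are identified, the only care being the bathtub characterization of $B_{\mathrm m}$ and possible ties at the level set, which are absorbed into the rearrangement without affecting the integral.
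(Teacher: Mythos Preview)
Your proposal is correct and follows essentially the same route as the paper: for (a) you derive the eigenvalue inequality $\lambda_i(M)\le\lambda_i(\boldsymbol F_h\boldsymbol F_h^*)$ via Courant--Fischer/Poincar\'e separation (the paper obtains the identical inequality by citing \cite[Proposition~1]{ChenGolEld2010}) and then use monotonicity of water-filling; for (b) you recognize $C_{f_q}$ as water-filling over a measure-$f_s$ set $\tilde B$ and compare it to $B_{\mathrm m}$, which is exactly the paper's argument, only stated more carefully via decreasing rearrangements where the paper simply asserts that no set of spectral size $f_s$ can outperform $B_{\mathrm m}$. The technical caveats you flag (well-definedness of $\lambda_i(\boldsymbol F_h\boldsymbol F_h^*)$, boundedness of $(\boldsymbol F_q\boldsymbol F_q^*)^{-1/2}$, a single global water level) are the right ones to check and are all covered by the standing assumptions of Lemma~\ref{thm:PeriodicSampledCapacity}.
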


\begin{IEEEproof}(a) Following the same steps as in \cite[Proposition 1]{ChenGolEld2010},
we can see that the $i$th largest eigenvalue satisfies
\[
\lambda_{i}\left\{ \left(\boldsymbol{F}_{q}\boldsymbol{F}_{q}^{*}\right)^{-\frac{1}{2}}\boldsymbol{F}_{q}\boldsymbol{F}_{h}\boldsymbol{F}_{h}^{*}\boldsymbol{F}_{q}^{*}\left(\boldsymbol{F}_{q}\boldsymbol{F}_{q}^{*}\right)^{-\frac{1}{2}}\right\} \leq\lambda_{i}\left(\boldsymbol{F}_{h}\boldsymbol{F}_{h}^{*}\right),
\]
which immediately leads to (\ref{eq:UpperBoundPeriodicSamplingSameRate}). 

(b) For any given $f_{q}$, the upper bound (\ref{eq:UpperBoundPeriodicSamplingSameRate})
is obtained by extracting out a certain frequency set $B$ that has
measure $\mu(B)=f_{s}$ and suppressing all spectral components outside
$B$. By our definition of $B_{m}$, any choice of $B$ with spectral
size $f_{s}$ will not outperform $B_{m}$. Hence, choosing $B=B_{m}$
leads to a universal upper bound.\end{IEEEproof}

Corollary \ref{cor:PeriodicCapacityUpperBound} reveals that the capacity
under any periodic sampling system, no matter what its period is,
cannot exceed the upper bound $C_{\mathrm{u}}\left(f_{s},P\right)$
in Theorem \ref{thm:GeneralSampledCapacity}. Our remaining proof
is then established by observing that any aperiodic sampling system
can be related to a periodic sampling system by truncation and periodization,
as elaborated in the next two subsections. 

\subsection{General Upper Bound: Finite-duration $h(t)$}

In this subsection, we focus on the channel whose impulse response
is of finite duration $2L_{0}$, i.e. 
\[
h(t)=0,\quad\forall t\text{ }(|t|>L_{0}).
\]
Our goal is to prove that the capacity upper bound (\ref{eq:GeneralCapacityUpperBound})
holds for this type of channel. 

For any transmission block of duration $2T$, we call the transmit
signal $x(t)$ over this block a codeword (or symbol) of code length
$2T$. The information conveyed through such finite-duration codewords
can be bounded via certain analog channel capacity, as long as we
can preclude inter-symbol interference. The key idea here is to separate
consecutive codewords with a guard zone with sufficient length and
then use capacity-achieving strategies separately for each codeword.
When the code length $2T$ is sufficiently large, the transmission
time wasted on the guard zones becomes negligible, which in turn allows
us to approach the true capacity arbitrarily well. The detailed analysis
proceeds as follows. 

\textbf{Step 1}. Consider an input $x(t)$ that is constrained to
the interval $[-T,T]$. Since $h(t)$ is of finite duration $2L_{0}$,
the channel output $r(t)=h(t)*x(t)+\eta(t)$ will be affected by the
input only when $t\in[-T-L_{0},T+L_{0}]$. Define a \emph{window operator}
and its complement operator such that
\begin{equation}
w_{T}(f(t))=\begin{cases}
f(t),\quad & \text{if }\left|t\right|\leq T+L_{0},\\
0, & \text{else};
\end{cases}
\end{equation}
and
\begin{equation}
w_{T}^{\perp}(f(t))=\begin{cases}
0,\quad & \text{if }\left|t\right|\leq T+L_{0},\\
f(t), & \text{else}.
\end{cases}
\end{equation}
Then for any linear sampling operator $\mathcal{P}$ with impulse
response $q(t,\tau)$, the sampled output is $\mathcal{P}\left(r(t)\right)=\mathcal{P}\left(w_{T}\left(r(t)\right)\right)+\mathcal{P}\left(w_{T}^{\perp}\left(r(t)\right)\right)$.
One can easily observe that the component $\mathcal{P}\left(w_{T}^{\perp}\left(r(t)\right)\right)$
contains no information about $x(t)$, and is statistically \emph{independent}
of $\mathcal{P}\left(w_{T}\left(r(t)\right)\right)$ due to the whiteness
assumption of the noise. In other words, the sampling input outside
the interval $[-T-L_{0},T+L_{0}]$ does not improve capacity at all.
Consequently, it suffices to restrict attention to the class of sampling
systems whose system input is constrained to the interval $[-T-L_{0},T+L_{0}]$.

\textbf{Step 2}. Construct a periodization of the above sampled channel
model with finite input duration. Set the impulse response $q_{T+L_{0}}^{\text{p}}(t,\tau)$
of the preprocessor of the periodized sampling system to be a periodic
extension of $q(t,\tau)$ in the block $[-T-L_{0},T+L_{0}]\times[-T,T]$.
Specifically, if $\tau=k\cdot2\left(T+L_{0}\right)+\tau_{\text{r}}$
for some $k\in\mathbb{Z}$ and $\tau_{\text{r}}\in[-T-L_{0},T+L_{0}]$,
then
\begin{equation}
q_{T+L_{0}}^{\text{p}}(t,\tau)=\begin{cases}
q\left(t-2k(T+L_{0}),\tau_{\text{r}}\right), & \text{if }\left|t-2k(T+L_{0})\right|\\
 & \leq T+L_{0},\\
0, & \text{else.}
\end{cases}\label{eq:ImpulseResponsePeriodizedSystemFiniteDuration}
\end{equation}
Apparently, $q_{T+L_{0}}^{\text{p}}(t,\tau)$ corresponds to a periodic
preprocessing system with period $2\left(T+L_{0}\right)$.

Suppose without loss of generality that the indices of the sample
times that fall in $[-T-L_{0},T+L_{0}]$ are $0,1,\cdots,K-1$, i.e.
$\left\{ k\mid t_{k}\in[-T-L_{0},T+L_{0}]\right\} =\left\{ 0,1,\cdots,K-1\right\} $.
We can then set the sampling set $\Lambda_{T+L_{0}}^{\text{p}}$ of
the periodized system such that for any sampling time $t_{k}\in\Lambda_{T+L_{0}}^{\text{p}}$,
we have
\begin{equation}
t_{k}=t_{k\text{ mod }K}+2(T+L_{0})\cdot\left\lfloor \frac{k}{K}\right\rfloor ,\label{eq:SamplingSetPeriodizedSystemFiniteDuration}
\end{equation}
where $\left\lfloor x\right\rfloor \overset{\Delta}{=}\max\left\{ n\mid n\in\mathbb{Z},n\leq x\right\} $.
Clearly, this forms a periodic sampling set with period $2\left(T+L_{0}\right)$.
The definition of Beurling density ensures that for any $\epsilon>0$,
there exists a $T_{D}$ such that for every $T>T_{D}$,
\[
f_{s}-\epsilon\leq D\left(\Lambda_{T+L_{0}}^{\text{p}}\right)\leq f_{s}+\epsilon.
\]

Due to the finite-duration assumption of $h(t)$, our construction
(\ref{eq:ImpulseResponsePeriodizedSystemFiniteDuration}) guarantees
that the input $x(t)$ within time interval $[2k(T+L_{0})-T,2k(T+L_{0})+T]$
will only affect the sampled output at the \emph{$k$th time block}
$[\left(2k-1\right)(T+L_{0}),\left(2k+1\right)(T+L_{0})]$, as illustrated
in Fig. \ref{fig:GuardZone}. Since the noise $\eta(t)$ is assumed
to be white, the noise components across different time blocks are
independent. In fact, the intervals $[2k(T+L_{0})+T,(2k+1)(T+L_{0})-T]$
($k\in\mathbb{Z}$) act effectively as \emph{guard zones} in order
to avoid leakage of signals across different time blocks.

\begin{figure}[htbp]
\begin{centering}
\textsf{\includegraphics[scale=0.43]{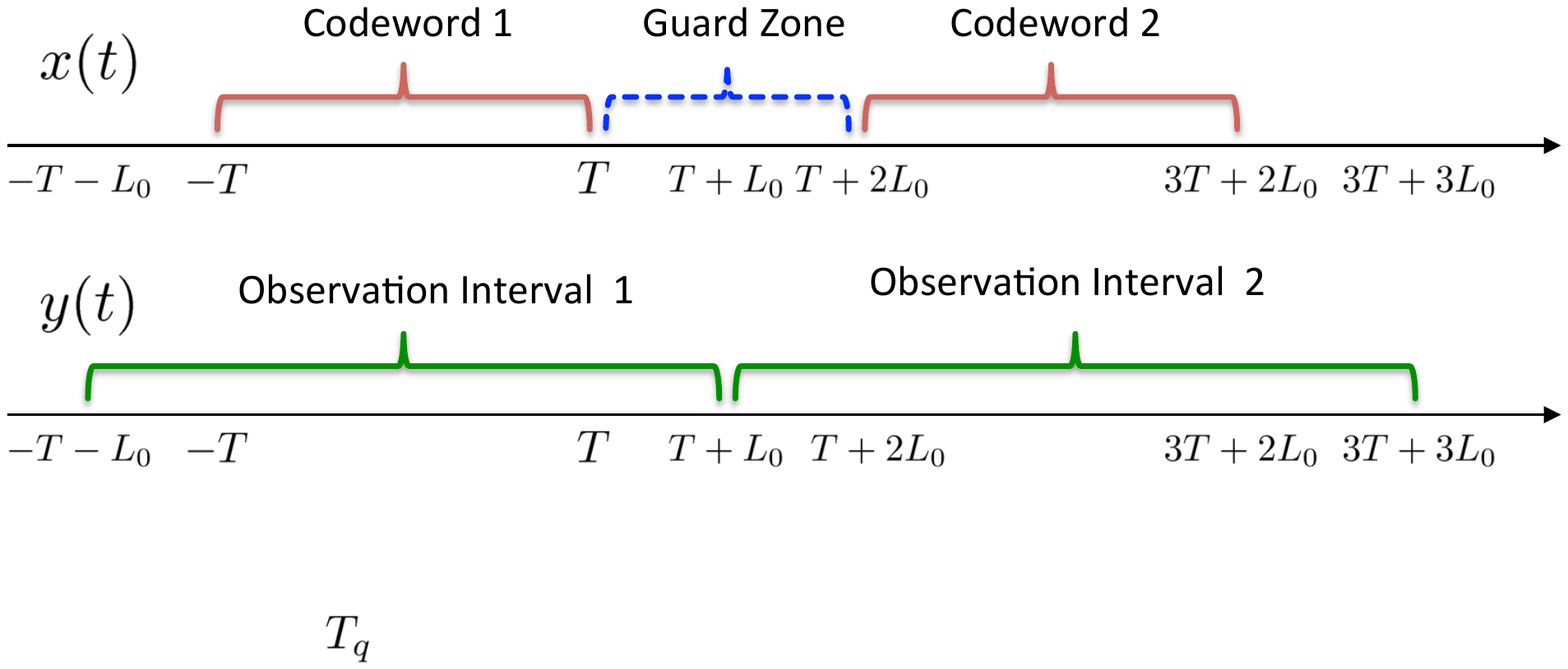}} 
\par\end{centering}

\caption{\label{fig:GuardZone} The code words of duration $2T$ are separated
by guard zones of duration $2L_{0}$. There is no inter-symbol interference
among different observation intervals.}
\end{figure}

Based on the above argument, we can separate codewords of duration
$2T$ in $[2k(T+L_{0})+T,(2k+1)(T+L_{0})-T]$ $\left(k\in\mathbb{Z}\right)$
on the analog channel by a guard zone $2L_{0}$ (as illustrated in
Fig. \ref{fig:GuardZone}). The ratio of guard space to the length
of the time block vanishes as $T\rightarrow\infty$, and there is
no intersymbol interference under the new system we construct. By
our capacity definition, for any $\delta>0$, there exists a $T_{0}$
such that $\forall T>T_{0}$, we have 
\[
\frac{T+L_{0}}{T}<1+\delta,\quad\text{and}\quad\frac{T}{T+L_{0}}>1-\delta.
\]
Consequently,
\begin{align}
C_{T}^{\mathcal{P}}\left(P\right) & \overset{(\text{i})}{\leq}\frac{T+L_{0}}{T}C_{\text{p}}^{\mathcal{P}}\left(\frac{T}{T+L_{0}}P\right)\nonumber \\
 & \leq\left(1+\delta\right)C_{\text{p}}^{\mathcal{P}}\left(\left(1-\delta\right)P\right),\label{eq:CpCLambda_UB}
\end{align}
where $C_{\text{p}}^{\mathcal{P}}$ denotes the capacity under our
periodized sampling system. The inequality (i) follows from the following
three arguments: 
\begin{itemize}
\item $C_{T}^{\mathcal{P}}$ is the information rate when we observe the
samples within the interval $\left[-T,T\right]$, which is smaller
than the information rate, termed $\hat{C}_{T}^{\mathcal{P}}$, when
we observe all samples within $\left[-T-L_{0},T+L_{0}\right]$; 
\item $\hat{C}_{T}^{\mathcal{P}}$ is equivalent to the maximum information
rate achievable by the periodized system, under the constraint that
there is no input signal transmitted over the guard zones. Clearly,
this rate will be smaller than the capacity without this transmission
constraint, which is $\frac{T+L_{0}}{T}C_{\text{p}}^{\mathcal{P}}$.
Here, the multiplication factor $\frac{T+L_{0}}{T}$ arises from the
fact that we only use a portion $\frac{T}{T+L_{0}}$ of time for transmission; 
\item Since the total energy over each transmission block is $PT$ and each
guard zone has zero power, the average power allocated to the transmitted
signal is $\frac{T}{T+L_{0}}P$.
\end{itemize}
We know from Corollary \ref{cor:PeriodicCapacityUpperBound}(b) that
\begin{align}
C_{\text{p}}^{\mathcal{P}}\left(\left(1-\delta\right)P\right) & \leq C_{\text{u}}\left(D\left(\Lambda_{T+L_{0}}^{\text{p}}\right),\left(1-\delta\right)P\right)\nonumber \\
 & \leq C_{\text{u}}\left(f_{s}+\epsilon,\left(1-\delta\right)P\right),\label{eq:CpCu_UB}
\end{align}
where the last inequality arises from observing that $C_{\text{u}}(f_{s},P)$
is monotonically non-decreasing in $f_{s}$ and $P$. Putting (\ref{eq:CpCLambda_UB})
and (\ref{eq:CpCu_UB}) together yields
\begin{align}
C_{T}^{\mathcal{P}}\left(P\right) & \leq\left(1+\delta\right)C_{\text{u}}\left(f_{s}+\epsilon,P\right)\label{eq:FiniteDurationChannelInequality}
\end{align}
as soon as $T>\max\left\{ T_{0},T_{D}\right\} $. Since $\epsilon$
and $\delta$ can be chosen arbitrarily small, we have that
\[
\lim\sup_{T\rightarrow\infty}C_{T}^{\mathcal{P}}(P)\leq C_{\text{u}}\left(f_{s},P\right)
\]
when $h(t)$ is of finite duration and $\eta(t)$ is white.

\subsection{General Upper Bound: Infinite-Duration $h(t)$}

We now investigate the capacity bound when $h(t)$ is not time-limited.
We would like to prove that for any given sampling system $\mathcal{P}$
and any $\epsilon>0$, there exists $T_{1}$ such that for any $T>T_{1}$
, one has
\[
C_{T}^{\mathcal{P}}\leq C_{\text{u}}\left(f_{s},P\right)+\epsilon.
\]

Our proof proceeds by comparing the original channel with a truncated
channel whose channel response $\tilde{h}(f)$ satisfies
\[
\tilde{h}(t)\overset{\Delta}{=}\begin{cases}
h(t),\quad & \text{if }\left|t\right|\leq L_{1},\\
0, & \text{otherwise}.
\end{cases}
\]
Let $\xi>0$ be an arbitrary small constant, and $L_{1}$ chosen such
that
\begin{equation}
\int_{-\infty}^{-L_{1}}\left|h(t)\right|^{2}\mathrm{d}t+\int_{L_{1}}^{\infty}\left|h(t)\right|^{2}\mathrm{d}t\leq\xi.\label{eq:ChoiceL1TruncatedChannel}
\end{equation}
We further constrain the input and the observed sampled output to
the time interval $[-T,T]$. For both the original and truncated channel,
the sampled noise is not white, which motivates us to first perform
prewhitening.

Suppose without loss of generality that the sampled times within $[-T,T]$
are $\{t_{i}\mid1\leq i\leq K_{T}\}$. For convenience of notation,
we introduce a linear operator $\hat{\mathcal{P}}_{T}$ associated
with the sampling system such that
\[
\hat{\mathcal{P}}_{T}(\hat{r}(t))=\left[y[1],y[2],\cdots,y[K_{T}]\right],
\]
where $\hat{r}(t)=g(x)*x(t)+\hat{\eta}(t)$ is the sampling system
input, $\hat{\eta}(t)$ is white, and $\left\{ y[n]\right\} $ are
the corresponding sampled output. Thus, for the original channel,
one can write
\[
\left[\begin{array}{c}
y[1]\\
y[2]\\
\vdots\\
y[K_{T}]
\end{array}\right]=\hat{\mathcal{P}}_{T}\left(g(t)*x(t)\right)+\hat{\mathcal{P}}_{T}\left(\hat{\eta}(t)\right).
\]

Denote by $\hat{q}(t_{i},\tau)$ the impulse response associated with
this sampling system. Then, the noise component $\hat{\mathcal{P}}_{T}\left(\hat{\eta}(t)\right)$
can be whitened by left-multiplying it with a $K_{T}$-dimensional
square matrix $\boldsymbol{W}_{\hat{\mathcal{P}}}^{-1/2}$ defined
by
\[
\boldsymbol{W}_{\hat{\mathcal{P}}}(i,j)=\int_{-\infty}^{\infty}\hat{q}\left(t_{i},\tau\right)\hat{q}^{*}\left(t_{j},\tau\right)\mathrm{d}\tau.
\]
The invertibility is guaranteed by our assumptions. To see this, if
we denote by $\tilde{\eta}\overset{\Delta}{=}\boldsymbol{W}_{\hat{\mathcal{P}}}^{-1/2}\hat{\mathcal{P}}_{T}\left(\hat{\eta}(t)\right)$
the $K_{T}$-dimensional ``prewhitened'' noise, then one can verify
that for every $i$ and $j$,
\begin{align*}
 & \text{ }\left[\mathbb{E}\left(\hat{\mathcal{P}}_{T}\left(\hat{\eta}(t)\right)\left(\hat{\mathcal{P}}_{T}\left(\hat{\eta}(t)\right)\right)^{\top}\right)\right]_{ij}\\
 & \text{ }\text{ }=\mathbb{E}\left[\left(\int_{-\infty}^{\infty}\hat{q}\left(t_{i},\tau\right)\hat{\eta}\left(\tau\right)\mathrm{d}\tau\right)\left(\int_{-\infty}^{\infty}\hat{q}^{*}\left(t_{j},\tau\right)\hat{\eta}(\tau)\mathrm{d}\tau\right)\right]\\
 & \text{ }\text{ }=\int_{-\infty}^{\infty}\hat{q}\left(t_{i},\tau\right)\hat{q}^{*}\left(t_{j},\tau\right)\mathrm{d}\tau\\
 & \text{ }\text{ }=\boldsymbol{W}_{\hat{\mathcal{P}}}(i,j)
\end{align*}
or, equivalently, 
\[
\mathbb{E}\left[\hat{\mathcal{P}}_{T}\left(\hat{\eta}(t)\right)\left(\hat{\mathcal{P}}_{T}\left(\hat{\eta}(t)\right)\right)^{\top}\right]=\boldsymbol{W}_{\hat{\mathcal{P}}}.
\]
As a result, the covariance of $\tilde{\eta}$ obeys
\begin{align*}
\mathbb{E}\left[\tilde{\eta}\tilde{\eta}^{\top}\right] & =\boldsymbol{W}_{\hat{\mathcal{P}}}^{-1/2}\mathbb{E}\left(\hat{\mathcal{P}}_{T}\left(\hat{\eta}(t)\right)\left(\hat{\mathcal{P}}_{T}\left(\hat{\eta}(t)\right)\right)^{\top}\right)\boldsymbol{W}_{\hat{\mathcal{P}}}^{-1/2}\\
 & =\boldsymbol{I}.
\end{align*}
If we denote by $\hat{\mathcal{P}}_{\text{w}}\overset{\Delta}{=}\boldsymbol{W}_{\hat{\mathcal{P}}}^{-\frac{1}{2}}\cdot\hat{\mathcal{P}}_{T}$
and let $\hat{q}_{\text{w}}\left(t_{i},\tau\right)$ represent its
associated impulse response, then the above calculation reveals that
\begin{equation}
\int_{-\infty}^{\infty}\hat{q}_{\text{w}}\left(t_{i},\tau\right)\hat{q}_{\text{w}}^{*}\left(t_{j},\tau\right)\mathrm{d}\tau=\begin{cases}
1,\quad & \text{if }i=j;\\
0, & \text{else},
\end{cases}\label{eq:PropertyProjectionOperator}
\end{equation}
indicating that $\left\{ \hat{q}_{\text{w}}\left(t_{i},\cdot\right),1\leq i\leq K_{T}\right\} $
forms a set of orthonormal sequences in the corresponding Hilbert
space. 

For an operator $\mathcal{A}$ with an impulse response $a(t,\tau)$
($-T\leq\tau\leq T$, $t\in\{t_{i}\mid1\leq i\leq K_{T}\}$) and input
domain $\mathcal{D}\left(\mathcal{A}\right)$, we denote by $\left\Vert \mathcal{A}\right\Vert _{\text{F}}$
the generalized Frobenius norm of the operator $\mathcal{A}$ with
respect to its associated domain, namely,
\[
\left\Vert \mathcal{A}\right\Vert _{\text{F}}:=\begin{cases}
\sqrt{\sum_{i=1}^{K_{T}}\int_{-T}^{T}\left|a(t_{i},\tau)\right|^{2}\mathrm{d}\tau},\\
\quad\quad\quad\text{if }\mathcal{D}\left(\mathcal{A}\right)=\{t_{i}\mid1\leq i\leq K_{T}\}\times[-T,T],\\
\sqrt{\sum_{i=1}^{K_{T}}\int_{-\infty}^{\infty}\left|a(t_{i},\tau)\right|^{2}\mathrm{d}\tau},\\
\quad\quad\quad\text{if }\mathcal{D}\left(\mathcal{A}\right)=\{t_{i}\mid1\leq i\leq K_{T}\}\times[-\infty,\infty],\\
\sqrt{\int_{-\infty}^{\infty}\int_{-T}^{T}\left|a(t,\tau)\right|^{2}\mathrm{d}\tau\mathrm{d}t},\\
\quad\quad\quad\text{if }\mathcal{D}\left(\mathcal{A}\right)=[-\infty,\infty]\times[-T,T].
\end{cases}
\]
Recall that $\hat{q}_{\text{w}}\left(t_{i},\cdot\right)$ ($1\leq i\leq K_{T}$)
forms orthonormal sequences. By Bessel's inequality \cite{Kreyszig1989},
an operator $\mathcal{A}$ with $\mathcal{D}\left(\mathcal{A}\right)=[-\infty,\infty]\text{ }\times\text{ }[-T,T]$
satisfies
\[
\sum_{i=1}^{K_{T}}\left|\left\langle \hat{q}_{\text{w}}\left(t_{i},\cdot\right),a\left(\cdot,\tau\right)\right\rangle \right|^{2}\leq\int_{-\infty}^{\infty}\left|a\left(\tau_{1},\tau\right)\right|^{2}\mathrm{d}\tau_{1}
\]
for every $\tau\in\left[-T,T\right]$, which immediately gives
\begin{align*}
\left\Vert \hat{\mathcal{P}}_{\text{w}}\mathcal{A}\right\Vert _{\text{F}}^{2} & =\sum_{i=1}^{K_{T}}\int_{-T}^{T}\left|\int_{-\infty}^{\infty}\hat{q}_{\text{w}}(t_{i},\tau_{1})a(\tau_{1},\tau)\mathrm{d}\tau_{1}\right|^{2}\mathrm{d}\tau\\
 & =\int_{-T}^{T}\sum_{i=1}^{K_{T}}\left|\left\langle \hat{q}_{\text{w}}\left(t_{i},\cdot\right),a\left(\cdot,\tau\right)\right\rangle \right|^{2}\mathrm{d}\tau\\
 & \leq\int_{-T}^{T}\int_{-\infty}^{\infty}\left|a\left(\tau_{1},\tau\right)\right|^{2}\mathrm{d}\tau_{1}\mathrm{d}\tau\leq\left\Vert \mathcal{A}\right\Vert _{\text{F}}^{2}.
\end{align*}

Denote by $\left\{ \lambda_{i}\right\} $ and $\{\tilde{\lambda}_{i}\}$
the set of\emph{ squared} singular values associated with the original
sampled channel operator $\hat{\mathcal{P}}_{\text{w}}\mathcal{G}$
and the operator $\hat{\mathcal{P}}_{\text{w}}\tilde{\mathcal{G}}$
of the truncated sampled channel, respectively. Here, $\mathcal{G}$
and $\tilde{\mathcal{G}}$ represent respectively the operator associated
with the original channel response and the truncated channel response.
We can obtain some properties of $\left\{ \lambda_{i}\right\} $ and
$\{\tilde{\lambda}_{i}\}$ as stated in the following lemma.

\begin{lem}\label{lem:PropertiesLambda}Suppose that $\int_{-\infty}^{\infty}\left|g(t)\right|^{2}\mathrm{d}t<C_{g}<\infty$
for some constant $C_{g}$. For any $\xi>0$, there exists $T_{0}$
such that for every $T>T_{0}$, one has

(1) $\left|\frac{1}{2T}\sum_{i}\lambda_{i}-\frac{1}{2T}\sum_{i}\tilde{\lambda}_{i}\right|\leq\xi+2\sqrt{\xi C_{g}}.$

(2) $\frac{1}{2T}\sum_{i}\lambda_{i}\leq\int_{-\infty}^{\infty}\left|g(t)\right|^{2}\mathrm{d}t<\infty$.

(3) Suppose that $h(t)=O\left(\frac{1}{t^{1.5+\varepsilon}}\right)$
for some small $\varepsilon>0$. Then there exists $T_{0,\epsilon}$
such that for every $T>T_{0,\epsilon}$, one has $\left|\lambda_{i}-\tilde{\lambda}_{i}\right|\leq\xi$.\end{lem}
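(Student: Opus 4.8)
The plan is to recognize that the two sums of squared singular values are squared Hilbert--Schmidt (generalized Frobenius) norms, namely $\sum_i \lambda_i = \|\hat{\mathcal{P}}_{\mathrm{w}}\mathcal{G}\|_{\mathrm{F}}^2$ and $\sum_i \tilde\lambda_i = \|\hat{\mathcal{P}}_{\mathrm{w}}\tilde{\mathcal{G}}\|_{\mathrm{F}}^2$, so that parts (1)--(2) reduce to estimating Frobenius norms, for which the Bessel-type contraction $\|\hat{\mathcal{P}}_{\mathrm{w}}\mathcal{A}\|_{\mathrm{F}}\le\|\mathcal{A}\|_{\mathrm{F}}$ established just above the lemma is the main tool. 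First I would record the elementary identity that for a convolution operator with kernel $g(t-\tau)$ and input domain $[-T,T]$ (output over all of $\mathbb{R}$) the Frobenius norm factorizes as $\|\mathcal{G}\|_{\mathrm{F}}^2=\int_{-T}^{T}\!\int_{-\infty}^{\infty}|g(t-\tau)|^2\,\mathrm{d}t\,\mathrm{d}\tau=2T\|g\|_2^2$, with the analogous identities for $\tilde{\mathcal{G}}$ and for the difference operator $\Delta:=\mathcal{G}-\tilde{\mathcal{G}}$, which is convolution with $d:=g-\tilde g$ supported on $\{|t|>L_1\}$.

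Part (2) then follows immediately: $\tfrac{1}{2T}\sum_i\lambda_i=\tfrac{1}{2T}\|\hat{\mathcal{P}}_{\mathrm{w}}\mathcal{G}\|_{\mathrm{F}}^2\le\tfrac{1}{2T}\|\mathcal{G}\|_{\mathrm{F}}^2=\|g\|_2^2<C_g$. For part (1) I would split $\hat{\mathcal{P}}_{\mathrm{w}}\mathcal{G}=\hat{\mathcal{P}}_{\mathrm{w}}\tilde{\mathcal{G}}+\hat{\mathcal{P}}_{\mathrm{w}}\Delta$ and expand the squared Frobenius norm, giving $\sum_i\lambda_i-\sum_i\tilde\lambda_i=2\,\mathrm{Re}\,\langle\hat{\mathcal{P}}_{\mathrm{w}}\tilde{\mathcal{G}},\hat{\mathcal{P}}_{\mathrm{w}}\Delta\rangle_{\mathrm{F}}+\|\hat{\mathcal{P}}_{\mathrm{w}}\Delta\|_{\mathrm{F}}^2$. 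Applying Cauchy--Schwarz to the cross term and the Bessel contraction to both factors, together with $\|\Delta\|_{\mathrm{F}}^2=2T\|d\|_2^2\le2T\xi$ (from the choice of $L_1$ in (\ref{eq:ChoiceL1TruncatedChannel})) and $\|\hat{\mathcal{P}}_{\mathrm{w}}\tilde{\mathcal{G}}\|_{\mathrm{F}}^2\le\|\tilde{\mathcal{G}}\|_{\mathrm{F}}^2\le2TC_g$, yields $|\sum_i\lambda_i-\sum_i\tilde\lambda_i|\le 2\sqrt{2TC_g}\sqrt{2T\xi}+2T\xi=2T(\xi+2\sqrt{\xi C_g})$; dividing by $2T$ gives the claim.

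Part (3) is the crux and requires a different tool, since it controls each eigenvalue individually rather than their average, and Frobenius/Bessel estimates cannot deliver per-index control. Here I would invoke Weyl's perturbation inequality for singular values of compact (indeed finite-rank, since $\hat{\mathcal{P}}_{\mathrm{w}}$ has range $\mathbb{C}^{K_T}$) operators: $|\sigma_i(\hat{\mathcal{P}}_{\mathrm{w}}\mathcal{G})-\sigma_i(\hat{\mathcal{P}}_{\mathrm{w}}\tilde{\mathcal{G}})|\le\|\hat{\mathcal{P}}_{\mathrm{w}}\Delta\|_{\mathrm{op}}$. Since $\hat{\mathcal{P}}_{\mathrm{w}}$ has orthonormal rows its operator norm is at most $1$, so $\|\hat{\mathcal{P}}_{\mathrm{w}}\Delta\|_{\mathrm{op}}\le\|\Delta\|_{\mathrm{op}}\le\|\widehat d\|_\infty\le\|d\|_1=\int_{|t|>L_1}|g(t)|\,\mathrm{d}t$. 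Writing $\lambda_i=\sigma_i^2$ and factoring $|\lambda_i-\tilde\lambda_i|=|\sigma_i-\tilde\sigma_i|\,(\sigma_i+\tilde\sigma_i)$, I would bound the second factor uniformly by $2\|g\|_1$ (using $\sigma_i\le\|\mathcal{G}\|_{\mathrm{op}}\le\|G\|_\infty\le\|g\|_1$ and likewise for $\tilde\sigma_i$), obtaining $|\lambda_i-\tilde\lambda_i|\le 2\|g\|_1\int_{|t|>L_1}|g(t)|\,\mathrm{d}t$. The decisive point---and the reason the polynomial decay hypothesis $g(t)=O(t^{-1.5-\varepsilon})$ (equivalently $h(t)=O(t^{-1.5-\varepsilon})$ under the white-noise reduction $g=h$) enters exactly here---is that part (3) needs the \emph{operator norm} of the perturbation, controlled by the $L^1$ tail $\int_{|t|>L_1}|g|$ rather than the $L^2$ tail used in part (1); the $L^2$ smallness of (\ref{eq:ChoiceL1TruncatedChannel}) is insufficient to bound $\|\widehat d\|_\infty$, whereas $O(t^{-1.5-\varepsilon})$ ensures $g\in L^1$ with $\int_{|t|>L_1}|g|=O(L_1^{-1/2-\varepsilon})\to0$, so enlarging $L_1$ (if necessary beyond the $L^2$ choice) forces $|\lambda_i-\tilde\lambda_i|\le\xi$ uniformly in $i$. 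I expect this $L^1$-versus-$L^2$ gap, and the verification that Weyl's inequality legitimately applies to these infinite-dimensional-domain, finite-rank operators, to be the only genuinely delicate steps; all the norm identities and the uniform singular-value bound are routine once the Hilbert--Schmidt and operator-norm contractions of $\hat{\mathcal{P}}_{\mathrm{w}}$ are in hand.
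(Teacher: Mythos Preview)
Your proofs of parts (1) and (2) are essentially identical to the paper's: both identify $\sum_i\lambda_i=\|\hat{\mathcal P}_{\mathrm w}\mathcal G\|_{\mathrm F}^2$, invoke the Bessel contraction $\|\hat{\mathcal P}_{\mathrm w}\mathcal A\|_{\mathrm F}\le\|\mathcal A\|_{\mathrm F}$, and use $\|\mathcal G\|_{\mathrm F}^2=2T\|g\|_2^2$ together with $\|\mathcal G-\tilde{\mathcal G}\|_{\mathrm F}^2\le 2T\xi$. (You expand the square directly while the paper applies the triangle inequality and then squares, which is a purely cosmetic difference.)

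For part (3) you take a genuinely different route. The paper stays in the Hilbert--Schmidt framework throughout: it applies Weyl's theorem in the form
\[
|\lambda_i-\tilde\lambda_i|\le\bigl\|\hat{\mathcal P}_{\mathrm w}\mathcal G(\hat{\mathcal P}_{\mathrm w}\mathcal G)^*-\hat{\mathcal P}_{\mathrm w}\tilde{\mathcal G}(\hat{\mathcal P}_{\mathrm w}\tilde{\mathcal G})^*\bigr\|_{\mathrm F}
\]
and bounds this by $\|\mathcal G-\tilde{\mathcal G}\|_{\mathrm F}\bigl(\|\mathcal G\|_{\mathrm F}+\|\tilde{\mathcal G}\|_{\mathrm F}\bigr)$. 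Since both $\|\mathcal G\|_{\mathrm F}$ and $\|\tilde{\mathcal G}\|_{\mathrm F}$ are $O(\sqrt T)$, the paper needs the $L^2$ tail itself to be $O(T^{-2-2\varepsilon})$, which in effect forces the truncation level $L_1$ to be coupled to $T$ (indeed the paper's displayed bound integrates over $|t|>T$ rather than $|t|>L_1$). Your approach instead passes to the \emph{operator} norm in Weyl's singular-value inequality, bounds $\|\hat{\mathcal P}_{\mathrm w}\Delta\|_{\mathrm{op}}\le\|\Delta\|_{\mathrm{op}}\le\|d\|_1$, and controls the perturbation via the $L^1$ tail $\int_{|t|>L_1}|g|$, which is small under the decay hypothesis with no reference to $T$ at all. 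This is cleaner: it decouples the choice of $L_1$ from $T$ and gives a bound uniform in both $i$ and $T$. The paper's version has the minor advantage of never leaving the Frobenius setting, but your diagnosis of the $L^1$-versus-$L^2$ gap is precisely the right explanation of why the pointwise-decay exponent is needed only for part (3).
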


\begin{IEEEproof}See Appendix \ref{sec:Proof-of-Lemma-PropertiesLambda}.\end{IEEEproof}

For notational simplicity, define two functions as follows
\begin{align}
C_{T}^{\mathcal{P}}\left(\nu,\left\{ \lambda_{i}\right\} \right) & :=\frac{1}{2T}\sum_{i=1}^{K_{T}}\frac{1}{2}\left[\log\left(\nu\lambda_{i}\right)\right]^{+}
\end{align}
and
\begin{equation}
F_{T}\left(\nu,\left\{ \lambda_{i}\right\} \right):=\frac{1}{2T}\sum_{i=1}^{K_{T}}\left[\nu-\frac{1}{\lambda_{i}}\right]^{+}
\end{equation}
for some water level $\nu$. Note that if $\nu$ is chosen such that
$F_{T}\left(\nu,\left\{ \lambda_{i}\right\} \right)=P$, then
\[
C_{T}^{\mathcal{P}}\left(\nu,\left\{ \lambda_{i}\right\} \right)=C_{T}^{\mathcal{P}}\left(P\right).
\]
Apparently, both $C_{T}^{\mathcal{P}}\left(P\right)$ and $C_{T}^{\mathcal{P}}\left(\nu,\left\{ \lambda_{i}\right\} \right)$
are non-decreasing functions of $\left\{ \lambda_{i}\right\} $, which
implies that
\begin{align}
C_{T}^{\mathcal{P}}\left(\nu,\left\{ \lambda_{i}\right\} \right) & \leq C_{T}^{\mathcal{P}}\left(\nu,\left\{ \max\left\{ \lambda_{i},\xi^{\frac{1}{3}}\right\} \right\} \right)
\end{align}
and 
\[
C_{T}^{\mathcal{P}}\left(P\right)\leq C_{T}^{\mathcal{P}}\left(\nu,\left\{ \max\left\{ \lambda_{i},\xi^{\frac{1}{3}}\right\} \right\} \right),
\]
where $\nu$ is determined by 
\begin{equation}
F_{T}\left(\nu,\left\{ \max\left(\lambda_{i},\xi^{\frac{1}{3}}\right)\right\} \right)=P.\label{eq:PowerFt}
\end{equation}
Here, $\xi>0$ is some arbitrarily small constant. In fact, one can
easily verify that $C_{T}^{\mathcal{P}}\left(\nu,\left\{ \max\left\{ \lambda_{i},\xi^{\frac{1}{3}}\right\} \right\} \right)$
with $\nu$ determined by (\ref{eq:PowerFt}) is no larger than the
sum capacity of two separate channels with respective eigenvalues
$\left\{ \lambda_{i}\right\} $ and $\left\{ \breve{\lambda}_{i}:=\xi^{\frac{1}{3}}\right\} $
each with power allocation $P$. In other words, 
\begin{align}
 & C_{T}^{\mathcal{P}}\left(\nu,\left\{ \max\left\{ \lambda_{i},\xi^{\frac{1}{3}}\right\} \right\} \right)\nonumber \\
 & \text{ }\text{ }\leq C_{T}^{\mathcal{P}}\left(P\right)+C_{T}^{\mathcal{P}}\left(\nu,\left\{ \xi^{\frac{1}{3}}\right\} _{1\leq i\leq K_{T}}\right)\\
 & \text{ }\text{ }\leq C_{T}^{\mathcal{P}}\left(P\right)+\frac{K_{T}}{2T}\log\left(1+\frac{PT}{K_{T}}\xi^{\frac{1}{3}}\right)\nonumber \\
 & \text{ }\text{ }\leq C_{T}^{\mathcal{P}}\left(P\right)+\frac{K_{T}}{2T}\cdot\frac{PT}{K_{T}}\xi^{\frac{1}{3}}\nonumber \\
 & \text{ }\text{ }=C_{T}^{\mathcal{P}}\left(P\right)+\frac{P}{2}\xi^{\frac{1}{3}}.\label{eq:CtUB_Ctmax}
\end{align}

For any positive water level $\nu$ and some small constant $\xi>0$,
the Lipschitz constants of the functions
\begin{align*}
f_{1}\left(x\right) & :=\frac{1}{2}\left[\log\left(\nu\max\left\{ x,\xi^{\frac{1}{3}}\right\} \right)\right]^{+}\\
f_{2}\left(x\right) & :=\left[\nu-\max\left\{ x,\xi^{\frac{1}{3}}\right\} ^{-1}\right]^{+}
\end{align*}
 are bounded above in magnitude by $\frac{1}{2}\xi^{-1/3}$ and $\xi^{-2/3}$,
respectively. Using the same water level $\nu$, the corresponding
power for both channels can be computed as
\begin{align*}
P & =\frac{1}{2T}\sum_{i=1}^{K_{T}}\left[\nu-\frac{1}{\max\left\{ \lambda_{i},\xi^{\frac{1}{3}}\right\} }\right]^{+},\\
\tilde{P} & =\frac{1}{2T}\sum_{i=1}^{K_{T}}\left[\nu-\frac{1}{\max\left\{ \tilde{\lambda}_{i},\xi^{\frac{1}{3}}\right\} }\right]^{+}.
\end{align*}
Combining Lemma \ref{lem:PropertiesLambda} and the Lipschitz constants
of $f_{2}\left(x\right)$ immediately suggests that: there exists
$T_{0,\epsilon}$ such that for any $T>T_{0,\epsilon}$, one has
\begin{align}
\left|\tilde{P}-P\right| & =\frac{1}{2T}\sum_{i=1}^{K_{T}}\frac{1}{\xi^{\frac{2}{3}}}\left|\lambda_{i}-\tilde{\lambda}_{i}\right|\leq\frac{K_{T}}{2T\xi^{\frac{2}{3}}}\xi\nonumber \\
 & \leq\left(f_{s}+\epsilon\right)\xi^{\frac{1}{3}}.\label{eq:differenceP}
\end{align}
Similarly, we can bound
\begin{align}
 & \left|C_{T}^{\mathcal{P}}\left(\nu,\left\{ \max\left\{ \lambda_{i},\xi^{\frac{1}{3}}\right\} \right\} \right)-C_{T}^{\mathcal{P}}\left(\nu,\left\{ \max\left\{ \lambda_{i},\xi^{\frac{1}{3}}\right\} \right\} \right)\right|\nonumber \\
 & \text{ }\text{ }\leq\frac{1}{2T}\sum_{i=1}^{K_{T}}\frac{1}{2\xi^{\frac{1}{3}}}\left|\lambda_{i}-\tilde{\lambda}_{i}\right|\nonumber \\
 & \text{ }\text{ }\leq\frac{1}{4}\left(f_{s}+\epsilon\right)\xi^{\frac{2}{3}}.\label{eq:differenceC_p}
\end{align}

Combining (\ref{eq:differenceP}), (\ref{eq:differenceC_p}) and (\ref{eq:FiniteDurationChannelInequality})
suggests that 
\begin{align}
C_{T}^{\mathcal{P}}\left(P\right)\leq & \text{ }C_{T}^{\mathcal{P}}\left(\nu,\left\{ \max\left\{ \lambda_{i},\xi^{\frac{1}{3}}\right\} \right\} \right)\nonumber \\
\leq & \text{ }C_{T}^{\tilde{\mathcal{P}}}\left(\nu,\left\{ \max\left\{ \tilde{\lambda}_{i},\xi^{\frac{1}{3}}\right\} \right\} \right)+\frac{1}{4}\left(f_{s}+\epsilon\right)\xi^{\frac{2}{3}}\nonumber \\
\leq & C_{T}^{\tilde{\mathcal{P}}}\left(\tilde{P}\right)+\frac{\tilde{P}}{2}\xi^{\frac{1}{3}}+\frac{1}{4}\left(f_{s}+\epsilon\right)\xi^{\frac{2}{3}}\label{eq:intermediate}\\
\leq & \text{ }C_{T}^{\tilde{\mathcal{P}}}\left(P+\left(f_{s}+\epsilon\right)\xi^{\frac{1}{3}}\right)+\frac{\tilde{P}}{2}\xi^{\frac{1}{3}}+\frac{1}{4}\left(f_{s}+\epsilon\right)\xi^{\frac{2}{3}}\nonumber \\
\leq & \frac{P+\left(f_{s}+\epsilon\right)\xi^{\frac{1}{3}}}{2}\xi^{\frac{1}{3}}+\text{ }\frac{1}{4}\left(f_{s}+\epsilon\right)\xi^{\frac{2}{3}}+\left(1+\delta\right)\nonumber \\
 & \quad\cdot C_{\text{u}}\left(f_{s}+\epsilon,P+\left(f_{s}+\epsilon\right)\xi^{\frac{1}{3}}\right),\nonumber 
\end{align}
where (\ref{eq:intermediate}) is a consequence of (\ref{eq:CtUB_Ctmax}).
Since $\delta,\epsilon,$ and $\xi$ can all be made arbitrarily small,
it follows that
\[
\lim\sup_{T\rightarrow\infty}C_{T}^{\mathcal{P}}\left(P\right)\leq C_{\text{u}}\left(f_{s},P\right),
\]
completing the proof.

\section{Proof of Lemma \ref{thm:PeriodicSampledCapacity}\label{sec:Proof-of-Theorem-Periodic-Sampled-Capacity}}

The proof is restricted to the channel with white noise, i.e. $\mathcal{S}_{\eta}(f)\equiv1$.
It is straightforward to extend the analysis to colored noise through
the argument presented in the first paragraph of Appendix \ref{sec:Proof-Architecture-of-Theorem-General-Capacity}.

Our proof proceeds in the following three steps. 
\begin{enumerate}
\item We first introduce several correlation functions and compute the Fourier
series associated with them. These quantities are crucial in deriving
the capacity expression. In particular, when the sampling system is
periodic, the infinite correlation matrices are block Toeplitz.
\item When constrained to a finite time interval $\left[-nT_{q},nT_{q}\right]$,
the sampled output is a finite vector. The sampled noise is in general
not white, which motivates us to whiten it first. In fact, the covariance
matrix of the sampled noise can be easily derived in terms of the
proposed correlation functions.
\item For any time interval $\left[-nT_{q},nT_{q}\right]$, the capacity
is obtained through the Karhunen Loeve expansion. Specifically,
the capacity depends on the eigenvalues of the associated system operator,
which is related to the correlation functions. The asymptotic properties
of block Toeplitz matrices guarantee the convergence when $n\rightarrow\infty$,
which allow us to derive in closed form the sampled channel capacity.
\end{enumerate}

\subsection{Correlation functions and Fourier series}

For a concatenated linear system consisting of the channel filter
followed by the sampling system, we denote by 
\begin{equation}
s\left(t_{\text{o}},t_{\text{i}}\right):=\int_{-\infty}^{\infty}h\left(\tau-t_{\text{i}}\right)q\left(t_{\text{o}},\tau\right)\mathrm{d}\tau
\end{equation}
its system output seen at time $t_{\text{o}}$ due to an impulse input
at time $t_{\text{i}}$. For notational convenience, we define $q_{k}\left(\tau\right):=q\left(t_{k},\tau\right)$
as the sampling output response at time $t_{k}$ due to an impulse
input to the sampling system at time $\tau$. Two \emph{output autocorrelation
functions }are defined as follows
\begin{equation}
\mathcal{R}_{hq}\left(t_{k},t_{l}\right)\overset{\Delta}{=}\int_{-\infty}^{\infty}s\left(t_{k},\tau\right)s^{*}\left(t_{l},\tau\right)\mathrm{d}\tau
\end{equation}
and
\begin{equation}
\mathcal{R}_{q}\left(t_{k},t_{l}\right)\overset{\Delta}{=}\int_{-\infty}^{\infty}q\left(t_{k},\tau\right)q^{*}\left(t_{l},\tau\right)\mathrm{d}\tau.
\end{equation}
For notational simplicity, we use $\mathcal{R}_{hq}(k,l)$ (resp.
$\mathcal{R}_{q}\left(k,l\right)$) and $\mathcal{R}_{hq}\left(t_{k},t_{l}\right)$
(resp. $\mathcal{R}_{q}\left(t_{k},t_{l}\right)$) interchangeably.
When the sampling system is periodic with period $T_{q}$, one can
easily see that both $\left[\mathcal{R}_{hq}(k,l)\right]_{k,l=-\infty}^{\infty}$
and $\left[\mathcal{R}_{q}\left(k,l\right)\right]_{k,l=-\infty}^{\infty}$
are infinite block Toeplitz matrices. 

The spectral properties associated with the system operators are captured
by \emph{Fourier series matrices} $\boldsymbol{F}_{hq}$, $\boldsymbol{F}_{qq}$,
$\boldsymbol{F}_{h}$ and $\boldsymbol{F}_{q}$. Specifically, $\boldsymbol{F}_{hq}$
is an $f_{s}T_{q}$-dimensional square matrix such that: for any frequency
$f$ and all $1\leq k,i\leq f_{s}T_{q}$,
\begin{align}
\left(\boldsymbol{F}_{hq}\right)_{k,i}\left(f\right) & :=\sum_{l=-\infty}^{\infty}\mathcal{R}_{hq}\left(t_{k},t_{i+lf_{s}T_{q}}\right)\exp\left(j2\pi lf\right)\label{eq:DefnFhq}
\end{align}
and
\begin{align}
\left(\boldsymbol{F}_{qq}\right)_{k,i} & \left(f\right):=\sum_{l=-\infty}^{\infty}\mathcal{R}_{q}\left(t_{k},t_{i+lf_{s}T_{q}}\right)\exp\left(j2\pi lf\right).
\end{align}
Besides, for every frequency $f$, we define an $f_{q}T_{s}\times\infty$
dimensional matrix $\boldsymbol{F}_{q}\left(f\right)$ and an infinite
square diagonal matrix $\boldsymbol{F}_{h}\left(f\right)$ such that
for all $l\in\mathbb{Z}$ and $1\leq k\leq f_{q}T_{s}$:
\begin{align}
\left(\boldsymbol{F}_{q}\right)_{k,l}\left(f\right) & :=Q_{k}\left(f+lf_{q}\right),\\
\left(\boldsymbol{F}_{h}\right)_{l,l}\left(f\right) & :=H\left(f+lf_{q}\right),
\end{align}
where $Q_{k}(f)\overset{\Delta}{=}\mathcal{F}\left(q_{k}(\cdot)\right)=\mathcal{F}\left(q(t_{k},\cdot)\right)$.

The key properties of the above autocorrelation functions and Fourier
series are summarized in the following lemma.

\begin{lem}\label{lemma-Fhq-Fqq}The Fourier series matrices satisfy:
\begin{equation}
\boldsymbol{F}_{hq}=\boldsymbol{F}_{q}\boldsymbol{F}_{h}\boldsymbol{F}_{h}^{*}\boldsymbol{F}_{q}^{*}
\end{equation}
and
\begin{equation}
\boldsymbol{F}_{qq}=\boldsymbol{F}_{q}\boldsymbol{F}_{q}^{*}.
\end{equation}
\end{lem}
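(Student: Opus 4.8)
The plan is to reduce both identities to a single folding computation and then read off the matrix products. The only structural difference between the two claims is that $\mathcal{R}_{hq}$ is built from the \emph{cascade} responses $s_k(\cdot):=s(t_k,\cdot)$ while $\mathcal{R}_q$ is built from the sampling kernels $q_k(\cdot):=q(t_k,\cdot)$. So first I would record the transfer function of the cascade: since $s(t_k,t_i)=\int h(\tau-t_i)q(t_k,\tau)\,\mathrm{d}\tau$ is the channel convolved with the $k$th kernel, its Fourier transform in the input variable is $S_k(f)=H(f)Q_k(f)$ (the reflected channel that formally appears is harmless because $h$ is real, so that $S_k\overline{S_i}=|H|^2 Q_k\overline{Q_i}$ contributes exactly the factor $|H|^2$). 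Hence $\mathcal{R}_{hq}$ relates to $\{S_k\}$ precisely as $\mathcal{R}_q$ relates to $\{Q_k\}$, and it is enough to prove the folding identity once and then specialize.

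Next I would exploit periodicity to turn the block index into a time lag. Because $q(t+T_q,\tau+T_q)=q(t,\tau)$ and $t_{i+lf_sT_q}=t_i+lT_q$, one gets $q_{i+lf_sT_q}(\tau)=q_i(\tau-lT_q)$, so that $\mathcal{R}_q(t_k,t_{i+lf_sT_q})=\int q_k(\tau)\overline{q_i(\tau-lT_q)}\,\mathrm{d}\tau$ is just the continuous cross-correlation of $q_k$ and $q_i$ sampled at the integer lag $lT_q$. By the cross-correlation theorem this equals $\int Q_k(\nu)\overline{Q_i(\nu)}\,e^{j2\pi\nu lT_q}\,\mathrm{d}\nu$, and the same argument applied to $s_k,s_i$ gives $\mathcal{R}_{hq}(t_k,t_{i+lf_sT_q})=\int |H(\nu)|^2 Q_k(\nu)\overline{Q_i(\nu)}\,e^{j2\pi\nu lT_q}\,\mathrm{d}\nu$. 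Substituting these into the defining Fourier series over $l$ and invoking the Poisson summation formula -- equivalently the Dirac-comb identity $\sum_l e^{-j2\pi lf_q u}=T_q\sum_n\delta(u-nT_q)$ -- collapses the sum over $l$ into the folded spectrum $\sum_l Q_k(f+lf_q)\overline{Q_i(f+lf_q)}$. Since $\boldsymbol{F}_h$ is diagonal with entries $H(f+lf_q)$, this folded sum is exactly the $(k,i)$ entry of $\boldsymbol{F}_q\boldsymbol{F}_q^*$, and its $|H|^2$-weighted version is the $(k,i)$ entry of $\boldsymbol{F}_q\boldsymbol{F}_h\boldsymbol{F}_h^*\boldsymbol{F}_q^*$, yielding $\boldsymbol{F}_{qq}=\boldsymbol{F}_q\boldsymbol{F}_q^*$ and $\boldsymbol{F}_{hq}=\boldsymbol{F}_q\boldsymbol{F}_h\boldsymbol{F}_h^*\boldsymbol{F}_q^*$.

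The algebra above is routine; the delicate point -- and the step I expect to be the main obstacle -- is justifying the Poisson summation and the interchange of summation and integration under the lemma's hypotheses. The assumptions only guarantee that $|H(f)Q_k(f)|^2$ is bounded and integrable, so the kernels $q_k$ (and $s_k$) need not lie in $L_1$; indeed band-limited filterbank sampling produces sinc-type kernels that are merely $L_2$, and the paper's footnote stresses exactly this generality. Consequently I cannot argue by pointwise absolute convergence of $\sum_l$. Instead I would work in $L_2$ and read Poisson summation in its dual form: the folded spectrum $\sum_l Q_k(f+lf_q)\overline{Q_i(f+lf_q)}$ is a well-defined $f_q$-periodic $L_1$ function (a product of $L_2$ functions summed over a tiling of the line, finite in $L_1$-norm by Cauchy--Schwarz), and its Fourier coefficients against $e^{\pm j2\pi nfT_q}$ are precisely the sampled correlations $\mathcal{R}_q(t_k,t_{i+nf_sT_q})$. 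This identification is rigorous and sidesteps manipulation of divergent-looking sums. Finally, the right-invertibility assumption that the least singular value of $\boldsymbol{F}_q\boldsymbol{F}_q^*$ stays bounded away from $0$ ensures these periodic matrix symbols are well-conditioned, making the correlation sequences summable and the Fourier-coefficient identification meaningful; the remaining normalization constants (the factor $f_q$ and the sign of the exponent) are absorbed into the convention that defines the Fourier-series matrices.
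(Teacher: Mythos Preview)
Your proof is correct and follows essentially the same route as the paper: use periodicity to recognize $\mathcal{R}_{hq}(t_k,t_{i+lf_sT_q})$ as the $lT_q$-sample of a continuous convolution/correlation whose Fourier transform is $|H(f)|^2Q_k(f)\overline{Q_i(f)}$, then invoke Poisson summation (the paper phrases this as ``the Fourier transform of the sampled sequence'') to obtain the folded sum $\sum_l |H(f+lf_q)|^2Q_k(f+lf_q)\overline{Q_i(f+lf_q)}$, and finally read off the matrix products, with the $\boldsymbol{F}_{qq}$ identity obtained by setting $H\equiv 1$. Your discussion of justifying Poisson summation via the $L_1$-periodization of a product of $L_2$ functions is in fact more careful than the paper's own treatment, which simply asserts the aliasing identity; the only cosmetic difference is that the paper organizes the computation around the channel autocorrelation $\mathcal{R}_h=h*h^{-*}$ (so that $|H|^2$ appears directly and no ``$h$ real'' caveat is needed), whereas you introduce $S_k=HQ_k$ and note the reflection is harmless.
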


\begin{IEEEproof}See Appendix \ref{sec:Proof-of-Lemma-Fhq-Fqq}.\end{IEEEproof}

\subsection{Noise whitening}

Denote by $\mathcal{Q}_{k}\left(\cdot\right)$ the sampling operator
associated with the sample time $t_{k}$ such that $\mathcal{Q}_{k}\left(x\right)\overset{\Delta}{=}\int_{-\infty}^{\infty}q\left(t_{k},\tau\right)x\left(\tau\right)\mathrm{d}\tau$.
The correlation of noise components $\mathcal{Q}_{k}\left(\eta\right)$
at different times can be calculated as
\begin{align*}
 & \mathbb{E}\left[\mathcal{Q}_{k}\left(\eta\right)\mathcal{Q}_{l}^{*}\left(\eta\right)\right]\\
 & \text{ }\text{ }=\mathbb{E}\left[\int_{-\infty}^{\infty}q\left(t_{k},\tau_{k}\right)\eta\left(\tau_{k}\right)\mathrm{d}\tau_{k}\left(\int_{-\infty}^{\infty}q\left(t_{l},\tau_{l}\right)\eta\left(\tau_{l}\right)\mathrm{d}\tau_{l}\right)^{*}\right]\\
 & \text{ }\text{ }=\int_{-\infty}^{\infty}\int_{-\infty}^{\infty}q\left(t_{k},\tau_{k}\right)q^{*}\left(t_{l},\tau_{l}\right)\mathbb{E}\left(\eta\left(\tau_{k}\right)\eta^{*}\left(\tau_{l}\right)\right)\mathrm{d}\tau_{k}\mathrm{d}\tau_{l}\\
 & \text{ }\text{ }=\int_{-\infty}^{\infty}q\left(t_{k},\tau\right)q^{*}\left(t_{l},\tau\right)\mathrm{d}\tau,
\end{align*}
which immediately implies that $\mathcal{Q}\left(\eta\right)=\left[\cdots,\mathcal{Q}_{1}\left(\eta\right),\mathcal{Q}_{2}\left(\eta\right),\cdots\right]^{\top}$
is a zero-mean Gaussian vector with covariance matrix $\mathcal{R}_{q}$.

We now constrain both the transmit interval and the observation interval
to $\left[-nT_{q},nT_{q}\right]$. Let
\[
\boldsymbol{y}_{n}=\left[y\left[-nf_{s}T_{q}+1\right],,\cdots,y\left[nf_{s}T_{q}-1\right]y\left[nf_{s}T_{q}\right]\right]^{\top},
\]
where the sampled output sequence satisfies
\begin{equation}
y[k]=\mathcal{Q}_{k}\left(h(t)*x(t)\right)+\mathcal{Q}_{k}\left(\eta(t)\right).
\end{equation}
Introduce two $2nf_{s}T_{q}$-dimensional \emph{truncated} autocorrelation
matrices $\mathcal{R}_{hq}^{n}$ and $\mathcal{R}_{q}^{n}$ such that
for all $-nf_{s}T_{q}<k,l\leq nf_{s}T_{q}$,
\begin{align*}
\left(\mathcal{R}_{hq}^{n}\right)_{k,l} & =\mathcal{R}_{hq}\left(t_{k},t_{l}\right),\\
\left(\mathcal{R}_{q}^{n}\right)_{k,l} & =\mathcal{R}_{q}\left(t_{k},t_{l}\right).
\end{align*}
Clearly, the noise components of $\boldsymbol{y}_{n}$ exhibit a covariance
matrix $\mathcal{R}_{q}^{n}$, which motivates to whiten it first. 

By left multiplying $\boldsymbol{y}_{n}$ with $\left(\mathcal{R}_{q}^{n}\right)^{-\frac{1}{2}}$,
we obtain a new input-output relation as
\[
\tilde{y}_{n}[k]=\tilde{\mathcal{Q}}_{k}\left(h(t)*x\left(t\right)\right)+\tilde{\eta}\left[k\right],\quad\forall k\text{}\left(\left|k\right|\leq nf_{s}T_{q}\right),
\]
where $\left\{ \tilde{\eta}\left[k\right]\right\} $ are i.i.d. Gaussian
random variables each of unit variance. Denote by $\tilde{q}\left(t_{k},\tau\right)$
the equivalent impulse response of this new system. The truncated
output autocorrelation function $\mathcal{R}_{\tilde{q}}^{n}$ is
given as $\left(\mathcal{R}_{\tilde{q}}^{n}\right)_{k,l}=\mathcal{R}_{\tilde{q}}\left(t_{k},t_{l}\right)=\int_{-\infty}^{\infty}\tilde{q}\left(t_{k},\tau\right)\tilde{q}^{*}\left(t_{l},\tau\right)\mathrm{d}\tau$,
satisfying
\begin{equation}
\mathcal{R}_{\tilde{q}}^{n}=\left(\mathcal{R}_{q}^{n}\right)^{-\frac{1}{2}}\mathcal{R}_{hq}^{n}\left(\mathcal{R}_{q}^{n}\right)^{-\frac{1}{2}}
\end{equation}
by construction.

\subsection{Capacity via asymptotic properties of block Toeplitz matrices}

While both $\mathcal{R}_{q}^{n}$ and $\mathcal{R}_{hq}^{n}$ are
block Toeplitz matrices, $\mathcal{R}_{\tilde{q}}^{n}$ is in general
not a block Toeplitz matrix. By exploiting the asymptotic equivalence
in Toeplitz matrix theory \cite{Gray06}, one can see that $\mathcal{R}_{\tilde{q}}^{n}$
is asymptotically equivalent to a block-Toeplitz matrix generated
by the Fourier series
\begin{align*}
 & \mathcal{F}\left(\mathcal{R}_{q}^{-\frac{1}{2}}\right)\mathcal{F}\left(\mathcal{R}_{hq}\right)\mathcal{F}\left(\mathcal{R}_{q}^{-\frac{1}{2}}\right)\\
 & \text{ }\text{ }=\left(\boldsymbol{F}_{q}\boldsymbol{F}_{q}^{*}\right)^{-\frac{1}{2}}\boldsymbol{F}_{q}\boldsymbol{F}_{h}\boldsymbol{F}_{h}^{*}\boldsymbol{F}_{q}^{*}\left(\boldsymbol{F}_{q}\boldsymbol{F}_{q}^{*}\right)^{-\frac{1}{2}}.
\end{align*}
Therefore, the asymptotic spectral properties of a block-Toeplitz
matrix (e.g. \cite{Tilli98}) state that for any nondecreasing continuous
function $g(t)$ with a bounded slope, one has
\begin{align}
\lim_{n\rightarrow\infty}\frac{1}{2nT_{q}}\sum_{i=1}^{2nf_{s}T_{q}}g\left(\lambda_{i}\left(\mathcal{R}_{\tilde{q}}^{n}\right)\right)= & \frac{1}{2\pi T_{q}}{\displaystyle \int}_{-\pi}^{\pi}\sum_{i=1}^{f_{s}T_{q}}g\left(\hat{\lambda}_{i}\right)\mathrm{d}\omega,\label{eq:AsymptoticSpectralProperty}
\end{align}
where $\hat{\lambda}_{i}$ represents the $i$th eigenvalue of $\left(\boldsymbol{F}_{q}\boldsymbol{F}_{q}^{*}\right)^{-\frac{1}{2}}\boldsymbol{F}_{q}\boldsymbol{F}_{h}\boldsymbol{F}_{h}^{*}\boldsymbol{F}_{q}^{*}\left(\boldsymbol{F}_{q}\boldsymbol{F}_{q}^{*}\right)^{-\frac{1}{2}}$.

(1) The capacity of the sampled channel with an optimal water level
$\nu_{\mathrm{p}}$ can now be calculated as
\begin{align}
C^{\mathcal{P}}\left(P\right)= & \lim_{n\rightarrow\infty}\frac{1}{2nT_{q}}\sum_{i=1}^{2nf_{s}T_{q}}\frac{1}{2}\left[\log\left(\nu_{\mathrm{p}}\lambda_{i}\left(\mathcal{R}_{\tilde{q}}^{n}\right)\right)\right]^{+}\label{eq:PerfectCSIPeriodicToeplitz}\\
= & \frac{1}{2\pi T_{q}}{\displaystyle \int}_{-\pi}^{\pi}\sum_{i=1}^{f_{s}T_{q}}\frac{1}{2}\left[\log\left(\nu_{\mathrm{p}}\hat{\lambda}_{i}\right)\right]^{+}\mathrm{d}\omega\label{eq:AsymptoticEquivalenceCp}\\
= & {\displaystyle \int}_{-f_{q}/2}^{f_{q}/2}\sum_{i=1}^{f_{s}T_{q}}\frac{1}{2}\left[\log\left(\nu_{\mathrm{p}}\hat{\lambda}_{i}\right)\right]^{+}\mathrm{d}f,\nonumber 
\end{align}
where (\ref{eq:AsymptoticEquivalenceCp}) is a consequence of (\ref{eq:AsymptoticSpectralProperty}).

The water level $\nu$ is computed through the following parametric
equation
\[
\lim_{n\rightarrow\infty}\frac{1}{2nT_{q}}\sum_{i=1}^{2nf_{s}T_{q}}\left[\nu_{\mathrm{p}}-\frac{1}{\lambda_{i}\left(\mathcal{R}_{\tilde{q}}^{n}\right)}\right]^{+}\mathrm{d}f=P,
\]
which by (\ref{eq:AsymptoticSpectralProperty}) is asymptotically
equivalent to
\[
\frac{1}{2\pi T_{q}}{\displaystyle \int}_{-\pi}^{\pi}\sum_{i=1}^{f_{s}T_{q}}\left[\nu_{\mathrm{p}}-\frac{1}{\hat{\lambda}_{i}}\right]^{+}\mathrm{d}\omega=P,
\]
or
\[
{\displaystyle \int}_{-f_{q}/2}^{f_{q}/2}\sum_{i=1}^{f_{s}T_{q}}\left[\nu_{\mathrm{p}}-\frac{1}{\hat{\lambda}_{i}}\right]^{+}\mathrm{d}f=P.
\]
by change of variables. This establishes the claim.

(2) We consider now the scenario where equal power allocation is employed.
Classical MIMO channel capacity results \cite{Tel1999} indicate that
the optimal power allocation for the transmitter is to allocate equal
amount of power in all transmit branches. It remains to see how much
power is allocated to the branch associated with $\lambda_{i}\left(\mathcal{R}_{\tilde{q}}^{n}\right)$. 

In fact, if the transmitter knows the channel bandwidth, almost all
power (except for negligible leakage due to finite-time approximation)
will be allocated inside the channel bandwidth $[0,W]$. Therefore,
by the Shannon-Nyquist sampling theorem, all transmit signals can
be equivalently transformed to a delta train $\sum_{i=-\infty}^{\infty}x_{i}\delta(t-i/W)$,
where $x_{i}$'s are randomly generated transmit signals. Consider
the input time block $\left[-nT_{q},nT_{q}\right]$, then there are
equivalently $2nT_{q}W$ transmit branches inside this time block.
Since the total power is $P_{\text{tot}}=2nT_{q}P$, the power allocated
to each transmit branch is given by
\[
P_{0}=\lim_{n\rightarrow\infty}\frac{P_{\text{tot}}}{2nT_{q}/\left(\frac{1}{W}\right)}=\lim_{n\rightarrow\infty}\frac{2nT_{q}P}{2nWT_{q}}=\frac{P}{W}.
\]
As a result, the sampled capacity under equal power allocation is
computed as
\begin{align*}
C_{\mathrm{eq}}^{\mathcal{P}}\left(P\right)= & \lim_{n\rightarrow\infty}\frac{1}{2nT_{q}}\sum_{i=1}^{2nf_{s}T_{q}}\frac{1}{2}\log\left(1+\frac{P}{W}\lambda_{i}\left(\mathcal{R}_{\tilde{q}}^{n}\right)\right)\\
= & \frac{1}{2\pi T_{q}}{\displaystyle \int}_{-\pi}^{\pi}\sum_{i=1}^{f_{s}T_{q}}\frac{1}{2}\log\left(1+\frac{P}{W}\hat{\lambda}_{i}\right)\mathrm{d}\omega\\
= & {\displaystyle \int}_{-f_{q}/2}^{f_{q}/2}\frac{1}{2}\log\left(\boldsymbol{I}+\frac{P}{W}\left(\boldsymbol{F}_{q}\boldsymbol{F}_{q}^{*}\right)^{-\frac{1}{2}}\boldsymbol{F}_{q}\boldsymbol{F}_{h}\right.\\
 & \quad\quad\quad\quad\left.\cdot\boldsymbol{F}_{h}^{*}\boldsymbol{F}_{q}^{*}\left(\boldsymbol{F}_{q}\boldsymbol{F}_{q}^{*}\right)^{-\frac{1}{2}}\right)\mathrm{d}f.
\end{align*}

\section{Proof of Lemma \ref{lem:PropertiesLambda}\label{sec:Proof-of-Lemma-PropertiesLambda} }

(1) Let $\mathcal{G}$ and $\tilde{\mathcal{G}}$ denote respectively
the operators associated with $g(t)$ and $\tilde{g}(t)$. Then, the
triangle inequality yields
\begin{align*}
\left\Vert \hat{\mathcal{P}}_{\text{w}}\mathcal{G}\right\Vert _{\text{F}} & \leq\left\Vert \hat{\mathcal{P}}_{\text{w}}\tilde{\mathcal{G}}\right\Vert _{\text{F}}+\left\Vert \hat{\mathcal{P}}_{\text{w}}\left(\mathcal{G}-\tilde{\mathcal{G}}\right)\right\Vert _{\text{F}}\\
 & \leq\left\Vert \hat{\mathcal{P}}_{\text{w}}\tilde{\mathcal{G}}\right\Vert _{\text{F}}+\left\Vert \mathcal{G}-\tilde{\mathcal{G}}\right\Vert _{\text{F}},
\end{align*}
and hence
\begin{align}
\left\Vert \hat{\mathcal{P}}_{\text{w}}\mathcal{G}\right\Vert _{\text{F}}^{2} & \leq\left\Vert \hat{\mathcal{P}}_{\text{w}}\tilde{\mathcal{G}}\right\Vert _{\text{F}}^{2}+\left\Vert \mathcal{G}-\tilde{\mathcal{G}}\right\Vert _{\text{F}}^{2}+2\left\Vert \tilde{\mathcal{G}}\right\Vert _{\text{F}}\left\Vert \mathcal{G}-\tilde{\mathcal{G}}\right\Vert _{\text{F}}.\label{eq:FrobeniusNormDifference}
\end{align}
From (\ref{eq:ChoiceL1TruncatedChannel}) one can easily show that
for any $\xi>0$, there exists a $T_{0}$ such that for every $T>T_{0}$,
one has 
\[
\left\Vert \mathcal{G}-\tilde{\mathcal{G}}\right\Vert _{\text{F}}\leq\sqrt{2T\left(\int_{-\infty}^{-T}+\int_{T}^{\infty}\right)\left|h(t)\right|^{2}\mathrm{d}t}\leq\sqrt{2T\xi}.
\]
Additionally, suppose that $\int_{-\infty}^{\infty}\left|h(t)\right|^{2}\mathrm{d}t\leq C_{g}<\infty$.
Then, we have
\[
\left\Vert \tilde{\mathcal{G}}\right\Vert _{\text{F}}\leq\sqrt{2T\int_{-\infty}^{\infty}\left|h(t)\right|^{2}\mathrm{d}t}\leq\sqrt{2TC_{g}}.
\]
This together with (\ref{eq:FrobeniusNormDifference}) immediately
gives us
\[
\left\Vert \hat{\mathcal{P}}_{\text{w}}\mathcal{G}\right\Vert _{\text{F}}^{2}\leq\left\Vert \hat{\mathcal{P}}_{\text{w}}\tilde{\mathcal{G}}\right\Vert _{\text{F}}^{2}+2T\xi+4T\sqrt{\xi C_{g}}.
\]

Similar to \cite[Theorem 8.4.1]{Gallager68}, we can obtain that
\[
\sum_{i}\lambda_{i}=\left\Vert \hat{\mathcal{P}}_{\text{w}}\mathcal{G}\right\Vert _{\text{F}}^{2}\quad\text{and}\quad\sum_{i}\tilde{\lambda}_{i}=\left\Vert \hat{\mathcal{P}}_{\text{w}}\tilde{\mathcal{G}}\right\Vert _{\text{F}}^{2}.
\]
Therefore,
\begin{align*}
\frac{1}{2T}\sum_{i}\lambda_{i}-\frac{1}{2T}\sum_{i}\tilde{\lambda}_{i} & =\frac{1}{2T}\left\Vert \hat{\mathcal{P}}_{\text{w}}\mathcal{G}\right\Vert _{\text{F}}^{2}-\frac{1}{2T}\left\Vert \hat{\mathcal{P}}_{\text{w}}\tilde{\mathcal{G}}\right\Vert _{\text{F}}^{2}\\
 & \leq\xi+2\sqrt{\xi C_{g}}.
\end{align*}
Similarly,
\begin{align*}
\frac{1}{2T}\sum_{i}\lambda_{i}-\frac{1}{2T}\sum_{i}\tilde{\lambda}_{i} & \geq-\xi-2\sqrt{\xi C_{g}}.
\end{align*}

(2) We can also bound the sum of eigenvalues as follows 
\begin{align*}
\frac{1}{2T}\sum_{i}\lambda_{i} & =\frac{1}{2T}\left\Vert \hat{\mathcal{P}}_{\text{w}}\mathcal{G}\right\Vert _{\text{F}}^{2}\leq\frac{1}{2T}\left\Vert \mathcal{G}\right\Vert _{\text{F}}^{2}\\
 & =\int_{-\infty}^{\infty}\left|h(t)\right|^{2}\mathrm{d}t<\infty.
\end{align*}

(3) If $g(t)=O\left(\frac{1}{t^{1+\epsilon}}\right)$, then one can
further control
\begin{align*}
\left\Vert \mathcal{G}-\tilde{\mathcal{G}}\right\Vert _{\text{F}}^{2} & \leq2T\left(\int_{-\infty}^{-T}+\int_{T}^{\infty}\right)\left|h(t)\right|^{2}\mathrm{d}t\\
 & \leq2TO\left(\frac{1}{T^{2+2\epsilon}}\right)=O\left(\frac{1}{T^{1+2\epsilon}}\right).
\end{align*}
Therefore, applying Weyl's Theorem \cite{ipsen2009refined} yields
that
\begin{align*}
\left|\lambda_{i}-\tilde{\lambda}_{i}\right| & \leq\left\Vert \hat{\mathcal{P}}_{\text{w}}\mathcal{G}\left(\hat{\mathcal{P}}_{\text{w}}\mathcal{G}\right)^{*}-\hat{\mathcal{P}}_{\text{w}}\tilde{\mathcal{G}}\left(\hat{\mathcal{P}}_{\text{w}}\tilde{\mathcal{G}}\right)^{*}\right\Vert _{\text{F}}\\
 & \leq\left\Vert \hat{\mathcal{P}}_{\text{w}}\left(\mathcal{G}-\tilde{\mathcal{G}}\right)\right\Vert _{\text{F}}\left(\left\Vert \hat{\mathcal{P}}_{\text{w}}\mathcal{G}\right\Vert _{\text{F}}+\left\Vert \hat{\mathcal{P}}_{\text{w}}\tilde{\mathcal{G}}\right\Vert _{\text{F}}\right)\\
 & \leq\left\Vert \mathcal{G}-\tilde{\mathcal{G}}\right\Vert _{\text{F}}\left(\left\Vert \mathcal{G}\right\Vert _{\text{F}}+\left\Vert \tilde{\mathcal{G}}\right\Vert _{\text{F}}\right)\\
 & \leq O\left(\frac{1}{T^{\epsilon}}\right).
\end{align*}
Therefore, for any small $\xi>0$, there exists a constant $T_{0,\epsilon}$
such that for every $T>T_{0,\epsilon}$, one has $\left|\lambda_{i}-\tilde{\lambda}_{i}\right|<\xi.$

\section{Proof of Lemma \ref{lemma-Fhq-Fqq}\label{sec:Proof-of-Lemma-Fhq-Fqq}}

Simple manipulation yields 
\begin{align*}
 & \mathcal{R}_{hq}\left(t_{k},t_{l}\right)=\int_{-\infty}^{\infty}s\left(t_{k},\tau\right)s^{*}\left(t_{l},\tau\right)\mathrm{d}\tau\\
 & \text{ }\text{ }=\iiint q\left(t_{k},\tau_{k}\right)h\left(\tau_{k}-\tau\right)h^{*}\left(\tau_{l}-\tau\right)q^{*}\left(t_{l},\tau_{l}\right)\mathrm{d}\tau_{k}\mathrm{d}\tau_{l}\mathrm{d}\tau\\
 & \text{ }\text{ }=\iint q_{k}\left(\tau_{k}\right)\mathcal{R}_{h}\left(\tau_{l}-\tau{}_{k}\right)q_{l}^{*}\left(\tau_{l}\right)\mathrm{d}\tau_{k}\mathrm{d}\tau_{l},
\end{align*}
where
\begin{align*}
\mathcal{R}_{h}\left(\tau_{l}-\tau{}_{k}\right) & :=\int_{\tau}h\left(\tau_{k}-\tau\right)h^{*}\left(\tau_{l}-\tau\right)\mathrm{d}\tau\\
 & =\int_{\tau}h\left(\tau_{k}-\tau_{l}+\tau\right)h^{*}\left(\tau\right)\mathrm{d}\tau\\
 & =\left(h*h^{-*}\right)\left(\tau_{k}-\tau_{l}\right).
\end{align*}
Here, for any function $f(t)$, we use $f^{-}(t)$ to denote $f(-t)$. 

By the periodicity assumption of the sampling system, one can derive
\begin{align*}
 & \mathcal{R}_{hq}\left(t_{k+af_{s}T_{q}},t_{l+bf_{s}T_{q}}\right)\\
 & \text{ }\text{ }=\iint q\left(t_{k}+aT_{q},\tau_{k}\right)\mathcal{R}_{h}\left(\tau_{l}-\tau{}_{k}\right)q^{*}\left(t_{l}+bT_{q},\tau_{l}\right)\mathrm{d}\tau_{k}\mathrm{d}\tau_{l}\\
 & \text{ }\text{ }=\iint q\left(t_{k},\tau_{k}-aT_{q}\right)\mathcal{R}_{h}\left(\tau_{l}-\tau{}_{k}\right)\\
 & \quad\quad\quad\quad\quad\quad\cdot q^{*}\left(t_{l}+\left(b-a\right)T_{q},\tau_{l}-aT_{q}\right)\mathrm{d}\tau_{k}\mathrm{d}\tau_{l}\\
 & \text{ }\text{ }=\mathcal{R}_{hq}\left(t_{k},t_{l+\left(b-a\right)f_{s}T_{q}}\right).
\end{align*}
Observing that
\begin{align*}
 & \mathcal{R}_{hq}\left(t_{k},t_{i+lf_{s}T_{q}}\right)\\
 & \text{ }\text{ }=\iint q\left(t_{k},\tau_{k}\right)\mathcal{R}_{h}\left(\tau_{i}-\tau_{k}\right)q^{*}\left(t_{i}+lT_{q},\tau_{i}\right)\mathrm{d}\tau_{k}\mathrm{d}\tau_{i}\\
 & \text{ }\text{ }=\iint q_{k}\left(\tau_{k}\right)\mathcal{R}_{h}\left(\tau_{i}+lT_{q}-\tau_{k}\right)q_{i}^{*}\left(\tau_{i}\right)\mathrm{d}\tau_{k}\mathrm{d}\tau_{i}\\
 & \text{ }\text{ }=\left(\mathcal{R}_{h}*q_{k}*q_{i}^{-*}\right)\left(lT_{q}\right),
\end{align*}
we can see that $\left(\boldsymbol{F}_{hq}\right)_{k,i}$ is simply
the Fourier transform of the sampled sequence of $\mathcal{R}_{h}*q_{k}*q_{i}^{-*}$.
The properties of the Fourier transform suggest that
\begin{align*}
\mathcal{F}\left(\mathcal{R}_{h}*q_{k}*q_{i}^{-*}\right)\left(f\right) & =\mathcal{F}\left(\mathcal{R}_{h}\right)\left(f\right)\cdot Q_{k}\left(f\right)\cdot Q_{i}^{*}\left(f\right)\\
 & =\left|H\left(f\right)\right|^{2}Q_{k}\left(f\right)\cdot Q_{i}^{*}\left(f\right),
\end{align*}
where $Q_{k}(f):=\mathcal{F}\left(q_{k}\right)$. By definition in
(\ref{eq:DefnFhq}), one can write
\begin{align*}
\left(\boldsymbol{F}_{hq}\right)_{k,i}\left(f\right) & :=\sum_{l=-\infty}^{\infty}\mathcal{R}_{hq}\left(t_{k},t_{i+lf_{s}T_{q}}\right)\exp\left(j2\pi lf\right)\\
 & =\sum_{l=-\infty}^{\infty}\left(\mathcal{R}_{h}*q_{k}*q_{i}^{-*}\right)\left(lT_{q}\right)\exp\left(j2\pi lf\right),
\end{align*}
which immediately leads to
\[
\left(\boldsymbol{F}_{hq}\right)_{k,i}=\sum_{l=-\infty}^{\infty}Q_{k}\left(f+lf_{q}\right)\left|H\left(f+lf_{q}\right)\right|^{2}Q_{i}^{*}\left(f+lf_{q}\right).
\]
This allows us to express $\boldsymbol{F}_{hq}$ as
\begin{equation}
\boldsymbol{F}_{hq}=\boldsymbol{F}_{q}\boldsymbol{F}_{h}\boldsymbol{F}_{h}^{*}\boldsymbol{F}_{q}^{*}.\label{eq:Fhq}
\end{equation}

Similarly, the equality $\boldsymbol{F}_{qq}=\boldsymbol{F}_{q}\boldsymbol{F}_{q}^{*}$
is then an immediate consequence of (\ref{eq:Fhq}) by setting $\boldsymbol{F}_{h}=\boldsymbol{I}$.

\bibliographystyle{IEEEtran} \bibliographystyle{IEEEtran} \bibliographystyle{IEEEtran}
\bibliography{bibfileNonuniform}

\begin{IEEEbiographynophoto}{Yuxin Chen} (S'09) received the B.S. in Microelectronics with High Distinction from Tsinghua University in 2008, the M.S. in Electrical and Computer Engineering from the University of Texas at Austin in 2010, and the M.S. in Statistics from Stanford University in 2013. He is currently a Ph.D. candidate in the Department of Electrical Engineering at Stanford University. His research interests include information theory, compressed sensing, network science and high-dimensional statistics. \end{IEEEbiographynophoto} \begin{IEEEbiographynophoto}{Yonina C. Eldar} (S'98-M'02-SM'07-F'12) received the B.Sc. degree in physics and  the B.Sc. degree in electrical engineering both from Tel-Aviv University (TAU), Tel-Aviv,  Israel, in 1995 and 1996, respectively, and the Ph.D. degree in electrical engineering and computer science from the  Massachusetts Institute of Technology (MIT), Cambridge, in 2002.

 From January 2002 to July 2002, she was a Postdoctoral Fellow at the  Digital Signal Processing Group at MIT. She is currently a Professor in the Department of Electrical  Engineering at the Technion-Israel Institute of Technology, Haifa and holds the  The Edwards Chair in Engineering. She is  also a Research Affiliate with the Research Laboratory of Electronics at MIT  and a Visiting Professor at Stanford University, Stanford, CA. Her research interests are in the broad areas of  statistical signal processing, sampling theory and compressed sensing,  optimization methods, and their applications to biology and optics.

Dr. Eldar was in the program for outstanding students at TAU from 1992 to 1996. In 1998, she held the Rosenblith Fellowship for study in electrical engineering at MIT, and in 2000, she held an IBM Research Fellowship. From  2002 to 2005, she was a Horev Fellow of the Leaders in Science and  Technology program at the Technion and an Alon Fellow. In 2004, she was  awarded the Wolf Foundation Krill Prize for Excellence in Scientific  Research, in 2005 the Andre and Bella Meyer Lectureship, in 2007 the Henry  Taub Prize for Excellence in Research, in 2008 the Hershel Rich Innovation  Award, the Award for Women with Distinguished Contributions, the Muriel \& David Jacknow Award for Excellence in Teaching, and the Technion Outstanding  Lecture Award, in 2009 the Technion's Award for Excellence in Teaching, in 2010 the Michael  Bruno Memorial Award from the Rothschild Foundation, and in 2011 the Weizmann Prize for Exact Sciences.   In 2012 she was elected to the Young Israel Academy of Science and to the Israel Committee for Higher Education, and elected an IEEE Fellow.  In 2013 she received the Technion's Award for Excellence in Teaching, the Hershel Rich Innovation Award, and the IEEE Signal Processing Technical Achievement Award. She received several best paper awards together with her research students and colleagues.  She received several best paper awards together with her research students and colleagues. She is the Editor in Chief of Foundations and Trends in Signal Processing. In the past, she was a Signal Processing Society Distinguished Lecturer, member of the IEEE Signal Processing Theory and Methods and Bio Imaging Signal Processing technical committees, and served as an associate editor for the IEEE Transactions On Signal Processing, the EURASIP Journal of Signal Processing, the SIAM Journal on Matrix Analysis and Applications, and the SIAM Journal on Imaging Sciences.
\end{IEEEbiographynophoto} 

\begin{IEEEbiographynophoto}{Andrea J. Goldsmith} (S'90-M'93-SM'99-F'05)
is the Stephen Harris professor in the School of Engineering and a professor of Electrical Engineering at Stanford University. She was previously on the faculty of Electrical Engineering at Caltech. Her research interests are in information theory and communication theory, and their application to wireless communications and related fields. She co-founded and serves as Chief Scientist of Accelera, Inc., and previously co-founded and served as CTO of Quantenna Communications, Inc. She has also held industry positions at Maxim Technologies, Memorylink Corporation, and AT\&T Bell Laboratories. Dr. Goldsmith is a Fellow of the IEEE and of Stanford, and she has received several awards for her work, including the IEEE Communications Society and Information Theory Society joint paper award, the IEEE Communications Society Best Tutorial Paper Award, the National Academy of Engineering Gilbreth Lecture Award, the IEEE ComSoc Communications Theory Technical Achievement Award, the IEEE ComSoc Wireless Communications Technical Achievement Award, the Alfred P. Sloan Fellowship, and the Silicon Valley/San Jose Business Journal's Women of Influence Award. She is author of the book ``Wireless Communications'' and co-author of the books ``MIMO Wireless Communications'' and ``Principles of Cognitive Radio,'' all published by Cambridge University Press. She received the B.S., M.S. and Ph.D. degrees in Electrical Engineering from U.C. Berkeley.  

Dr. Goldsmith has served on the Steering Committee for the IEEE Transactions on Wireless Communications and as editor for the IEEE Transactions on Information Theory, the Journal on Foundations and Trends in Communications and Information Theory and in Networks, the IEEE Transactions on Communications, and the IEEE Wireless Communications Magazine. She participates actively in committees and conference organization for the IEEE Information Theory and Communications Societies and has served on the Board of Governors for both societies. She has also been a Distinguished Lecturer for both societies, served as President of the IEEE Information Theory Society in 2009, founded and chaired the student committee of the IEEE Information Theory society, and chaired the Emerging Technology Committee of the IEEE Communications Society. At Stanford she received the inaugural University Postdoc Mentoring Award, served as Chair of Stanfords Faculty Senate in 2009 and currently serves on its Faculty Senate and on its Budget Group.
\end{IEEEbiographynophoto}

\end{document}